\def\B{\mathscr B}
\def\C{\mathbb C}
\def\d{\mathrm{d}}
\def\F{\mathscr F}
\def\G{\mathcal G}
\def\H{\mathcal H}
\def\K{\mathscr K}
\def\L{\mathcal L}
\def\R{\mathbb R}
\def\S{\mathbb S}
\def\SS{\mathscr S}
\def\Hrond{\mathscr H}
\def\HS{\mathfrak h}
\def\Pv{\mathrm{Pv}}
\def\EE{\mathscr E}
\def\Z{\mathbb Z}
\def\O{\mathcal{O}}
\def\dom{\mathcal D}
\def\lone{\mathsf{L}^{\:\!\!1}}
\def\ltwo{\mathsf{L}^{\:\!\!2}}
\def\e{\mathop{\mathrm{e}}\nolimits}
\def\supp{\mathop{\mathrm{supp}}\nolimits}
\def\slim{\mathop{\hbox{\rm s-}\lim}\nolimits}
\def\ulim{\mathop{\hbox{\rm u-}\lim}\nolimits}
\def\Ran{\mathop{\mathsf{Ran}}\nolimits}
\def\Span{\mathop{\mathrm{span}}\nolimits}
\def\index{\mathop{\mathrm{Index}}\nolimits}
\def\dim{\mathop{\mathrm{dim}}\nolimits}
\def\Tr{\mathop{\mathrm{Tr}}\nolimits}
\def\wind{\mathop{\mathrm{Wind}}\nolimits}
\def\vol{\mathop{\mathrm{Vol}}\nolimits}
\def\ess{{\mathrm{ess}}}
\def\disc{{\mathrm{disc}}}
\def\dim{{\mathrm{dim}}}
\def\cl{{\mathrm{cl}}}
\def\ads{{\mathrm{ads}}}
\newtheorem{Theorem}{Theorem}[section]
\newtheorem{Remark}[Theorem]{Remark}
\newtheorem{Lemma}[Theorem]{Lemma}
\newtheorem{Corollary}[Theorem]{Corollary}
\newtheorem{Proposition}[Theorem]{Proposition}
\begin{document}


\title{Levinson’s theorem for dissipative systems, \\
or how to count the asymptotically disappearing states}

\author{A. Alexander${}^1$, J. Faupin${}^2$, S. Richard${}^3$}

\date{\small}
\maketitle
\vspace{-1cm}

\begin{quote}
\begin{itemize}
\item[1] School of Mathematics and Applied Statistics, University of Wollongong,
Northfields Ave, \\ Wollongong, NSW 2522, Australia
\item[2] Institut Elie Cartan de Lorraine, Universit\'e de Lorraine, 57045 Metz Cedex 1,
France
\item[3] Institute for Liberal Arts and Sciences \& Graduate School of Mathematics, Nagoya University, Furo-cho, Chikusa-ku, Nagoya, 464-8601, Japan
\item[] E-mail:  angusa@uow.edu.au,  jeremy.faupin@univ-lorraine.fr, richard@math.nagoya-u.ac.jp
\end{itemize}
\end{quote}


\begin{abstract}
We consider dissipative Schr\"odinger operators of the form $H=-\Delta+V(x)$ on $\ltwo(\R^3)$, with $V(x)$ a complex, bounded and decaying potential with a non-positive imaginary part. We prove a topological version of Levinson's theorem corresponding to an index theorem for the discrete, complex spectrum of $H$.
\end{abstract}

\textbf{2010 Mathematics Subject Classification:} 81U05, 35P25, 35J10.

\smallskip

\textbf{Keywords:} Dissipative operators, scattering theory, Levinson's theorem, index theorem

\section{Introduction}

Every quantum system is dissipative, to some extent, in the sense that part of its energy may be irreversibly transferred to the environment. 
There are various ways to mathematically represent a dissipative quantum system. 
In the theory of open quantum systems, see for example \cite{Da76,Ex85}, the system of interest is coupled to a reservoir  
representing the environment. The latter can be modeled, for instance, by a bosonic field. 
A celebrated effective description of the evolution of dissipative quantum systems, 
obtained by first tracing out the degrees of freedom of the environment, is given by Lindblad's master equation \cite{Li76}. 
In this setting the mixed states of the system are associated to density matrices -- positive trace class operators of trace $1$ -- 
and the generator of the dynamics, the Lindbladian, acts on the Banach space of trace class operators. 
Another simple, phenomenological way to describe a particular dissipative quantum system, given by a neutron interacting with a fixed nucleus, 
was introduced in \cite{FePoWe54_01}. It is now called the nuclear optical model. Here the dynamics of the neutron is generated 
by a Schr\"odinger operator
\begin{equation}\label{eq:Hintro}
H=H_0+V
\end{equation}
on $\ltwo(\R^3)$, where $H_0=-\Delta$ corresponds to the free, kinetic energy of the neutron (in suitable units) 
and $V$ is a complex potential with non-positive imaginary part corresponding to the effective interaction 
between the neutron and the nucleus. 
A feature of the model is that it allows for a description of the phenomenon of capture of the neutron by the nucleus, 
leading to the formation of a compound nucleus. 
The form of the potential $V$ is deduced from the scattering data obtained from experiments. 
We refer to \cite{Ho71,Fe92} for extensions of the model to more complex situations. 
In principle, one can similarly represent the dynamics of a dissipative quantum system, 
which can be captured by another system, by a semi-group of contractions generated 
by a dissipative operator in the Hilbert space corresponding to the pure states of the system.

The spectral and scattering theory of dissipative operators in Hilbert spaces, 
in relation with the nuclear optical model, have been studied by many authors. 
We refer, among others, to the works of Martin \cite{Ma75}, Davies \cite{Da79,Da80_01} and Neidhardt \cite{Ne85} 
for general results in an abstract setting, Mochizuki \cite{Mo68}, Simon \cite{Si79}, Wang \cite{Wa11,Wa12}, Wang and Zhu \cite{WaZh14} 
for dissipative Sch\"odinger operators of the form \eqref{eq:Hintro}, and 
\cite{Fa21,FaFr18,FN} for more recent developments. 
Let us also mention \cite{FaFaFrSc17,FaFr18} for applications to the scattering theory of the Lindblad master equation, 
and \cite{FaFr23,Fr24} for extensions to non-dissipative operators still related to complex Schr\"odinger operators 
and to the nuclear optical model.

In dissipative quantum scattering theory, the wave operators are defined by the time-dependent expressions
\begin{equation}\label{eq:waveintro}
W_-:=\slim_{t\to \infty}\e^{-itH} \e^{itH_0}\quad  \hbox{ and }\quad 
W_+:=\slim_{t\to \infty} \e^{itH^*}\e^{-itH_0},
\end{equation}
where $H^*=H_0+\overline{V}$ stands for the adjoint of $H$, whenever these strong limits exist. 
In this case, the scattering operator is defined by the relation
\begin{equation}\label{eq:scatt_operator_intro}
S\equiv S(H,H_0):= W_+^* W_-.
\end{equation}
It will be recalled in the next section that these three operators are contractions. Since $S$ commutes with $H_0$, 
it can be diagonalized in the sense that, for a.e.~$\lambda>0$ and $\omega\in\S^2$,
$$
[\F_0 S \F_0^*\varphi](\lambda,\omega) =\big[S(\lambda)\varphi(\lambda)\big](\omega),
$$
where $\F_0:\H\to \ltwo(\R_+, \d \lambda; \HS)$ is a unitary map satisfying 
$$
\big[[\F_0 H_0 \F_0^*\varphi](\lambda)\big]( \omega) = \lambda [\varphi(\lambda)](\omega),
$$
for suitable $\varphi \in \ltwo(\R_+, \d \lambda; \HS)$, with $\HS:=\ltwo(\S^2)$. 
We refer again to the next section for more details. 
The scattering matrix $S(\lambda)$, as an operator acting on $\HS$, is also a contraction. 
In particular, contrary to the self-adjoint setting, $S(\lambda)$ may not be invertible. 
An energy $\lambda\in[0,\infty)$ such that $S(\lambda)$ is not invertible is called a spectral singularity. 
In our context, it corresponds to a real resonance for the Schr\"odinger operator $H$. 
In the sequel we shall assume that all $\lambda>0$ are regular points for $H$, 
i.e. that $S(\lambda)$ is always invertible.

As in the works \cite{FaFr18,FN}, we introduce two subspaces specific to dissipative systems, namely
\begin{align}\label{eq:def_Hads}
\H_\ads(H) & :=\big\{f\in \H\mid \lim_{t\to \infty}\|\e^{-itH} f\|=0\big\}, 
\quad \H_\ads(H^*) :=\big\{f\in \H\mid \lim_{t\to \infty}\|\e^{itH^*} f\|=0\big\}, 
\end{align}
and call them the spaces of \emph{asymptotically disappearing states} for $H$ and $H^*$, respectively. 
For the nuclear optical model, these spaces play an important role as the normalized vectors in $\H_\ads(H)$ 
correspond to the initial states of the neutron which lead to an absorption by the nucleus, 
asymptotically as $t\to\infty$. 
One can verify that $\H_\ads(H)$ contains the discrete spectral subspace $\H_\disc(H)$, defined as the closure of the vector space spanned by the generalized 
eigenstates of $H$ associated to a complex eigenvalue with negative imaginary part.
However, in general, $\H_\ads(H)$ may contain further states (unless $V$ is compactly supported, for instance; see \cite{FaFr18,FN}). 
We refer again to the next section for a more detailed discussion on the spectrum and spectral subspaces for $H$ and $H^*$.

Back to the self-adjoint setting, it is known that the number of bound states of the perturbed system
can be evaluated on scattering data. This relation bears the name of \emph{Levinson's theorem}
and was discovered by Levinson in \cite{Lev} in the context of
a Schr\"odinger operator with a spherically symmetric potential.
Similar investigations have then been generalized in numerous papers, and
we only mention the two review papers \cite{Ma,Ri} which contain several
earlier references. On a very specific example, the self-adjoint setting
was extended in \cite{NPR} where it is shown that complex eigenvalues can also
be counted thanks to Levinson's theorem. For more general dissipative systems, 
it was not known if an analog of Levinson's theorem holds, and if it does
what exactly is counted~?

As shown in the following sections, not only Levinson's theorem holds for dissipative systems in $\R^3$,
but a precise formula can be exhibited, namely
\begin{equation*}
\frac{1}{2\pi i} \int_0^\infty \d\lambda \;\! \left( \Tr\big(S(\lambda)^{-1}S'(\lambda)\big)+ c \lambda^{-\frac12} \right)   
= - \dim\big(\H_\ads(H)\big), 
\end{equation*}
with $c=\frac{i}{4 \pi} \int_{\R^3} \d x\;\! V(x)$, and the integral on the l.h.s.~understood as an improper Riemann integral. 
The precise statement is given in Theorem \ref{thm:analytic-formula}.
It means that the dimension of the 
vector space generated by the asymptotically disappearing states can be computed based on the scattering
matrix only. Note that, in this specific setting, no correction due to threshold effects appears (also called
half-bound states). On the other hand, the factor $c\lambda^{-\frac12}$ is quite common for scattering
in $\R^3$, even if $V$ is real-valued.

An interesting consequence of the previous equality, which appears to be new for non-compactly supported potentials, 
is that under our assumptions we have $\H_\ads(H)=\H_\disc(H)$, 
namely the asymptotically disappearing states coincide with the linear combinations of generalized eigenstates of $H$. 
This follows from the fact $\H_\ads(H)$ is finite-dimensional, by the previous equality. 

The proof of the above equality is provided in Section \ref{sec_topol}. It relies both on a $C^*$-algebraic
framework and on some deep analysis for the computation of a generalized winding number.
The key ingredient is an explicit formula for the wave operator $W_-$ which is derived in Section \ref{sec_W}.
Finally, general information on scattering theory in the dissipative setting and for Schr\"odinger operators
is provided in Section \ref{sec_frame}. The reader familiar with some of these topics can easily skip the corresponding section.
However, we provide all details, since dissipative systems have been less studied than their self-adjoint counterpart.

\section{Framework and scattering theory}\label{sec_frame}
\setcounter{equation}{0}

In the Hilbert space $\H:=\ltwo(\R^3)$ we consider the dissipative operator $H$ defined by
\begin{equation*}
H:= H_0 + V
\end{equation*}
where $H_0$ corresponds to the Laplace operator $-\Delta$ and where $V$ is the multiplication operator by a measurable and bounded function  $V:\R^3\to \C$ satisfying $\Im V \leq 0$. Here $\Im V\leq 0$ means that the imaginary part of $V(x)$ is non-positive for almost every $x\in \R^3$.
Additional decay conditions on $V$ will be imposed on due time.
Clearly, since $V$ is bounded, the domain of $H$ corresponds to the domain of $H_0$, namely the 
second Sobolev space on $\R^3$. 

We recall that an operator $A$ is called dissipative if $\Im (\langle f , A f \rangle)\le 0$ for all $f\in\dom(A)$, and maximal dissipative if it is dissipative and has no proper dissipative extension. We denote by $\C_-:=\{z\in\C\, | \, \Im z\le 0\}$ the complex lower-half plane.

We begin with the following two easy lemmas whose proofs are recalled for completeness. 
 
\begin{Lemma}\label{lm:first}
Suppose that $V:\R^3\to \C_-$ is a bounded, measurable function. The following properties hold.
\begin{enumerate}[label=(\roman*)]
\item $H$ is a maximal dissipative operator on $\H$ with domain
$
\dom( H ) = \dom( H_0 ) .
$
\item The operator $-i H$ generates a strongly continuous group $\{ \e^{-itH} \}_{ t \in \R}$ which satisfies
\begin{equation*}
\big \|\e^{-itH}\big\| \le 1 \, \hbox{ if } \,  t \ge 0 , \qquad \big\|\e^{-itH}\big\| \le \e^{\|V\|_{L^\infty}|t|} \, \hbox{ if } \, t \le 0 .
\end{equation*}
In particular, $-i H$ generates the strongly continuous semigroup of contractions $\{ \e^{-itH} \}_{ t \ge 0 }$.
\item The adjoint of $H$ is
\begin{equation*}
H^* = H_0 + \overline{V},
\end{equation*}
with domain $\dom( H^* ) = \dom( H_0 )$, where $\overline{V}$ is the complex conjugate of $V$. Moreover, $i H^*$ generates of a strongly continuous group $\{ \e^{itH^*} \}_{t \in \R}$ such that $\{ \e^{itH^*} \}_{t \ge 0}$ is a semigroup of contractions.
\end{enumerate}
\end{Lemma}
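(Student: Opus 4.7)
The plan is to treat the three items in sequence, using the self-adjointness of $H_0$ and the boundedness of $V$, and invoking semigroup perturbation theory for the generation statements.

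For item (i), since $V$ is bounded on $\H$, it is $H_0$-bounded with relative bound zero, so $H$ is closed on $\dom(H_0)=H^2(\R^3)$. Dissipativity then follows from a direct computation: for $f\in\dom(H_0)$, $\Im\langle f, H_0 f\rangle=0$ since $H_0$ is self-adjoint, and $\Im\langle f, Vf\rangle=\int_{\R^3}|f(x)|^2\Im V(x)\,\d x\leq 0$ by the assumption $\Im V\leq 0$. Maximal dissipativity is most efficiently obtained \emph{a posteriori} via the Lumer--Phillips theorem once item (ii) is established, because $-iH$ will then be the generator of a $C_0$-semigroup of contractions.

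For item (ii), I would apply the bounded perturbation theorem for $C_0$-groups: $-iH_0$ generates the unitary group $\{\e^{-itH_0}\}_{t\in\R}$, and since $-iV$ is bounded, $-iH=-i(H_0+V)$ generates a $C_0$-group $\{\e^{-itH}\}_{t\in\R}$ satisfying $\|\e^{-itH}\|\leq \e^{\|V\|_{L^\infty}|t|}$ for all $t\in\R$, which already gives the bound announced in the statement for $t\leq 0$. To upgrade this to the contraction bound when $t\geq 0$, take $f\in\dom(H_0)$ and set $u(t)=\e^{-itH}f$, so that $u'(t)=-iHu(t)$; dissipativity of $H$ then yields
$$\frac{\d}{\d t}\|u(t)\|^2 = 2\,\Im\langle u(t), Hu(t)\rangle \leq 0,$$
hence $\|\e^{-itH}f\|\leq\|f\|$ for $t\geq 0$, and density of $\dom(H_0)$ in $\H$ extends the inequality to all of $\H$.

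For item (iii), the boundedness of $V$ immediately gives $H^*=(H_0+V)^*=H_0+\overline V$ on the common domain $\dom(H_0)$. On a Hilbert space, the adjoint family $\{(\e^{-itH})^*\}_{t\in\R}$ of the $C_0$-group above is itself a $C_0$-group whose generator is $(-iH)^*=iH^*$; explicitly $(\e^{-itH})^*=\e^{itH^*}$ for every $t\in\R$. The contraction property for $t\geq 0$ then follows from $\|\e^{itH^*}\|=\|\e^{-itH}\|\leq 1$. None of these steps presents a serious obstacle; the main points requiring care are applying the bounded perturbation theorem in the $C_0$-group form with the correct exponential bound, and correctly identifying the generator of the adjoint group as $iH^*$ rather than $-iH^*$.
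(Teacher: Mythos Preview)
Your proof is correct and uses the same basic ingredients as the paper (boundedness of $V$ for closedness and the bounded-perturbation theorem for the $C_0$-group), but you organise the logic in the opposite order. The paper first proves maximal dissipativity of $H$ directly in (i), by writing $H-i\lambda=(H_0-i\lambda)\big(1+(H_0-i\lambda)^{-1}V\big)$ and inverting the second factor by Neumann series for $\lambda>\|V\|_{L^\infty}$; then in (ii) it cites the abstract fact that a maximal dissipative operator generates a contraction semigroup. You instead obtain contractivity in (ii) by the elementary differential inequality $\tfrac{\d}{\d t}\|e^{-itH}f\|^2=2\,\Im\langle u,Hu\rangle\le0$, using only dissipativity, and then read off maximal dissipativity a posteriori from Lumer--Phillips. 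Both routes are short; yours avoids the resolvent computation entirely and makes the contraction property self-contained, while the paper's route gives an explicit resolvent bound that is sometimes useful elsewhere. For (iii) the paper simply repeats the argument of (ii) with $\overline V$ in place of $V$, whereas you take adjoints of the group directly; this is fine on a Hilbert space, where the adjoint of a $C_0$-group is again $C_0$ with generator $(-iH)^*=iH^*$.
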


\begin{proof}
$(i)$ Since $V$ is bounded, the operator $H$ is closed with domain $\dom( H ) = \dom( H_0 )$. Moreover, $H$ is dissipative since, for all $f \in \dom( H )$,
\begin{equation*}
\Im( \langle f , H f \rangle ) = \Im ( \langle f , V f \rangle ) \le 0,
\end{equation*}
by assumption. To verify that $H$ is maximal dissipative, it then suffices to show that $\Ran( H - i \lambda ) = \H$ for some $\lambda > 0$, see \cite[Corol.~p.~201]{Ph59_01}. This easily follows from the identity
\begin{equation*}
H - i \lambda = H_0 - i \lambda + V =  (H_0-i\lambda) \big(1+( H_0 - i \lambda )^{-1} V \big ).
\end{equation*}
Indeed, since $H_0$ is self-adjoint, $(H_0-i\lambda)$ is invertible, and we have the estimate $\|( H_0 - i \lambda)^{-1} V\|\le\lambda^{-1}\|V\|_{L^\infty}$ which implies that $1+( H_0 - i \lambda )^{-1} V$ is also invertible for $\lambda>\|V\|_{L^\infty}$.

$(ii)$ Since $H_0$ is self-adjoint, $-iH_0$ generates the strongly continuous unitary group $\{ \e^{-itH_0} \}_{ t \in \R}$. Hence, since $V$ is bounded, a perturbation argument (see, e.g., \cite[Thm.~11.4.1]{Da07_01}) shows that $-iH$ generates a strongly continuous group $\{ \e^{-itH} \}_{ t \in \R}$ satisfying 
$$
\big\|\e^{-itH} \big\| \le \e^{\|V\|_{L^\infty}|t|},
$$
 for all $t \in \R$. The fact that $\e^{-itH}$ is a contraction for $t \ge 0$ is a consequence of the fact that $H$ is maximal dissipative (see e.g. \cite[Thm.~10.4.2]{Da07_01}).

$(iii)$ It easily follows from the definition of $H^*$ and direct computations that $H^* = H_0 + \overline{V}$ with domain $\dom( H^* ) = \dom( H_0 )$. The same argument as in $(ii)$ then yields $(iii)$.
\end{proof}

The next lemma concerns the spectrum of $H$. Since $H$ is dissipative, one easily verifies that its spectrum, $\sigma(H)$, is contained in the lower half-plane $\{ z\in\C \mid \Im z\le0\}$. We now assume that $V$ vanishes at $\infty$ and that $\Im V<0$ on a non-trivial open set. The essential spectrum of $H$ is defined by
$$
\sigma_\ess(H):=\{\lambda\in\C \mid H-\lambda\text{ is not Fredholm}\},
$$
while the discrete spectrum is
$$
\sigma_\disc(H):=\{\lambda\in\C \mid \lambda \text{ is an isolated eigenvalue of } H \text{ and } \dim\,\Ran(\Pi_\lambda(H)) < \infty \} ,
$$
where $\Pi_\lambda(H)=(2i\pi)^{-1}\int_\gamma(z-H)^{-1}\d z$ stands for the usual Riesz projection associated to $\lambda$, with $\gamma$ a circle oriented counterclockwise and centered at $\lambda$, with sufficiently small radius.

\begin{Lemma}
Suppose that $V:\R^3\to \C_-$ is a bounded, measurable function such that $V(x)\to0$ as $|x|\to\infty$ and $\Im V<0$ on a non-trivial open set. The following properties hold:
\begin{enumerate}[label=(\roman*)]
\item $\sigma_\ess(H)=\sigma(H_0)=[0,\infty)$.
\item $\sigma_\disc(H)=\sigma(H)\setminus\sigma_\ess(H)$. Moreover $\sigma_\disc(H)$ consists of isolated eigenvalues with non-positive imaginary parts, of finite algebraic multiplicities, and that may only accumulate at $\sigma_\ess(H)$.
\item $H$ does not have real eigenvalues.
\end{enumerate}
\end{Lemma}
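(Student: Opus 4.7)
My plan is to handle (i) by a relative-compactness argument, (ii) by the analytic Fredholm theorem applied on the connected set $\C \setminus [0,\infty)$, and (iii) by combining the dissipativity identity with a unique continuation property. For (i), the first step is to verify that $V(H_0+i)^{-1}$ is a compact operator on $\H$: approximating $V$ in $\linf$ by bounded, compactly supported potentials $V_n$, each $V_n(H_0+i)^{-1}$ is compact (for instance via a Rellich argument on a ball containing $\supp V_n$, after the local Sobolev embedding supplied by $(H_0+i)^{-1}$), and operator-norm limits of compact operators are compact. With $V$ thus $H_0$-compact, stability of the Fredholm essential spectrum under relatively compact perturbations -- which holds without any self-adjointness assumption -- yields $\sigma_\ess(H) = \sigma_\ess(H_0) = [0,\infty)$.

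For (ii), the set $\C \setminus [0,\infty)$ is connected, and Lemma \ref{lm:first} shows that the open upper half-plane lies in the resolvent set of $H$, providing points at which the Fredholm index of $H - \lambda$ vanishes. Combined with (i), $H - \lambda$ is Fredholm of index zero for every $\lambda \in \C \setminus [0,\infty)$. The analytic Fredholm theorem then gives that $\lambda \mapsto (H - \lambda)^{-1}$ extends to a meromorphic operator-valued function on $\C \setminus [0,\infty)$, with poles precisely at the points where $\Ker(H - \lambda) \neq \{0\}$. Each such pole is therefore an isolated eigenvalue whose Riesz projection $\Pi_\lambda(H)$ has finite rank equal to the algebraic multiplicity, and the poles can only accumulate at the boundary $[0,\infty)$. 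Together with $\sigma(H) \subset \C_-$ coming from maximal dissipativity, this delivers every assertion in (ii).

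For (iii), I would argue by contradiction. Suppose $Hf = \lambda f$ with $\lambda \in \R$ and $f \neq 0$. Then $\lambda \|f\|^2 = \langle f, H f \rangle \in \R$, so taking imaginary parts yields
$$0 \;=\; \Im \langle f, V f \rangle \;=\; \int_{\R^3} \Im V(x)\, |f(x)|^2\, \d x.$$
Since $\Im V \le 0$ almost everywhere and $\Im V < 0$ on a non-trivial open set $\Omega$, this forces $f = 0$ almost everywhere on $\Omega$. But $f$ solves $(-\Delta + V - \lambda) f = 0$ with the $\linf$ potential $V - \lambda$, and the strong unique continuation property for Schr\"odinger operators with bounded (possibly complex) coefficients then forces $f \equiv 0$, a contradiction. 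The single non-routine ingredient is precisely this unique continuation result, available in the form of Jerison--Kenig or Wolff for complex-valued $\linf$ potentials; every other step -- the compactness of $V(H_0+i)^{-1}$, the constancy of the Fredholm index on connected components, and the analytic Fredholm theorem -- is standard once formulated in the non-self-adjoint setting.
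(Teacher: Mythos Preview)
Your proposal is correct and follows essentially the same strategy as the paper: parts (i) and (iii) are argued identically (relative compactness of $V$, and dissipativity plus unique continuation), while for (ii) you invoke the analytic Fredholm theorem directly on the connected set $\C\setminus[0,\infty)$, whereas the paper instead cites the abstract results of Reed--Simon and Edmunds--Evans on the coincidence of the various essential spectra for relatively compact perturbations of self-adjoint operators. Both routes encode the same mechanism, so there is no substantive difference.
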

 
 \begin{proof}
Since $V$ is bounded and $V(x)\to0$ as $|x|\to\infty$, $V$ is a relatively compact perturbation of the self-adjoint operator $H_0$. This implies that $\sigma_\ess(H)=\sigma_\ess(H_0)$ and hence $(i)$ since $H_0=-\Delta$. 

For $(ii)$, by using \cite[Cor. 2 of Thm. XIII.14]{RS4} and \cite[Thms IX.1.6 and
IX.2.1]{EdEv87}, it follows that the different definitions of essential spectra for non self-adjoint operators considered in \cite[Chap.~IX]{EdEv87} all coincide. In particular we have $\sigma_\disc(H)=\sigma(H)\setminus\sigma_\ess(H)$ and $\sigma_\disc(H)$ is at most countable and can only accumulate at points of $\sigma_\ess(H)$.
It remains to show that any eigenvalue in $\sigma_\disc(H)$ has a finite algebraic multiplicity. Let $\lambda\in\sigma_\disc(H)$. Since the range of the Riesz projection $\Pi_\lambda(H)$ is finite, \cite[Thm. XII.5 (d)]{RS4} yields that $\Ran(\Pi_\lambda(H))$ is spanned by the generalized eigenvectors of $H$
associated to $\lambda$, i.e., by the vectors $\varphi\in\dom(H^k)$ such that $(H-\lambda)^k\varphi = 0$ for some integer $k$, which in turn implies that $\dim\,\Ran(\Pi_\lambda(H))$ coincides with the algebraic multiplicity of $\lambda$.

For $(iii)$, suppose that $\lambda\in\R$ is an eigenvalue of $H$ and let $\varphi\in\dom(H)$, $\varphi\neq0$, be such that $(H-\lambda)\varphi=0$. Taking the scalar product with $\varphi$ and then the imaginary part, we obtain
$$
\int_{\R^3}\d x\, (\Im V)(x)|\varphi(x)|^2=0,
$$
and hence, since $\Im V\le0$, $\varphi=0$ on $\supp(\Im V)$. Since $\supp(\Im V)$ contains a non-trivial open set, by the unique continuation principle (see e.g. \cite[Thm.~XIII.63]{RS4}), this yields $\varphi=0$, which is a contradiction.
 \end{proof}
 
 Several spectral subspaces can be naturally introduced for dissipative operators in Hilbert spaces. We refer to \cite{FaFr23} for a thorough discussion in the more general context of non-self-adjoint operators. Here we only mention the discrete spectral subspace
 \begin{equation*}
 \H_\disc(H):=\Span \big\{ \varphi \in \Ran(\Pi_\lambda(H))\mid \lambda\in\sigma_\disc(H) \big \}^\cl,
 \end{equation*}
where for a set $\Omega$, $\Omega^\cl$ stands for the closure of $\Omega$. Recalling that the space of asymptotically disappearing states $\H_\ads(H)$ has been introduced in \eqref{eq:def_Hads}, we have the following easy lemma which follows directly from \cite[Prop. 5.6]{FaFr23}.

\begin{Lemma}\label{lm:Hdisc-sub-Hads}
Suppose that $V:\R^3\to \C_-$ is a bounded, measurable function such that  $V(x)\to0$ as $|x|\to\infty$ and $\Im V<0$ on a non-trivial open set. Then
$$
\H_\disc(H)\subset\H_\ads(H).
$$
\end{Lemma}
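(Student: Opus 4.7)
The plan is to combine three observations: the structure of generalized eigenvectors, the strict negativity of imaginary parts of discrete eigenvalues (established in the preceding lemma), and the fact that $\H_\ads(H)$ is closed.

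First, I would show that for every $\lambda \in \sigma_\disc(H)$ the whole Riesz subspace $\Ran(\Pi_\lambda(H))$ is contained in $\H_\ads(H)$. By the previous lemma, $\Ran(\Pi_\lambda(H))$ is finite-dimensional, invariant under $H$, spanned by generalized eigenvectors, and $\lambda$ has $\Im\lambda<0$ (since $H$ has no real eigenvalues). On this subspace, $H$ acts as $\lambda\cdot\mathrm{id}+N$ with $N$ nilpotent, so for any $\varphi\in\Ran(\Pi_\lambda(H))$ one has
\begin{equation*}
\e^{-itH}\varphi = \e^{-it\lambda}\sum_{k=0}^{d-1}\frac{(-it)^k}{k!}N^k\varphi,
\end{equation*}
where $d=\dim\Ran(\Pi_\lambda(H))$. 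Since $|\e^{-it\lambda}|=\e^{t\,\Im\lambda}$ decays exponentially as $t\to\infty$ (as $\Im\lambda<0$) and dominates the polynomial factor, $\|\e^{-itH}\varphi\|\to 0$. Hence $\varphi\in\H_\ads(H)$.

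Next, by linearity, every finite linear combination of such generalized eigenvectors lies in $\H_\ads(H)$. It then remains to verify that $\H_\ads(H)$ is closed, so that the closure in the definition of $\H_\disc(H)$ stays inside $\H_\ads(H)$. This is where the contraction property from Lemma \ref{lm:first}$(ii)$ is used: if $f_n\to f$ in $\H$ with each $f_n\in\H_\ads(H)$, then for $t\geq 0$,
\begin{equation*}
\|\e^{-itH}f\|\leq \|\e^{-itH}(f-f_n)\|+\|\e^{-itH}f_n\|\leq \|f-f_n\|+\|\e^{-itH}f_n\|.
\end{equation*}
Letting $t\to\infty$ and then $n\to\infty$ yields $\|\e^{-itH}f\|\to 0$, so $f\in\H_\ads(H)$.

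The argument is essentially mechanical; there is no hard step, merely the bookkeeping between the nilpotent-plus-scalar decomposition of $H$ on $\Ran(\Pi_\lambda(H))$, the sign of $\Im\lambda$, and the closedness of $\H_\ads(H)$. The only place one must be careful is in ensuring that the semigroup bound is the contractive one for $t\geq 0$ (and not the exponentially growing estimate valid for $t\leq 0$), which is precisely what guarantees that $\H_\ads(H)$ is a closed subspace. This matches the reference \cite[Prop.~5.6]{FaFr23} cited in the statement.
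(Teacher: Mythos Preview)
Your proof is correct and self-contained; the paper itself does not give an argument but simply refers to \cite[Prop.~5.6]{FaFr23}, whose content is precisely the three steps you spell out (Jordan decomposition on each Riesz subspace, $\Im\lambda<0$ from the absence of real eigenvalues, and closedness of $\H_\ads(H)$ via the contraction bound). The only cosmetic point is that in the closedness step you should phrase the limit as $\limsup_{t\to\infty}\|\e^{-itH}f\|\le\|f-f_n\|$ before letting $n\to\infty$, but the logic is sound.
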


It is also proved in \cite[Thm. 6.1]{FaFr18} that the equality $\H_\disc(H)=\H_\ads(H)$ holds when $V$ is compactly supported 
and $0$ is neither an eigenvalue nor a resonance of $-\Delta+\Re V$. 
Our main result provides different conditions ensuring that $\H_\disc(H)=\H_\ads(H)$, using the following easy lemma.

\begin{Lemma}\label{lem:finite-dim}
Under the conditions of Lemma \ref{lm:Hdisc-sub-Hads} and if $\H_\ads(H)$ is finite dimensional, then
$$
\H_\disc(H)=\H_\ads(H).
$$
\end{Lemma}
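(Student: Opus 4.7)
My plan is to show the reverse inclusion $\H_\ads(H) \subset \H_\disc(H)$, since Lemma \ref{lm:Hdisc-sub-Hads} already gives the other direction. The starting point is that $\H_\ads(H)$ is easily seen to be a closed subspace of $\H$ invariant under the contraction semigroup $\{\e^{-itH}\}_{t \ge 0}$: closedness follows from $\|\e^{-itH}\|\le 1$ for $t\ge 0$, and invariance from the semigroup property $\e^{-itH}\e^{-isH} = \e^{-i(t+s)H}$.

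Next, I would upgrade this finite-dimensional invariant subspace to an honest reducing subspace for a bounded operator. Since $\H_\ads(H)$ is finite-dimensional and invariant under $\{\e^{-itH}\}_{t\ge 0}$, the restricted family $t\mapsto \e^{-itH}|_{\H_\ads(H)}$ is a strongly continuous semigroup on a finite-dimensional Banach space, hence analytic in $t$, generated by a bounded operator $A$ on $\H_\ads(H)$. For any $f\in \H_\ads(H)$, the limit $\lim_{t\to 0^+} t^{-1}(\e^{-itH}f-f)$ therefore exists in $\H_\ads(H)\subset\H$, which by definition means $f\in\dom(H)$ and $-iHf = Af$. Thus $\H_\ads(H)\subset\dom(H)$ and $H|_{\H_\ads(H)}$ is a bounded operator on the finite-dimensional space $\H_\ads(H)$.

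I would then analyze the eigenvalues of $H|_{\H_\ads(H)}$. Being an operator on a finite-dimensional complex space, it admits a Jordan decomposition into generalized eigenspaces corresponding to finitely many eigenvalues $\lambda_1,\dots,\lambda_N$. Each $\lambda_j$ is in particular an eigenvalue of $H$, hence satisfies $\Im\lambda_j \le 0$ by dissipativity and $\Im\lambda_j \ne 0$ by Lemma 2.3$(iii)$. Therefore $\lambda_j\notin[0,\infty)=\sigma_\ess(H)$, and Lemma 2.3$(ii)$ gives $\lambda_j\in\sigma_\disc(H)$. Any generalized eigenvector of $H|_{\H_\ads(H)}$ at $\lambda_j$ is a fortiori a generalized eigenvector of $H$ at $\lambda_j$, hence an element of $\Ran(\Pi_{\lambda_j}(H))\subset\H_\disc(H)$. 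Decomposing $\H_\ads(H)$ as the direct sum of these generalized eigenspaces yields $\H_\ads(H)\subset\H_\disc(H)$, which combined with Lemma \ref{lm:Hdisc-sub-Hads} gives the claimed equality.

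The only non-routine step is the passage from invariance under the semigroup to $\H_\ads(H)\subset\dom(H)$, which I expect to be the main obstacle; the point is to exploit finite-dimensionality to convert strong continuity into an operator-norm-analytic semigroup, forcing differentiability at $t=0$ on every vector of $\H_\ads(H)$. Once this is done, everything reduces to linear algebra plus the spectral information already established for $H$.
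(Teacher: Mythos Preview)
Your proof is correct and follows the same overall strategy as the paper: restrict $H$ to the finite-dimensional invariant subspace $\H_\ads(H)$, analyze the eigenvalues of the restriction, and conclude that every vector is a linear combination of generalized eigenvectors of $H$ associated to discrete eigenvalues.

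The one substantive difference lies in how the eigenvalues of the restriction are shown to have negative imaginary part. You observe that any eigenvalue of $H|_{\H_\ads(H)}$ is an eigenvalue of $H$, and then combine dissipativity with the absence of real eigenvalues (proved earlier via unique continuation) to force $\Im\lambda_j<0$. The paper instead invokes Liapunov's theorem directly: since $\|\e^{-it\tilde H}\varphi\|\to 0$ for every $\varphi$ in the finite-dimensional space, all eigenvalues of $\tilde H$ automatically have negative imaginary part. The paper's route is therefore slightly more self-contained at this step (it does not need the unique-continuation argument), whereas your version is more explicit about why $\H_\ads(H)\subset\dom(H)$, a point the paper passes over with the phrase ``$\H_\ads(H)$ is an invariant subspace for $H$''.
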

\begin{proof}
Clearly, $\H_\ads(H)$ is an invariant subspace for $H$. Denote by $\tilde H$ the restriction of $H$ to $\H_\ads(H)$. \
For all $\varphi\in\H_\ads(H)$, we have $\|\e^{-it \tilde H}\varphi\|=\|\e^{-itH}\varphi\|\to0$ as $t\to\infty$. 
Since $\H_\ads(H)$ is finite dimensional, it follows from Liapunov's Theorem, see e.g. \cite[Thm. 2.10]{EngelNagel}, that 
all eigenvalues of $\tilde H$ have a negative imaginary part. 
Therefore, any $\varphi$ in $\H_\ads(H)$ is a finite linear combination of generalized eigenstates of $\tilde H$, and hence of $H$, 
associated to eigenvalues with negative imaginary parts. 
In other words, $\H_\ads(H)\subset \H_\disc(H)$. Together with Lemma \ref{lm:Hdisc-sub-Hads}, this gives the desired equality.
\end{proof}

Our aim is to study the wave operator and the scattering operators for the pair $(H,H_0)$.
Recall that for dissipative operators they are defined by the time-dependent expressions \eqref{eq:waveintro} and \eqref{eq:scatt_operator_intro}.
By Lemma \ref{lm:first}, when they exist, these operators are contractions.
For subsequent investigations it will be necessary to have stationary expressions for these operators.
For that purpose and for $z\in \C\setminus \R$ let us denote by 
$R_0(z)$ the resolvent $(H_0-z)^{-1}$ of $H_0$. For  $\lambda \in \R$ and $\varepsilon >0$ we also set $R(\lambda+i\varepsilon)$ for $(H-\lambda -i\varepsilon)^{-1}$
and $R^*(\lambda-i\varepsilon)$ for $(H^*-\lambda +i\varepsilon)^{-1}$.
The proof of the next lemma is fairly standard for self-adjoint operators. It is not difficult to adapt it to the setting studied in this paper. For the convenience of the reader, we provide the details.

\begin{Lemma}\label{lem_statio}
If the wave operators $W_\pm$ exist, then for any $f, g\in \H$ the following equalities hold:
$$
\langle W_+ f,g\rangle 
=  \lim_{\varepsilon\searrow 0}\frac{\varepsilon}{\pi} \int_{\R}\d \lambda 
\langle R_0(\lambda+i \varepsilon)f, R(\lambda + i \varepsilon) g\rangle
$$
and 
$$
\langle W_- f,g\rangle
=  \lim_{\varepsilon\searrow 0}\frac{\varepsilon}{\pi} \int_{\R}\d \lambda 
\langle R_0(\lambda- i \varepsilon)f, R^*(\lambda - i \varepsilon) g\rangle.
$$
\end{Lemma}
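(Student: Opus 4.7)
The strategy is to reduce both formulas to Abel's summation theorem through a Laplace-transform representation of the resolvents followed by Parseval's identity for $\H$-valued $\ltwo$-functions on the half-line. The key point is that the one-sided contractivity of $\{\e^{-itH_0}\}_{t\ge 0}$, $\{\e^{-itH}\}_{t\ge 0}$ and $\{\e^{itH^*}\}_{t\ge 0}$ supplied by Lemma~\ref{lm:first} is exactly what is needed to write each resolvent as an absolutely norm-convergent Laplace integral, precisely as in the self-adjoint case; no use of the spectral theorem for $H$ is required.

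For the first identity, fix $f,g\in\H$ and $\varepsilon>0$. Lemma~\ref{lm:first} gives
\begin{equation*}
R_0(\lambda+i\varepsilon)f = i\int_0^\infty \e^{i\lambda t-\varepsilon t}\, \e^{-itH_0}f\,\d t,\qquad R(\lambda+i\varepsilon)g = i\int_0^\infty \e^{i\lambda t-\varepsilon t}\, \e^{-itH}g\,\d t,
\end{equation*}
for all $\lambda\in\R$. The right-hand sides are $-i$ times the Fourier transforms of the $\H$-valued functions $t\mapsto \chi_{[0,\infty)}(t)\e^{-\varepsilon t}\e^{-itH_0}f$ and $t\mapsto \chi_{[0,\infty)}(t)\e^{-\varepsilon t}\e^{-itH}g$, which lie in $\ltwo(\R;\H)$ with squared norms $\|f\|^2/(2\varepsilon)$ and $\|g\|^2/(2\varepsilon)$. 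Parseval's identity, combined with the adjoint relation $(\e^{-itH})^*=\e^{itH^*}$, therefore gives
\begin{equation*}
\frac{\varepsilon}{\pi}\int_\R \langle R_0(\lambda+i\varepsilon)f,R(\lambda+i\varepsilon)g\rangle\,\d\lambda \;=\; 2\varepsilon\int_0^\infty \e^{-2\varepsilon t}\,\langle \e^{itH^*}\e^{-itH_0}f,g\rangle\,\d t.
\end{equation*}

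The scalar function $F(t):=\langle \e^{itH^*}\e^{-itH_0}f,g\rangle$ is uniformly bounded by $\|f\|\,\|g\|$ on $[0,\infty)$ and, by the hypothesised existence of $W_+$, converges to $\langle W_+ f,g\rangle$ as $t\to\infty$. Abel's theorem---stating that $2\varepsilon\int_0^\infty \e^{-2\varepsilon t}F(t)\,\d t\to \lim_{t\to\infty}F(t)$ as $\varepsilon\searrow 0$ whenever the latter limit exists---then yields the first equality. The formula for $W_-$ is obtained by the same three-step argument from the representations
\begin{equation*}
R_0(\lambda-i\varepsilon)f = -i\int_0^\infty \e^{-i\lambda t-\varepsilon t}\,\e^{itH_0}f\,\d t,\qquad R^*(\lambda-i\varepsilon)g = -i\int_0^\infty \e^{-i\lambda t-\varepsilon t}\,\e^{itH^*}g\,\d t,
\end{equation*}
the second of which relies on Lemma~\ref{lm:first}(iii); Parseval then replaces the left-hand side of the target identity by $2\varepsilon\int_0^\infty\e^{-2\varepsilon t}\langle \e^{-itH}\e^{itH_0}f,g\rangle\,\d t$ and Abel's theorem concludes. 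The only subtle point is the vector-valued Parseval identity, but it reduces to the scalar case by testing against $g$, or follows directly from the Plancherel theorem for the Bochner space $\ltwo(\R;\H)$; all interchanges of integrals are legitimate because the Laplace integrals are absolutely norm-convergent thanks to dissipativity.
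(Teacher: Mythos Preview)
Your proof is correct and takes a more economical route than the paper. Both arguments begin with the same Laplace representations of the resolvents, but where the paper inserts a Gaussian regularizer $\e^{-\delta\lambda^2}$, expands the product of resolvents as a double time-integral, applies Fubini, computes the Gaussian integral explicitly, and then lets $\delta\searrow 0$ by dominated convergence, you bypass this entire manoeuvre by recognizing the resolvents as Fourier transforms of $\ltwo(\R;\H)$-functions and invoking Plancherel directly. This is cleaner and eliminates the auxiliary limit in $\delta$. The final passage to $\varepsilon\searrow 0$ is essentially the same in both proofs: the paper performs the substitution $s=\varepsilon t$ and applies dominated convergence, which is precisely the standard proof of the Abel theorem you cite. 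One minor slip: the squared $\ltwo$-norm of $t\mapsto\chi_{[0,\infty)}(t)\e^{-\varepsilon t}\e^{-itH}g$ is only \emph{bounded} by $\|g\|^2/(2\varepsilon)$, not equal to it, since $\e^{-itH}$ is merely a contraction; this is harmless because all you need is membership in $\ltwo(\R;\H)$.
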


\begin{proof}
We give the proof for $W_+$, the proof for $W_-$ is similar. 
Since $\{ \e^{-itH} \}_{ t \ge 0 }$ is a contraction semigroup and $\{ \e^{-itH_0} \}_{ t \in\R }$ is a unitary group, we have 
\begin{equation*}
R(\lambda+i\varepsilon)f=i\int_0^\infty \d t\, \e^{-\varepsilon t}\e^{-it(H-\lambda)}f, \qquad R_0(\lambda+i\varepsilon)f=i\int_0^\infty \d t\, \e^{-\varepsilon t}\e^{-it(H_0-\lambda)}f,
\end{equation*}
for all $\varepsilon>0$ and $f\in\H$ with $\|f\|=1$, and 
\begin{equation*}
\frac{\varepsilon}{\pi} \int_{\R}\d \lambda \|R(\lambda+i\varepsilon)f\|^2\le1, \qquad \frac{\varepsilon}{\pi} \int_{\R}\d \lambda \|R_0(\lambda+i\varepsilon)f\|^2=1.
\end{equation*}
Let $\varepsilon>0$ and $f,g\in\H$. Using Lebesgue's dominated convergence Theorem, we write
\begin{align}\label{eq:int-formula}
&\frac{\varepsilon}{\pi} \int_{\R}\d \lambda \langle R_0(\lambda+i \varepsilon)f, R(\lambda + i \varepsilon) g\rangle =\frac{\varepsilon}{\pi}\lim_{\delta\searrow 0}  \int_{\R}\d \lambda \e^{-\delta\lambda^2} \langle R_0(\lambda+i \varepsilon)f, R(\lambda + i \varepsilon) g\rangle , 
\end{align}
and then compute, for all $\delta>0$,
\begin{align*}
&\int_{\R}\d \lambda \e^{-\delta\lambda^2} \langle R_0(\lambda+i \varepsilon)f, R(\lambda + i \varepsilon) g\rangle \\
&= \int_{\R}\d \lambda \int_0^\infty \d t \int_0^\infty \d t' \e^{-\delta\lambda^2} \e^{-\varepsilon (t+t')} \e^{i\lambda(t'-t)} \langle \e^{-itH_0}f, \e^{-it'H} g\rangle \\
&= \int_{\R}\d \lambda \int_0^\infty \d t \int_{-t}^\infty \d s \,  \e^{-\delta\lambda^2} \e^{-2\varepsilon t} \e^{-\varepsilon s} \e^{i\lambda s} \langle \e^{-itH_0}f, \e^{-i(t+s)H} g\rangle ,
\end{align*}
where we used the change of variables $t'=t+s$ in the last equality. Using Fubini's Theorem and computing the integral in $\lambda$, we obtain
\begin{align*}
&\int_{\R}\d \lambda \e^{-\delta\lambda^2} \langle R_0(\lambda+i \varepsilon)f, R(\lambda + i \varepsilon) g\rangle \\
&= (2\pi)^\frac12 (2\delta)^{-\frac12} \int_0^\infty \d t \int_{-t}^\infty \d s \,  \e^{-s^2/(4\delta)} \e^{-2\varepsilon t} \e^{-\varepsilon s}  \langle \e^{-itH_0}f, \e^{-i(t+s)H} g\rangle \\
&= (2\pi)^{\frac12} 2^\frac12 \int_0^\infty \d t \int_{-(4\delta)^{-1/2}t}^\infty \d \tau \,  \e^{-\tau^2} \e^{-2\varepsilon t} \e^{-(4\delta)^{1/2}\varepsilon \tau}  \langle \e^{-itH_0}f, \e^{-itH} \e^{-i(4\delta)^{1/2}\tau H} g\rangle ,
\end{align*}
where we used the change of variables $\tau=(4\delta)^{-1/2}s$. Applying again Lebesgue's dominated convergence Theorem and then computing the integral in $\tau$, we obtain
\begin{align*}
\lim_{\delta\searrow 0}\int_{\R}\d \lambda \e^{-\delta\lambda^2} \langle R_0(\lambda+i \varepsilon)f, R(\lambda + i \varepsilon) g\rangle &= 2\pi \int_0^\infty \d t \e^{-2\varepsilon t}  \langle \e^{-itH_0}f, \e^{-itH} g\rangle.
\end{align*}
Inserting this into \eqref{eq:int-formula} yields
\begin{align*}
\frac{\varepsilon}{\pi} \int_{\R}\d \lambda \langle R_0(\lambda+i \varepsilon)f, R(\lambda + i \varepsilon) g\rangle &= 2\varepsilon \int_0^\infty \d t \e^{-2\varepsilon t}  \langle \e^{-itH_0}f, \e^{-itH} g\rangle\\
&= 2 \int_0^\infty \d s \e^{-2s}  \langle \e^{-i\varepsilon^{-1}sH_0}f, \e^{-i\varepsilon^{-1}sH} g\rangle \\
&= 2 \int_0^\infty \d s \e^{-2s}  \langle \e^{i\varepsilon^{-1}sH^*} \e^{-i\varepsilon^{-1}sH_0}f,  g\rangle ,
\end{align*}
by the change of variables $s=\varepsilon t$. A last application of Lebesgue's dominated convergence Theorem gives the result.
\end{proof}

Let us now use a convenient decomposition of the potential $V$, namely
$$
V=u v^2
$$
with $v(x):=|V(x)|^{1/2}$ for all $x\in \R^3$, and $u(x):=\frac{V(x)}{v(x)^2}$ if $v(x)\neq 0$
while $u(x)=1$ if $v(x)=0$. 
If we set 
$$
\C_-:=\{z\in \C\mid \Im(z)\leq 0\}
$$
then our condition on $V$ imposes that $u(x)\in \C_-$ for any $x\in \R^3$.

In the sequel, the following equalities will be constantly used, namely for any $z\in  \C \setminus \sigma(H)$
\begin{align}\label{eq_resolv_eq}
\nonumber R(z)
& = R_0(z)-R_0(z) v\big(u- uvR(z)vu\big)vR_0(z) \\
& = R_0(z)-R_0(z) v \big(\overline{u} +v R_0(z)v\big)^{-1}vR_0(z), 
\end{align}
where $\overline{u}$ denotes the complex conjugation of the function $u$.
These equalities can be easily deduced from the resolvent equations.
For the following statement we need to be more precise about the decay of the potential
at infinity. For that purpose, we assume that
\begin{equation}\label{eq_decay_V}
|V(x)|\leq c \langle x\rangle^{-\alpha}
\end{equation}
for some $c>0$, $\alpha>0$ and a.e.~$x\in \R^3$.

As usual, $\B(\H)$ stands for the set of bounded operators on the Hilbert space $\H$.

\begin{Lemma}\label{lm:properties-R}
 Suppose that $V:\R^3\to \C_-$ is a bounded, measurable function satisfying \eqref{eq_decay_V} with $\alpha>1$ and $\Im V<0$ on a non-trivial open set. For any $\lambda >0$ the limits
$$
v R_0(\lambda \pm i0)v := \lim_{\varepsilon \searrow 0} v R_0(\lambda \pm i\varepsilon)v\quad\hbox{ and }\quad v R(\lambda + i0)v := \lim_{\varepsilon \searrow 0} v R(\lambda + i\varepsilon)v
$$
exist in the norm topology of $\B(\H)$ and the maps
$$
(0,\infty)\ni \lambda \mapsto v R_0(\lambda\pm i0)v \in \B(\H) \quad \hbox{ and } \quad (0,\infty)\ni \lambda \mapsto v R(\lambda+i0)v \in \B(\H)
$$
are continuous. Moreover $\overline{u} +v R_0(\lambda+i0)v$ and $u- uvR(\lambda+i0)vu$ are invertible in $\B(\H)$ and the maps
$$
(0,\infty)\ni \lambda \mapsto \big(\overline{u} +v R_0(\lambda+i0)v\big)^{-1} \in \B(\H) \quad\hbox{ and }\quad (0,\infty)\ni \lambda \mapsto \big(u- uvR(\lambda+i0)vu\big)^{-1} \in \B(\H)
$$
are continuous.

If $\alpha>2$ in \eqref{eq_decay_V}, then 
$$
v R_0(+i0)v := \lim_{\varepsilon \searrow 0} v R_0(i\varepsilon)v\quad\hbox{ and }\quad v R(+i0)v := \lim_{\varepsilon \searrow 0} v R(i\varepsilon)v
$$
also exist in the norm topology of $\B(\H)$, $\overline{u} +v R_0(+i0)v$ and $u- uvR(+i0)vu$ are invertible in $\B(\H)$ and the maps
\begin{align}
&[0,\infty)\ni \lambda \mapsto v R_0(\lambda+i0)v \in \B(\H) \quad \hbox{ and } \quad [0,\infty)\ni \lambda \mapsto v R(\lambda+i0)v \in \B(\H), \label{eq:bounded1} \\
&[0,\infty)\ni \lambda \mapsto \big(\overline{u} +v R_0(\lambda+i0)v\big)^{-1}\in \B(\H)\quad\hbox{ and }\quad [0,\infty)\ni \lambda \mapsto \big(u- uvR(\lambda+i0)vu\big)^{-1} \in \B(\H) \label{eq:bounded2}
\end{align}
are continuous and bounded.
\end{Lemma}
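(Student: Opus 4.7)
The plan combines the limiting absorption principle (LAP) for $H_0=-\Delta$ on $\R^3$ with the Fredholm alternative, and invokes the strict dissipativity $\Im V<0$ on a non-trivial open set through the unique continuation principle. First I would recall Agmon's LAP: for every $s>1/2$, the maps $(0,\infty)\ni\lambda\mapsto R_0(\lambda\pm i0)\in\B(L^2_s,L^2_{-s})$ exist as norm limits of $R_0(\lambda\pm i\varepsilon)$ and are norm-continuous. Since $v(x)$ decays as $\langle x\rangle^{-\alpha/2}$ with $\alpha/2>1/2$, sandwiching by $v$ turns these into norm-continuous $\B(\H)$-valued maps, settling the first existence and continuity claim for $vR_0(\lambda\pm i0)v$. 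Compactness of each such operator would be obtained by approximating $v$ in its decay norm by a compactly supported truncation $v_R$, for which $v_R R_0(\lambda+i0)v_R$ is compact by Rellich's theorem, and passing to the limit in norm via the LAP estimate.

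To invert $\overline u+vR_0(\lambda+i0)v=\overline u(1+uvR_0(\lambda+i0)v)$, I would observe that $\overline u$ is a unitary multiplier (since $|u|\equiv 1$) and $uvR_0(\lambda+i0)v$ is compact, so by the Fredholm alternative it suffices to establish injectivity. Suppose $f\in\H\setminus\{0\}$ satisfies $(1+uvR_0(\lambda+i0)v)f=0$, and set $\psi:=R_0(\lambda+i0)vf\in L^2_{-s}$. Then $f=-uv\psi$ and $(H-\lambda)\psi=0$ in the distributional sense, so by elliptic regularity $\psi\in H^2_{\mathrm{loc}}(\R^3)$. Taking the imaginary part of the inner product of $\overline u f+vR_0(\lambda+i0)vf=0$ with $f$ in $\H$ yields
\[
-\int_{\R^3}\Im u(x)\,|f(x)|^2\,\d x\;+\;\Im\langle f,vR_0(\lambda+i0)vf\rangle\;=\;0.
\]
Both summands are non-negative: the first because $\Im u=\Im V/|V|\leq 0$, the second because $\Im R_0(\lambda+i0)$ is proportional to the positive spectral measure of $H_0$ at $\lambda$. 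Hence each vanishes, and the first identity forces $f\equiv 0$ on the open set $\Omega:=\{\Im V<0\}$. Since $f=-uv\psi$ with $|u|=1$ and $v>0$ on $\Omega$, one deduces $\psi\equiv 0$ on $\Omega$, and the unique continuation principle for the bounded-coefficient elliptic operator $-\Delta+V-\lambda$ then forces $\psi\equiv 0$ everywhere, whence $f=-uv\psi=0$, a contradiction. Continuity of the inverse would follow from continuity of operator inversion on the open set of invertible operators in $\B(\H)$.

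The remaining assertions follow from algebraic manipulation of the resolvent identities \eqref{eq_resolv_eq}. The second of these gives $vR(z)v=vR_0(z)v-vR_0(z)v(\overline u+vR_0(z)v)^{-1}vR_0(z)v$, transferring existence and norm-continuity from $vR_0(\lambda+i0)v$ to $vR(\lambda+i0)v$; a direct computation using $V=uv^2$ and $\overline u u=1$ yields $u-uvR(z)vu=(\overline u+vR_0(z)v)^{-1}$, so invertibility and continuity of $u-uvR(\lambda+i0)vu$ come for free. For $\alpha>2$, the integral kernel $v(x)v(y)/(4\pi|x-y|)$ of $vR_0(+i0)v$ is square-integrable on $\R^3\times\R^3$, so this operator is Hilbert--Schmidt; since $-\Delta$ on $\R^3$ has neither a zero eigenvalue nor a zero-energy resonance, the LAP extends through $\lambda=0$ without threshold correction, and the same injectivity argument (which relies only on unique continuation, not on $\psi\in\ltwo$) applies at $\lambda=0$. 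Boundedness on $[0,\infty)$ would then be obtained by combining continuity, compactness, and the high-energy estimate $\|vR_0(\lambda+i0)v\|_{\B(\H)}=O(\lambda^{-1/2})$ as $\lambda\to\infty$.

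The hard part will be the injectivity step in paragraph two, where the dissipativity hypothesis takes over the role that Kato's theorem on absence of positive eigenvalues plays in the self-adjoint theory; without it one could not conclude that the generalised eigenfunction $\psi$ vanishes, and hence could not apply the Fredholm alternative to obtain the invertibility on which everything else rests.
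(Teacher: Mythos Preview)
Your proof is correct and follows the same strategy as the paper's: the limiting absorption principle for $H_0$, compactness, and the Fredholm alternative, with invertibility secured by the dissipativity hypothesis and unique continuation. The paper's own argument is terser, deferring the key invertibility step to \cite[Prop.~3.7]{FaFr23} (absence of outgoing spectral singularities) rather than spelling out the imaginary-part computation you give; your explicit identity $u-uvR(z)vu=(\overline u+vR_0(z)v)^{-1}$, verified from the two resolvent equations, is a clean way to transfer the conclusions to the second family of operators and is only implicit in the paper via \eqref{eq_resolv_eq}.
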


In the next proof and in the sequel, we use the standard notation $\langle X\rangle$ for the multiplication
operator by the function $x\mapsto (1+x^2)^{1/2}$.

\begin{proof}
Since $v(x)\to0$ as $|x|\to\infty$ by \eqref{eq_decay_V}, the operators $vR_0(\lambda+i\varepsilon)v$ are compact for all $\varepsilon>0$. Moreover it follows from \eqref{eq_decay_V} that 
\begin{equation*}
v(x)\leq c \langle x\rangle^{-\frac{\alpha}{2}}.
\end{equation*}
It is then well-known that the limit 
$$
v R_0(\lambda + i0)v := \lim_{\varepsilon \searrow 0} v R_0(\lambda + i\varepsilon)v
$$
exists in $\B(\H)$. It is in addition a compact operator, as a norm-limit of compact operators. 

In order to prove that $\overline{u} +v R_0(\lambda+i0)v$ is invertible in $\B(\H)$, it suffices to show that $1 +v R_0(\lambda+i0)vu$ is invertible in $\B(\H)$, i.e. that $\lambda$ is not an outgoing spectral singularity, in the terminology of \cite{FaFr23}.  All statements of the lemma then follow from \cite[Prop.~3.7]{FaFr23}, except for the boundedness of the maps in \eqref{eq:bounded1}--\eqref{eq:bounded2}. To verify this it suffices to verify boundedness in a neighborhood of $\infty$. This easily follows from the resolvent equation and the well-known fact that
$$
\|\langle X\rangle^{-s}R_0(\lambda+i0)\langle X\rangle^{-s}\|= \O(\lambda^{-\frac12}),\quad\lambda\to\infty,
$$
for any $s>\frac12$.
\end{proof}

The following properties of the wave operators have been derived in \cite[Prop. 3.4 and 3.5]{FaFr18} in an abstract setting, using assumptions that are not necessarily satisfied in our context. It is however straightforward to adapt the proof, we only emphasize the differences here.

\begin{Lemma}\label{lem:existence_wave}
Suppose that $V:\R^3\to \C_-$ is a bounded, measurable function satisfying \eqref{eq_decay_V} with $\alpha>2$ and $\Im V<0$ on a non-trivial open set.
Then the wave operators $W_\pm$ exist, are injective contractions, and satisfy
\begin{equation*}
\Ran(W_+)^{\cl}=\H_\ads(H)^\perp, \qquad \Ran(W_-)^{\cl}=\H_\ads(H^*)^\perp.
\end{equation*}
\end{Lemma}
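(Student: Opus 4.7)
The plan is to follow the argument of \cite[Prop.~3.4 and 3.5]{FaFr18}, whose abstract assumptions are now supplied, in the concrete Schr\"odinger setting, by Lemma~\ref{lm:properties-R}; the structural parts of the proof carry over without change, so I would emphasize the verification of these ingredients and the steps where the dissipative structure intervenes.

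First, existence of $W_\pm$. Set $V=u v^2$ as before, and observe that Lemma~\ref{lm:properties-R}, together with the obvious uniform bound for $vR_0(\lambda+i\varepsilon)v$ when $\lambda\le 0$ and the $\O(\lambda^{-1/2})$ decay at $\lambda\to+\infty$, yields
$$
\sup_{\lambda\in\R,\,\varepsilon>0}\big\|vR_0(\lambda+i\varepsilon)v\big\|<\infty.
$$
This is the classical criterion for $v$ to be $H_0$-smooth. Writing $R(z)=R_0(z)-R_0(z)v(\overline{u}+vR_0(z)v)^{-1}vR_0(z)$ and inserting the uniform bound on $(\overline{u}+vR_0(\lambda+i0)v)^{-1}$ from \eqref{eq:bounded2} gives the analogous uniform bound for $vR(\lambda+i\varepsilon)v$, and similarly for $vR^*(\lambda-i\varepsilon)v$; hence $v$ is also $H$-smooth and $H^*$-smooth. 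The standard $TT^*$/Kato--Birman argument then yields the existence of the strong limits $W_\pm$. The contraction property $\|W_\pm\|\le 1$ is immediate from Lemma~\ref{lm:first}, since $\|\e^{-itH}\|\le 1$ and $\|\e^{itH^*}\|\le 1$ for $t\ge 0$, while $\e^{\pm itH_0}$ is unitary.

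Second, the intertwining relations. Directly from the definitions \eqref{eq:waveintro} one checks
$$
\e^{-isH}W_-=W_-\e^{-isH_0},\qquad \e^{isH^*}W_+=W_+\e^{isH_0}\qquad(s\ge 0),
$$
so that $HW_-\subset W_- H_0$ and $H^*W_+\subset W_+ H_0$. Using Lemma~\ref{lem_statio} and the second resolvent identity \eqref{eq_resolv_eq} one expresses $\langle W_\pm f,W_\pm f\rangle$ through the boundary values of $vR_0(\lambda\pm i0)v$ and $(\overline{u}+vR_0(\lambda\pm i0)v)^{-1}$; the Kato smoothness of $v$ with respect to both $H_0$ and $H$ (resp.\ $H^*$) makes all integrals absolutely convergent and gives $W_\pm^* W_\pm=\mathbf{1}$, so both wave operators are isometries and in particular injective.

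Third, the range characterization. For the easy inclusion $\Ran(W_+)^{\cl}\subset\H_\ads(H)^\perp$, any $\phi\in\H_\ads(H)$ satisfies
$$
\langle \phi,W_+f\rangle=\lim_{t\to\infty}\big\langle \e^{-itH}\phi,\e^{-itH_0}f\big\rangle=0
$$
by Cauchy--Schwarz since $\|\e^{-itH}\phi\|\to 0$; the same argument applied to $\phi\in\H_\ads(H^*)$ yields $\Ran(W_-)^{\cl}\subset\H_\ads(H^*)^\perp$. The reverse inclusions are the genuine content. Here I would reproduce the argument of \cite[Prop.~3.5]{FaFr18}: using the intertwining and the stationary formula, one identifies $\Ker(W_+^*)$ as the maximal $H$-invariant subspace on which the associated spectral measure of $H$ is ``purely absorbed'', which coincides with $\H_\ads(H)$; the only abstract inputs of this identification are the norm continuity and boundedness of the maps in \eqref{eq:bounded1}--\eqref{eq:bounded2} up to and including $\lambda=0$, supplied by Lemma~\ref{lm:properties-R}. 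This is also the step that genuinely requires the stronger decay $\alpha>2$, so that the boundary values at the threshold $\lambda=0$ are well defined. I expect this reverse inclusion to be the main obstacle in checking that the adaptation from \cite{FaFr18} is complete, as it is the only place where the abstract smoothness setup must be matched precisely against the continuity and invertibility statements of Lemma~\ref{lm:properties-R} down to the threshold.
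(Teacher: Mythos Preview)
Your existence argument via Kato smoothness matches the paper's approach: both establish $\int_0^\infty\|v\e^{-itH_0}\varphi\|^2\,\d t<\infty$ and, via the resolvent identity and the uniform bound on $(\overline{u}+vR_0(\lambda+i0)v)^{-1}$ from Lemma~\ref{lm:properties-R}, the analogous estimate for $\e^{-itH}$. The contraction property and the easy inclusion $\Ran(W_+)^{\cl}\subset\H_\ads(H)^\perp$ are also fine.

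The serious gap is your claim that $W_\pm^* W_\pm=\mathbf{1}$. In the dissipative setting the wave operators are \emph{not} isometries: the trivial self-adjoint argument $\|\e^{itH}\e^{-itH_0}\varphi\|=\|\varphi\|$ fails because $\e^{-itH}$ is only a contraction, and the stationary argument you allude to cannot be carried out either. The intertwining relation for $W_-$ reads $R(\lambda+i\varepsilon)W_-=W_-R_0(\lambda+i\varepsilon)$, involving $H$, while the stationary formula for $\langle W_-f,g\rangle$ in Lemma~\ref{lem_statio} involves $R^*(\lambda-i\varepsilon)=(H^*-\lambda+i\varepsilon)^{-1}$; since $W_-$ does not intertwine $H^*$ and $H_0$, the cross terms do not collapse to give $\|\varphi\|^2$. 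That $W_-$ is genuinely not an isometry is used elsewhere in the paper: the proof of Lemma~\ref{lem_S_invert} establishes the lower bound $\|W_-\varphi\|\ge\|S^{-1}\|^{-1}\|\varphi\|$ to obtain closedness of the range, a step that would be vacuous for an isometry. Injectivity must therefore be obtained by the separate argument of \cite[Prop.~3.4]{FaFr18}, which does not pass through isometry.

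For the range identity, your sketch is in the right direction but the justification you give is off. The threshold regularity in \eqref{eq:bounded1}--\eqref{eq:bounded2} is not what drives the reverse inclusion; rather, the key input is that the smoothness bounds you already established also yield the existence of $W_+^*=\slim_{t\to\infty}\e^{itH_0}\e^{-itH}$ as a strong limit. Once this is known, the identification $\Ker(W_+^*)=\H_\ads(H)$ is immediate: since $\e^{itH_0}$ is unitary, $W_+^*\psi=0$ if and only if $\|\e^{-itH}\psi\|\to0$.
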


\begin{proof}
We prove the lemma for $W_+$, the proof for $W_-$ is identical.
Writing
\begin{equation*}
\e^{itH^*}\e^{-itH_0}\varphi=\varphi+i\int_0^t\d\tau\, \e^{i\tau H^*}\overline{u}v^2\e^{-i\tau H_0}\varphi,
\end{equation*}
for any $\varphi\in \H$, one sees that it suffices to show that the integral converges as $t\to\infty$. For $0<t_1<t_2<\infty$, we have
\begin{align}
\Big\|\int_{t_1}^{t_2}\d\tau\, \e^{i\tau H^*}\overline{u}v^2\e^{-i\tau H_0}\varphi\Big\|&
\le\sup_{\psi\in\H,\|\psi\|=1} \int_{t_1}^{t_2}\d\tau\big|\big\langle v\e^{-i\tau H}\psi,\overline{u}v\e^{i\tau H_0}\varphi\big\rangle\big|\notag\\
&\le\sup_{\psi\in\H,\|\psi\|=1}\Big(\int_{t_1}^{t_2}\d\tau\big\|v\e^{-i\tau H}\psi\big\|^2\Big)^{\frac12}\Big(\int_{t_1}^{t_2}\d\tau
\big\|v\e^{i\tau H_0}\varphi\big\|^2\Big)^{\frac12},\label{eq:bound_int}
\end{align}
where we used the Cauchy-Schwarz inequality in the second inequality. Since $\alpha >1$ in \eqref{eq_decay_V}, it is well-known, 
using the explicit expression of the kernel of $R_0(\lambda+i0)$, that there exists $C_v>0$ such that
\begin{align}\label{eq:kato-bound-H0}
\int_{0}^\infty \d \tau \|v\e^{\pm i\tau H_0}\varphi\|^2=\frac{1}{\pi}\int_{\R} \d \lambda \|vR_0(\lambda\mp i0)\varphi\|^2 \le C_v\|\varphi\|^2.
\end{align}
Moreover, using the resolvent equation, we also have
\begin{align*}
\int_{0}^\infty \d \tau \|v \e^{-i\tau H}\psi\|^2&=\frac{1}{\pi}\int_{\R} \d \lambda \|vR(\lambda+i0)\psi\|^2 \\
&\le \frac{2}{\pi}\int_{\R} \d \lambda \|vR_0(\lambda+i0)\psi\|^2 + \frac{2}{\pi}\int_{\R} \d \lambda \|vR(\lambda+i0)v u vR_0(\lambda+i0) \psi\|^2 .
\end{align*}
Since $[0,\infty)\ni\lambda\mapsto vR(\lambda+i0)v\in\B(\H)$ is bounded by Lemma \ref{lm:properties-R},  
the estimate \eqref{eq:kato-bound-H0} implies that
\begin{align}\label{eq:kato-bound-H}
\int_{0}^\infty \d \tau \|v \e^{-i\tau H}\psi\|^2&\le C'_v\|\psi\|^2,
\end{align}
for some $C'_v>0$. 
From \eqref{eq:bound_int}, \eqref{eq:kato-bound-H0} and \eqref{eq:kato-bound-H}, 
we deduce the existence of $W_+$.

Since $\e^{itH^*}$ is a contraction and $\e^{-itH_0}$ is unitary, $W_+$ is clearly a contraction. 
To prove the injectivity, one can proceed as in \cite[Prop. 3.4]{FaFr18}. 
Likewise, the equality $\Ran(W_+)^{\cl}=\H_\ads(H)^\perp$ follows as in \cite[Prop. 3.5]{FaFr18}. 
\end{proof}

For the study of the scattering operator, defined by \eqref{eq:scatt_operator_intro}, let us introduce the diagonalization of $H_0$.
For this, let $\SS$ be the Schwartz space on $\R^3$, let $\R_+:=(0,\infty)$, 
and let
$$
\HS:=\ltwo(\S^2).
$$
We denote by $\F$ the usual Fourier transform
in $\R^3$ (which is unitary on $\ltwo(\R^3)$), and set for $f\in \SS$, $\lambda>0$ and $\omega\in \S^2$
$$
\big[[\F_0 f](\lambda)\big](\omega):= \big(\tfrac{\lambda}{4}\big)^{1/4}[\F f](\sqrt{\lambda}\omega).
$$
It is known that $\F_0:\H\to \ltwo(\R_+, \d \lambda; \HS)$ extends to a unitary map
satisfying 
$$
\big[[\F_0 H_0 \F_0^*\varphi](\lambda)\big]( \omega) = \lambda [\varphi(\lambda)](\omega)
$$
for suitable $\varphi \in \ltwo(\R_+, \d \lambda; \HS)$.
Let us also recall from \cite[Sec.~3]{Jen81} that the operator
$\F_0(\lambda):\SS\to\HS$ given by $\F_0(\lambda)f:=(\F_0f)(\lambda)$ extends to an
element of $\B(\H^s_t,\HS)$ for each $s\in\R$ and $t>1/2$, and the map
$\R_+\ni\lambda\mapsto\F_0(\lambda)\in\B(\H^s_t,\HS)$ is continuous.
Here we have used $\H^s_t$ for the weighted Sobolev spaces
over $\R^3$ with index $s\in\R$ associated to derivatives and index $t\in\R$
associated to decay at infinity, with the convention that
$\H^s:=\H^s_0$ and $\H_t:=\H^0_t$.

The next statement can be directly inferred from \cite[Lem.~2.1 \& Lem.~2.2]{RT13}. 
We denote by $\B(\H,\HS)$ and $\K(\H,\HS)$ the set of bounded and compact operators from $\H$ to $\HS$, respectively.

\begin{Lemma}\label{lem_Fpm}
Let  $s>3/2$.  Then, the functions
$$
(0,\infty)\ni\lambda\mapsto\lambda^{\pm1/4}\F_0(\lambda) \langle X\rangle^{-s} \in\B(\H,\HS)
$$
are continuous and bounded. In addition,  the map 
$$
\R_+\ni \lambda \mapsto \lambda^{-1/4} \F_0(\lambda) \langle X\rangle^{-s} \in \K(\H,\HS)
$$
is continuous, admits a limit as $\lambda\searrow0$ and vanishes as $\lambda\to\infty$.
\end{Lemma}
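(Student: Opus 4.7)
The plan is to rewrite everything as a Fourier-restriction operator on spheres. With $g:=\langle X\rangle^{-s}f$, the definition of $\F_0(\lambda)$ gives, for Schwartz $f$,
\begin{equation*}
\lambda^{\pm 1/4}\big[\F_0(\lambda)\langle X\rangle^{-s}f\big](\omega)=\tfrac{1}{\sqrt 2}\;\!\lambda^{\frac{1\pm 1}{4}}\;\!\widehat{g}(\sqrt{\lambda}\,\omega),\qquad \omega\in\S^2.
\end{equation*}
The assumption $s>3/2$ supplies two quantitative inputs on $g$: first, $\langle X\rangle^{-s}\in\ltwo(\R^3)$, so $f\mapsto g$ is bounded $\H\to\lone(\R^3)$ with $\|g\|_{\lone}\le C\|f\|_\H$, whence $\widehat g$ is continuous and bounded on $\R^3$; second, $\widehat g$ lies in a bounded set of $\H^s(\R^3)$ uniformly in $\|f\|_\H\le 1$, and the embedding $\H^s(\R^3)\hookrightarrow C_0(\R^3)$ delivers an equicontinuous family of continuous functions vanishing at infinity.

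For the first conclusion, continuity of $(0,\infty)\ni\lambda\mapsto\lambda^{\pm 1/4}\F_0(\lambda)\langle X\rangle^{-s}\in\B(\H,\HS)$ is immediate from the continuity of $\lambda\mapsto\F_0(\lambda)\in\B(\H_s,\HS)$ recalled just before the lemma (with $t=s>1/2$), combined with continuity of $\lambda^{\pm 1/4}$ away from $0$. For uniform boundedness I would treat the two endpoints separately. Near $\lambda=0$, the identity above and $\|\widehat g\|_\infty\le(2\pi)^{-3/2}\|g\|_{\lone}$ yield $\|\F_0(\lambda)\langle X\rangle^{-s}\|_{\B(\H,\HS)}=\O(\lambda^{1/4})$, so that multiplying by $\lambda^{\pm 1/4}$ gives $\O(1)$, with the extra factor $\lambda^{1/2}\to 0$ in the $+$ case. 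Near $\lambda=\infty$ I would invoke
\begin{equation*}
\langle X\rangle^{-s}\F_0(\lambda)^*\F_0(\lambda)\langle X\rangle^{-s}=\tfrac{1}{2\pi i}\langle X\rangle^{-s}\big(R_0(\lambda+i0)-R_0(\lambda-i0)\big)\langle X\rangle^{-s}
\end{equation*}
together with the Agmon-Kato bound $\|\langle X\rangle^{-s}R_0(\lambda\pm i0)\langle X\rangle^{-s}\|=\O(\lambda^{-1/2})$ already used in the proof of Lemma \ref{lm:properties-R}, from which $\|\F_0(\lambda)\langle X\rangle^{-s}\|_{\B(\H,\HS)}=\O(\lambda^{-1/4})$; multiplying by $\lambda^{\pm 1/4}$ again produces $\O(1)$, with extra decay $\lambda^{-1/2}\to 0$ in the $-$ case.

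For the second statement, compactness of each $\lambda^{-1/4}\F_0(\lambda)\langle X\rangle^{-s}$ follows from viewing it as the trace of $\widehat g$ on the sphere of radius $\sqrt\lambda$: $f\mapsto\widehat g$ is bounded $\H\to\H^s(\R^3)$, the trace lies in the Sobolev space $H^{s-1/2}(\S^2)$ on the compact sphere, and the embedding $H^{s-1/2}(\S^2)\hookrightarrow\HS$ is compact since $s>3/2$. Continuity in $\K(\H,\HS)$ is then the previous continuity in $\B(\H,\HS)$ combined with the fact that $\K(\H,\HS)$ is closed in $\B(\H,\HS)$. The candidate limit at $\lambda=0$ is the rank-one operator $f\mapsto \tfrac{1}{\sqrt 2}\widehat g(0)\,\mathbf 1_{\S^2}$, and operator-norm convergence follows from the uniform equicontinuity at the origin of $\{\widehat g:\|f\|_\H\le 1\}$ furnished by $\H^s(\R^3)\hookrightarrow C_0(\R^3)$; vanishing at $\infty$ is already contained in the decay bound above. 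The main obstacle is exactly this uniform operator-norm control at the two endpoints — strong convergence in $f$ is elementary, but upgrading it needs the $\lone$-bound near $0$ (which is precisely why one requires $s>3/2$ rather than the natural $s>1/2$ of the trace theorem) and the Agmon-Kato estimate near $\infty$; these are essentially the contents of \cite[Lem.~2.1 \& Lem.~2.2]{RT13} that the authors defer to.
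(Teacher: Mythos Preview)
The paper gives no proof of this lemma; it simply cites \cite[Lem.~2.1 \& Lem.~2.2]{RT13}. Your sketch is correct and is essentially the standard argument one expects to find there: the $\lone$-bound from $\langle X\rangle^{-s}\in\ltwo(\R^3)$ controls the endpoint $\lambda=0$, the identity $\F_0(\lambda)^*\F_0(\lambda)=(2\pi i)^{-1}\big(R_0(\lambda+i0)-R_0(\lambda-i0)\big)$ together with the Agmon--Kato bound handles $\lambda\to\infty$, and the trace theorem plus compact Sobolev embedding on $\S^2$ yields compactness. The only step worth making explicit is the operator-norm limit at $\lambda=0$, where your equicontinuity argument via $\H^s(\R^3)\hookrightarrow C^{0,\alpha}(\R^3)$ (with $\alpha=\min(s-3/2,1)>0$) does the job; nothing is missing.
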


With the notation introduced above, let us prove a stationary representation of the
scattering matrix, which extends the result provided in \cite[Eq.~(2.18)]{FN}.
Indeed, it is known, by usual intertwining properties, that the scattering operator $S$ defined in \eqref{eq:scatt_operator_intro} commute with $H_0$, and therefore
 satisfies for a.e.~$\lambda>0$
$$
[\F_0 S \F_0^*\varphi](\lambda,\omega) =\big[S(\lambda)\varphi(\lambda)\big](\omega),
$$
where $S(\lambda)$ is an operator acting on $\HS$.
In the sequel, we denote by $\L^1(\HS)$ the set of trace-class operator on the Hilbert space $\HS$.

\begin{Lemma}\label{lm:S(lambda)}
Suppose that $V:\R^3\to \C_-$ is a bounded, measurable function satisfying \eqref{eq_decay_V} with $\alpha>2$ and $\Im V<0$ on a non-trivial open set. Then the following equality holds for all $\lambda>0$:
\begin{equation}\label{eq_S_stationary}
S(\lambda)= 1-2\pi i \F_0(\lambda) v \big(\overline{u} +v R_0(\lambda+i0)v\big)^{-1}v \F_0(\lambda)^*.
\end{equation}
Moreover, for all $\lambda>0$, $S(\lambda)$ is a contraction such that $\|S(\lambda)\|_{\B(\HS)}=1$, $S(\lambda)-1$ is compact, and the map $(0,\infty)\ni\lambda\mapsto S(\lambda)\in\B(\HS)$ is continuous.

If the decay condition is strengthened to $\alpha >3$ in \eqref{eq_decay_V}, then $[0,\infty)\ni\lambda\mapsto S(\lambda)\in\B(\HS)$ is continuous and we have
\begin{equation*}
S(0)=1,\qquad \lim_{\lambda\to\infty}\|S(\lambda)-1\|_{\B(\HS)}=0.
\end{equation*}

If $\alpha>4$ in \eqref{eq_decay_V}, then 
$S(\lambda)-1 \in \L^1(\HS)$ for all $\lambda \in [0,\infty)$ and the map $\lambda \mapsto S(\lambda)-1$ is continuously differentiable in $\L^1(\HS)$.
\end{Lemma}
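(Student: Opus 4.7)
The plan is to establish the stationary formula \eqref{eq_S_stationary} first, and then read off each of the successively stronger claims from it using the analytic control provided by Lemmas \ref{lm:properties-R} and \ref{lem_Fpm}.

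For \eqref{eq_S_stationary}, I would start from $S = W_+^* W_-$ and use the stationary representations of $W_\pm$ from Lemma \ref{lem_statio} to express $\langle (S - 1) f, g\rangle$ as an $\varepsilon \searrow 0$ limit of a $\lambda$-integral. Substituting the second resolvent identity \eqref{eq_resolv_eq} introduces the middle factor $(\overline u + v R_0(\lambda \pm i\varepsilon) v)^{-1}$, and after reorganization the remaining part of the integrand contains a ``$\delta$-converging'' factor of the form $\frac{\varepsilon}{\pi} R_0(\lambda - i\varepsilon)(\cdot) R_0(\lambda + i\varepsilon)$ which, in the $\F_0$-diagonalization of $H_0$, localizes the integral to the energy shell and produces $\F_0(\lambda)^* \F_0(\lambda)$ in the limit. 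The limiting absorption principle of Lemma \ref{lm:properties-R} together with the boundedness in \eqref{eq:bounded2} enables the exchange of limit and integral via dominated convergence. Justifying these limit exchanges is the main technical obstacle of this first step.

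The properties claimed for $\alpha > 2$ then follow from \eqref{eq_S_stationary}. Continuity of $(0,\infty) \ni \lambda \mapsto S(\lambda) \in \B(\HS)$ is inherited from the continuity of the three factors in the formula (Lemmas \ref{lm:properties-R} and \ref{lem_Fpm}). Compactness of $S(\lambda) - 1$ follows from compactness of $\F_0(\lambda) v$, which I would obtain pointwise by approximating $v$ with compactly supported truncations $v_R$: each $\F_0(\lambda) v_R$ is Hilbert--Schmidt with an explicit integral kernel, and $\F_0(\lambda) v_R \to \F_0(\lambda) v$ in operator norm under the decay condition. Contractivity of $S(\lambda)$ is inherited from that of $S = W_+^* W_-$ via the direct-integral decomposition of $S$. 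Finally, $\|S(\lambda)\|_{\B(\HS)} = 1$ follows from $S(\lambda) = 1 + K(\lambda)$ with $K(\lambda)$ compact: the essential spectrum of $1 + K(\lambda)$ is $\{1\}$, so $\|S(\lambda)\| \geq 1$, and combined with contractivity this gives equality.

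For $\alpha > 3$, continuity of $\lambda \mapsto S(\lambda)$ extends up to $\lambda = 0$ since Lemma \ref{lm:properties-R} controls the middle factor also at the threshold. The vanishing $S(0) = 1$ and $\lim_{\lambda \to \infty}\|S(\lambda) - 1\|_{\B(\HS)} = 0$ follow from Lemma \ref{lem_Fpm}: the simultaneous boundedness of $\lambda^{\pm 1/4} \F_0(\lambda) \langle X\rangle^{-s}$ forces $\|\F_0(\lambda) \langle X\rangle^{-s}\|$ to vanish both at $\lambda \to 0$ and at $\lambda \to \infty$, since the two bounds together dominate the norm by a constant times $\min(\lambda^{1/4}, \lambda^{-1/4})$; the decay $\alpha > 3$ is exactly what allows the splitting $v = \langle X\rangle^{-s}(\langle X\rangle^s v)$ with $s \in (3/2, \alpha/2]$. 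For $\alpha > 4$, the trace-class property of $S(\lambda) - 1$ follows by exhibiting it as a product of two Hilbert--Schmidt operators $\F_0(\lambda) v$ and $v \F_0(\lambda)^*$ sandwiching the bounded middle factor; the Hilbert--Schmidt norm of $v \F_0(\lambda)^*$ reduces via the explicit kernel to a multiple of $\|v\|_{\ltwo(\R^3)}$, which is finite under the stronger decay. Continuous differentiability in $\L^1(\HS)$-norm is the main remaining obstacle; I would establish it by differentiating \eqref{eq_S_stationary} term by term, using the explicit integral kernels of $\F_0(\lambda)$ and $R_0(\lambda+i0)$ to handle $\F_0'(\lambda)$ and $\frac{\d}{\d\lambda}(v R_0(\lambda+i0) v)$, and verify that each resulting piece is continuously bounded in trace norm under $\alpha > 4$.
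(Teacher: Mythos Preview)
Your proposal is correct and follows essentially the same route that the paper's proof takes, with one notable difference in presentation: the paper's own proof is almost entirely by reference. For the stationary formula \eqref{eq_S_stationary} and the properties for $\alpha>2$, the paper simply says ``mimic the proof of \cite[Thm.~2.6]{FN}, using \eqref{eq:kato-bound-H0} and \eqref{eq:kato-bound-H} instead''; for $\alpha>3$ it says ``combining Lemma \ref{lm:properties-R} and Lemma \ref{lem_Fpm}''; and for $\alpha>4$ it refers to \cite[Prop.~8.1.9]{Yaf10}, noting that the self-adjoint argument adapts. Your sketch is effectively a fleshed-out version of what those references contain, and each step you outline (the $\delta_\varepsilon$-localization in the stationary derivation, compactness of $\F_0(\lambda)v$ via Hilbert--Schmidt approximation, the essential-spectrum argument for $\|S(\lambda)\|=1$, the $\min(\lambda^{1/4},\lambda^{-1/4})$ bound from Lemma~\ref{lem_Fpm}, and the Hilbert--Schmidt factorization for the trace-class property) is sound.

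One small caveat: for the $C^1$-in-$\L^1(\HS)$ claim, your term-by-term differentiation of \eqref{eq_S_stationary} with the na\"ive Hilbert--Schmidt splitting $\F_0(\lambda)v\cdot(\text{bounded})\cdot v\F_0(\lambda)^*$ may demand slightly more decay than $\alpha>4$ once the $x$-factor from $\F_0'(\lambda)$ is absorbed. The paper sidesteps this by citing Yafaev's proof directly, which uses a sharper organization of the estimates. Since you correctly flag this step as ``the main remaining obstacle'' and identify the right strategy, this is a matter of optimizing constants in the decay exponent rather than a gap in the argument.
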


\begin{proof}
Assuming that $\alpha >2$ in \eqref{eq_decay_V}, one can mimic the proof of \cite[Thm.~2.6]{FN}, using \eqref{eq:kato-bound-H0} and \eqref{eq:kato-bound-H} instead of Eqs.~(4.1) and (4.2) in \cite{FN}. This leads to the statement of the lemma concerning the map $(0,\infty)\ni \lambda \mapsto S(\lambda)$.

If $\alpha>3$ in \eqref{eq_decay_V}, then combining Lemma \ref{lm:properties-R} and Lemma \ref{lem_Fpm}, we deduce that $\lambda\mapsto S(\lambda)$ 
is also continuous at $\lambda=0$ and satisfies $S(0)=1$. 
Combining moreover Lemmas \ref{lm:properties-R} and \ref{lem_Fpm}, one easily verifies that $\lim_{\lambda\to\infty}\|S(\lambda)-1\|_{\B(\HS)}=0$. 

Assume now $\alpha>4$ in \eqref{eq_decay_V}.
We refer to \cite[Prop.~8.1.9]{Yaf10} for a proof, in the self-adjoint
case, that $S(\lambda)-1\in \L^1(\HS)$ and $[0,\infty)\ni\lambda\mapsto S(\lambda)-1$ is continuously differentiable. 
This can be straightforwardly adapted in our setting.
\end{proof}

An important difference between the usual unitary scattering theory and the dissipative scattering theory that we consider here is that the scattering matrix $S(\lambda)$ is only a contraction (not a unitary operator) and hence it may not be invertible on $\HS$. An energy $\lambda>0$ such that $S(\lambda)$ is not invertible is called a spectral singularity. The reader is referred to \cite{FaFr18,FN,FaFr23}, where the notion of spectral singularity plays a central role, for various characterizations of this notion in an abstract context. Note that the previous lemma shows that the threshold energy $0$ cannot be a spectral singularity if $V$ decays fast enough; see also  \cite[Thm.~1.1]{Wa11} or \cite[Prop.~3.7]{FaFr23}. 
A characterization of spectral singularity in terms of real resonance is also possible, see
for example  \cite[Thm.~3.6]{FaFr23}. 

In the next statement we provide sufficient conditions for the invertibility of the scattering
matrix.

\begin{Lemma}\label{lem_S_invert}
Suppose that $V:\R^3\to \C_-$ is a bounded, measurable function satisfying \eqref{eq_decay_V} with $\alpha>3$ and $\Im V<0$ on a non-trivial open set. 
Let $\lambda > 0$ be such that the operator $\overline{u} + vR_0(\lambda-i0)v$
is invertible in $\B(\H)$. Then the operator $S(\lambda)$ is invertible in $\B(\HS)$, 
and the map
\begin{equation*}
[0,\infty)\ni\lambda\mapsto S(\lambda)^{-1}\in\B(\HS)
\end{equation*}
is continuous. Moreover, if $\overline{u} + vR_0(\lambda-i0)v$
is invertible in $\B(\H)$ for all $\lambda>0$, then $S$ is invertible in $\B(\H)$ and $W_\pm$ have closed ranges.
\end{Lemma}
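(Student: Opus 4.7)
The strategy is to write down an explicit stationary formula for $S(\lambda)^{-1}$ analogous to the one for $S(\lambda)$ in Lemma \ref{lm:S(lambda)}, but with the $-i0$ boundary value of the free resolvent in place of $+i0$:
\begin{equation*}
S(\lambda)^{-1} \,=\, 1 + 2\pi i \F_0(\lambda) v\big(\overline{u}+ v R_0(\lambda-i0)v\big)^{-1} v \F_0(\lambda)^*.
\end{equation*}
Unlike in the self-adjoint theory, $S(\lambda)^{-1}$ cannot simply be identified with $S(\lambda)^*$ since neither $u$ nor $\overline u$ is self-adjoint as a multiplication operator; the above formula is the correct replacement. The right-hand side is a bounded operator on $\HS$: the central factor is bounded by hypothesis, and $\F_0(\lambda)v$ is bounded by Lemma \ref{lem_Fpm} applied with $s=\alpha/2>3/2$.

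The algebraic core is to verify that this candidate is a two-sided inverse of $S(\lambda)$. Setting $A_\pm:=\overline{u}+vR_0(\lambda\pm i0)v$ and $M_\pm:=2\pi i\F_0(\lambda)v A_\pm^{-1} v \F_0(\lambda)^*$, one has $S(\lambda)=1-M_+$, and the candidate inverse is $1+M_-$. The standard stationary identity $R_0(\lambda+i0)-R_0(\lambda-i0)=2\pi i\F_0(\lambda)^*\F_0(\lambda)$, which sandwiched between factors of $v$ becomes a bona fide bounded-operator identity by Lemmas \ref{lm:properties-R} and \ref{lem_Fpm}, gives $A_+-A_-=2\pi i\,v\F_0(\lambda)^*\F_0(\lambda)v$. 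Applying the resolvent identity in both of its factorisations $A_-^{-1}-A_+^{-1}=A_\pm^{-1}(A_+-A_-)A_\mp^{-1}$, one obtains the single key identity
\begin{equation*}
M_--M_+ \,=\, M_-M_+ \,=\, M_+M_-,
\end{equation*}
after which $(1-M_+)(1+M_-)=(1+M_-)(1-M_+)=1$ follows by direct expansion.

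For the continuity and global statements, Lemma \ref{lm:S(lambda)} (with $\alpha>3$) gives continuity of $\lambda\mapsto S(\lambda)$ on $[0,\infty)$, together with $S(0)=1$ and $S(\lambda)\to 1$ as $\lambda\to\infty$. Under the hypothesis that $A_-(\lambda)$ is invertible for every $\lambda>0$, the pointwise statement just proved shows that $S(\lambda)$ is invertible for all $\lambda\in[0,\infty)$; continuity of operator inversion on the open set of invertibles then yields continuity of $\lambda\mapsto S(\lambda)^{-1}$, and the limits $S(\lambda)^{-1}\to 1$ at $0$ and $\infty$ force $\sup_{\lambda\geq 0}\|S(\lambda)^{-1}\|<\infty$. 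Diagonalising by $\F_0$, this supremum equals $\|S^{-1}\|_{\B(\H)}$, so $S$ is invertible on $\H$. Closedness of the ranges of $W_+$ and $W_-$ then follows from $W_+^*W_-=S$ and its adjoint $W_-^*W_+=S^*$: the bounded right inverses $W_-S^{-1}$ of $W_+^*$ and $W_+(S^*)^{-1}$ of $W_-^*$ show that $W_+^*$ and $W_-^*$ are surjective, whence by the closed range theorem $\Ran(W_+)$ and $\Ran(W_-)$ are closed. The main technical point in the whole argument is the commutation identity $M_-M_+=M_+M_-$, which relies on exploiting both factorisations of the resolvent identity for $A_\pm^{-1}$; once this is in hand, the rest reduces to a routine application of the continuity and boundedness results already established.
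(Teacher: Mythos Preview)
Your proof is correct and follows essentially the same approach as the paper: both identify the same explicit candidate $1+2\pi i\,\F_0(\lambda)v\big(\overline{u}+vR_0(\lambda-i0)v\big)^{-1}v\F_0(\lambda)^*$ for $S(\lambda)^{-1}$ and verify it using the identity $2\pi i\,\F_0(\lambda)^*\F_0(\lambda)=R_0(\lambda+i0)-R_0(\lambda-i0)$. Your packaging of the algebra via $M_--M_+=M_+M_-=M_-M_+$ is a clean reformulation of the paper's direct expansion, and your closed-range argument (surjectivity of $W_\pm^*$ via the bounded right inverses $W_\mp S^{\mp 1}$ and the closed range theorem) is a valid alternative to the paper's lower bound $\|W_-\varphi\|\ge\|S\varphi\|\ge\|S^{-1}\|^{-1}\|\varphi\|$.
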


\begin{proof}
We firstly recall the relation $2\pi i \F_0(\lambda)^*\F_0(\lambda) = R_0(\lambda+i0)-R_0(\lambda-i0)$ and make the computation
\begin{align*}
&S(\lambda)\big(1+2\pi i \F_0(\lambda)v\big(\overline{u} + vR_0(\lambda-i0)v\big)^{-1} v \F_0(\lambda)^*) \\
&= 1 -2\pi i \F_0(\lambda) v \big(\overline{u} +v R_0(\lambda+i0)v\big)^{-1}v \F_0(\lambda)^* +2\pi i \F_0(\lambda)v\big(\overline{u} + vR_0(\lambda-i0)v\big)^{-1} v \F_0(\lambda)^* \\
&-(2\pi i)^2 \F_0(\lambda) v\big(\overline{u}+vR_0(\lambda+i0)v\big)^{-1}v\F_0(\lambda)^*\F_0(\lambda)v\big(\overline{u} + v R_0(\lambda -i0)v\big)^{-1} v\F_0(\lambda)^* \\
&= 1 - 2\pi i \F_0(\lambda) v \big( \big(\overline{u}+vR_0(\lambda+i0)v\big)^{-1}-\big(\overline{u} + vR_0(\lambda-i0)v\big)^{-1} \big) v\F_0(\lambda)^* \\
&- 2\pi i \F_0(\lambda) v \big(\overline{u} +v R_0(\lambda+i0)v\big)^{-1}v \big(R_0(\lambda+i0)-R_0(\lambda-i0)  \big) v \big(\overline{u} + vR_0(\lambda-i0)v\big)^{-1} v \F_0(\lambda)^* \\
&= 1,
\end{align*}
so that $1+2\pi i \F_0(\lambda)v\big(\overline{u} + vR_0(\lambda-i0)v\big)^{-1} v \F_0(\lambda)^*$ is a right inverse for $S(\lambda)$. A similar calculation shows that it is a left inverse and so $S(\lambda)$ is invertible with inverse given by
\begin{align*}
S(\lambda)^{-1} &= 1+2\pi i \F_0(\lambda)v\big(\overline{u} + vR_0(\lambda-i0)v\big)^{-1} v \F_0(\lambda)^*.
\end{align*}

If $\bar u +vR_0(\lambda-i0)v$ is invertible for all $\lambda>0$, then it follows from Lemma \ref{lm:properties-R} that the map 
\begin{equation*}
[0,\infty)\ni\lambda\mapsto\big(\overline{u} + vR_0(\lambda-i0)v\big)^{-1}\in\B(\H)
\end{equation*}
is continuous. Together with Lemma \ref{lem_Fpm}, this proves the continuity of $\lambda\mapsto S(\lambda)^{-1}$ on $[0,\infty)$.
Arguing as in the proof of Lemma \ref{lm:S(lambda)}, we have $\|S(\lambda)^{-1}-1\|_{\B(\HS)}\to 0$ as $\lambda\to\infty$. 
Therefore $\lambda\mapsto S(\lambda)^{-1}$ is bounded on $[0,\infty)$, which implies that $S$ is invertible in $\B(\H)$.

Finally, to prove that $W_-$ has a closed range, it suffices to write, for any $\varphi\in\H$
\begin{align*}
\|W_-\varphi\|\ge\|W_+^*W_-\varphi\|=\|S\varphi\|\ge\|S^{-1}\|^{-1}\|\varphi\|,
\end{align*}
where we used that $W_+$ is a contraction. The proof for $W_+$ is similar.
\end{proof}

\section{About the wave operator $W_-$}\label{sec_W}
\setcounter{equation}{0}

In this section, we deduce new formulas for the wave operator $W_-$. Note that
similar work could be performed on $W_+$, but for simplicity we concentrate only
on one of the wave operators. 

By using the content of Lemma \ref{lem_statio} and the resolvent identities \eqref{eq_resolv_eq},
one easily infers  on suitable $f,g\in \H$ that
\begin{align*}
\langle W_- f,g\rangle
& =  \lim_{\varepsilon\searrow 0}\frac{\varepsilon}{\pi} \int_{\R}\d \lambda 
\langle R_0(\lambda- i \varepsilon)f, R^*(\lambda - i \varepsilon) g\rangle \\
& =  \lim_{\varepsilon\searrow 0}\frac{\varepsilon}{\pi} \int_{\R}\d \lambda 
\langle R_0(\lambda- i \varepsilon)f, R_0(\lambda - i \varepsilon) g\rangle \\
& \quad - \lim_{\varepsilon\searrow 0}\frac{\varepsilon}{\pi} \int_{\R}\d \lambda 
\langle  
R_0(\lambda+i\varepsilon) v\big(\overline{u} +v R_0(\lambda+i\varepsilon)v\big)^{-1}vR_0(\lambda+i\varepsilon) R_0(\lambda- i \varepsilon)f, g\rangle \\
& =  \langle f, g\rangle 
 - \lim_{\varepsilon\searrow 0} \int_{\R}\d \lambda 
\langle  
R_0(\lambda+i\varepsilon) v\big(\overline{u} +v R_0(\lambda+i\varepsilon)v\big)^{-1}v \delta_\varepsilon(H_0-\lambda)f, g\rangle 
\end{align*}
with
$
\delta_\varepsilon(H_0-\lambda)
:=\frac\varepsilon\pi R_0(\lambda\mp i\varepsilon)\;\!R_0(\lambda\pm i\varepsilon).
$

We now look at this expression in the spectral representation of $H_0$.
For this we set 
$$
\Hrond:=\ltwo(\R_+, \d \lambda;\HS)
$$ 
and write $L$ for the self-adjoint operator on $\Hrond$ corresponding to the multiplication by the variable $\lambda$.
Then for $\varphi, \psi \in \Hrond$ one has
\begin{align}\label{eq_WOP}
\nonumber &\langle \F_0 (W_--1)\F_0^*\varphi, \psi\rangle_\Hrond  \\
\nonumber & =  - \lim_{\varepsilon\searrow 0} \int_{\R}\d \lambda 
\langle  \F_0
R_0(\lambda+i\varepsilon) v\big(\overline{u} +v R_0(\lambda+i\varepsilon)v\big)^{-1}v \delta_\varepsilon(H_0-\lambda)\F_0^*\varphi, \psi\rangle_\Hrond \\
\nonumber & =  - \lim_{\varepsilon\searrow 0} \int_{\R}\d \lambda 
\langle  \F_0 v \big(\overline{u} +v R_0(\lambda+i\varepsilon)v\big)^{-1}v \F_0^*\delta_\varepsilon(L-\lambda)\varphi,  (L-\lambda+i\varepsilon)^{-1}\psi\rangle_\Hrond \\
\nonumber & =  - \lim_{\varepsilon\searrow 0} \int_{\R}\d \lambda \int_0^\infty \d \mu 
\big\langle  \F_0(\mu) v \big(\overline{u} +v R_0(\lambda+i\varepsilon)v\big)^{-1}v \F_0^* \delta_\varepsilon(L-\lambda)\varphi,  (\mu-\lambda+i\varepsilon)^{-1}\psi(\mu)\big\rangle_\HS \\
& =  - \lim_{\varepsilon\searrow 0} \int_{\R}\d \lambda \int_0^\infty \d \mu 
\big\langle  \lambda^{1/4}\big(\overline{u} +v R_0(\lambda+i\varepsilon)v\big)^{-1}v \F_0^*\delta_\varepsilon(L-\lambda)\varphi, 
 \frac{ \lambda^{-1/4}\mu^{1/4}}{\mu-\lambda+i\varepsilon} \mu^{-1/4}v\F_0(\mu)^*\psi(\mu)\big\rangle_\HS.
\end{align}

Motivated by the last expression, let us recall a few useful results.
From now on, assume that $\alpha >3$ in \eqref{eq_decay_V}. 
Let us also use the notation $C_{\rm c}(\R_+;\G)$ for the set of compactly
supported and continuous functions from $\R_+$ to some Hilbert space $\G$. With this
notation and what precedes, we note that the multiplication operator
$M(L) :C_{\rm c}(\R_+;\H)\to\Hrond$ given for $\xi\in C_{\rm c}(\R_+;\H)$ and $\lambda\in\R_+$
 by 
\begin{equation*}
(M(L)\xi)(\lambda)\equiv M(\lambda)\xi(\lambda):=\lambda^{-1/4}\F_0(\lambda)v\;\!\xi(\lambda)
\end{equation*}
extends to an element of
$\B\big(\ltwo(\R_+;\H),\Hrond\big)$.

The following statement will also be useful. It directly follows from \cite[Lem.~2.3]{RT13}.

\begin{Lemma}\label{lem_limite}
Let $\alpha >3$ in \eqref{eq_decay_V}, $\lambda\in\R_+$ and $\varphi\in C_{\rm c}(\R_+;\HS)$. Then,
we have
$$
\lim_{\varepsilon\searrow 0}\big\|v\F_0^*\;\!\delta_\varepsilon(L-\lambda)\varphi
-v\F_0(\lambda)^*\varphi(\lambda)\big\|=0.
$$
\end{Lemma}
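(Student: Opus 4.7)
The plan is to pass to the spectral representation of $H_0$ and then invoke a classical Poisson-kernel approximate-identity argument. First, in $\Hrond$ the operator
$$
\delta_\varepsilon(L-\lambda)=\tfrac{\varepsilon}{\pi}(L-\lambda-i\varepsilon)^{-1}(L-\lambda+i\varepsilon)^{-1}
$$
acts as multiplication by the Poisson kernel $P_\varepsilon^\lambda(\mu):=\frac{\varepsilon}{\pi((\mu-\lambda)^2+\varepsilon^2)}$. Using the weak identification $\F_0^*\psi=\int_0^\infty d\mu\,\F_0(\mu)^*\psi(\mu)$, one therefore has
$$
v\F_0^*\;\!\delta_\varepsilon(L-\lambda)\varphi=\int_0^\infty d\mu\;\! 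P_\varepsilon^\lambda(\mu)\;\! v\F_0(\mu)^*\varphi(\mu),
$$
so the statement reduces to the $\H$-valued convergence $\int_0^\infty P_\varepsilon^\lambda(\mu)F(\mu)\,d\mu\to F(\lambda)$ with $F(\mu):=v\F_0(\mu)^*\varphi(\mu)$.

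Next I would check that this $F$ is continuous from $\R_+$ to $\H$ with compact support inside $\R_+$. Since $\alpha>3$ in \eqref{eq_decay_V} gives $v(x)\le c\langle x\rangle^{-\alpha/2}$ with $\alpha/2>3/2$, I may choose $s\in(3/2,\alpha/2]$ and factor $v=(v\langle X\rangle^{s})\langle X\rangle^{-s}$, where $v\langle X\rangle^{s}\in\B(\H)$. By Lemma \ref{lem_Fpm}, the map $\mu\mapsto \mu^{1/4}\F_0(\mu)\langle X\rangle^{-s}\in\B(\H,\HS)$ is continuous and bounded on $\R_+$; taking adjoints, $\mu\mapsto \mu^{1/4}\langle X\rangle^{-s}\F_0(\mu)^*\in\B(\HS,\H)$ enjoys the same properties. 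Multiplying on the left by $v\langle X\rangle^{s}$ and dividing by $\mu^{1/4}$ (harmless on the compact support of $\varphi$), one obtains that $\mu\mapsto v\F_0(\mu)^*\in\B(\HS,\H)$ is continuous on $\R_+$, whence $F\in C_{\rm c}(\R_+;\H)$.

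The convergence is now a standard approximate-identity argument. For any $\delta>0$ split the integral as
$$
\int_0^\infty P_\varepsilon^\lambda(\mu)F(\mu)\,d\mu-F(\lambda)
=\int_{|\mu-\lambda|<\delta}P_\varepsilon^\lambda(\mu)\bigl(F(\mu)-F(\lambda)\bigr)d\mu
+\int_{|\mu-\lambda|\ge\delta}P_\varepsilon^\lambda(\mu)F(\mu)\,d\mu
+r_\varepsilon(\delta),
$$
where $r_\varepsilon(\delta):=F(\lambda)\bigl(\int_{|\mu-\lambda|<\delta}P_\varepsilon^\lambda\,d\mu-1\bigr)$ and the integration in $\mu$ is tacitly extended to $\R$ (the tail on $(-\infty,0)$ contributes $O(\varepsilon)$ because $P_\varepsilon^\lambda$ is uniformly small there when $\lambda>0$). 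Uniform continuity of $F$ makes the first term small with $\delta$ uniformly in $\varepsilon$; then, for $\delta$ fixed, the second term is controlled by $\sup_\mu\|F(\mu)\|\cdot\int_{|\mu-\lambda|\ge\delta}P_\varepsilon^\lambda\to 0$, and similarly $r_\varepsilon(\delta)\to 0$ as $\varepsilon\searrow 0$ since $\int_\R P_\varepsilon^\lambda=1$.

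The only genuine point is the continuity of $\mu\mapsto v\F_0(\mu)^*$, which is precisely why the decay hypothesis $\alpha>3$ is imposed: it matches the weight $\langle X\rangle^{-s}$ with $s>3/2$ required by Lemma \ref{lem_Fpm}. Once this is in place, the rest is the classical Poisson-kernel approximate-identity theorem applied pointwise in $\H$, so no further obstacle arises.
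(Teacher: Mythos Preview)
Your argument is correct and is essentially the standard Poisson-kernel approximate-identity proof that one would expect; the paper itself does not give a proof but simply invokes \cite[Lem.~2.3]{RT13}, whose content is precisely the computation you reproduce. One small point you might make explicit: you call the formula $\F_0^*\psi=\int_0^\infty\F_0(\mu)^*\psi(\mu)\,\d\mu$ a ``weak identification'' and then use it as a norm equality. The passage from weak to strong is harmless here because $\delta_\varepsilon(L-\lambda)\varphi$ inherits the compact support of $\varphi$ in $\R_+$, and on that support the integrand $\mu\mapsto v\F_0(\mu)^*\varphi(\mu)P_\varepsilon^\lambda(\mu)$ is norm-continuous by your Lemma~\ref{lem_Fpm} argument; hence the integral exists as a Bochner integral in $\H$ and coincides with the weak expression.
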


We introduce one more multiplication operator. The following statement can be inferred from Lemma \ref{lm:properties-R} together with Lemma \ref{lem_Fpm}. 

\begin{Lemma}\label{lem_on_sigma}
 Suppose that $V:\R^3\to \C_-$ is a bounded, measurable function satisfying \eqref{eq_decay_V} with $\alpha>3$ and $\Im V<0$ on a non-trivial open set.
Then the function
$$
\R_+\ni\lambda\mapsto
\lambda^{1/4}\;\!\big(\overline{u} +v R_0(\lambda+i0)v\big)^{-1}v\F_0(\lambda)^*\in\B(\HS,\H)
$$
is continuous and bounded, and the multiplication operator
$B_0(L): C_{\rm c}\big(\R_+;\HS\big)\to\ltwo(\R_+;\H)$ given
for $\varphi\in C_{\rm c}\big(\R_+;\HS\big)$ and $\lambda\in\R_+$ by 
\begin{equation}\label{defdeB}
(B_0(L) \;\!\varphi)(\lambda) 
\equiv B_0(\lambda) \varphi(\lambda) :=\lambda^{1/4}\;\!\big(\overline{u} +v R_0(\lambda+i0)v\big)^{-1}v\F_0(\lambda)^*\varphi(\lambda) 
\end{equation}
extends to an element of $\B\big(\Hrond,\ltwo(\R_+;\H)\big)$.
\end{Lemma}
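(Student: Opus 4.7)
The plan is to split the proof into two parts: first establishing norm continuity and essential boundedness of the symbol
$$
\sigma(\lambda) := \lambda^{1/4}\;\!\big(\overline{u} +v R_0(\lambda+i0)v\big)^{-1}v\F_0(\lambda)^*
$$
as a map $\R_+\to \B(\HS,\H)$, and then invoking standard direct-integral theory to upgrade the associated multiplication operator to an element of $\B(\Hrond,\ltwo(\R_+;\H))$.

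For the first part I would pick an exponent $s$ with $3/2<s\le \alpha/2$, which is possible since $\alpha>3$, and factor
$$
\sigma(\lambda) = \big(\overline{u} +v R_0(\lambda+i0)v\big)^{-1} \cdot \big(v\langle X\rangle^{s}\big) \cdot \big(\lambda^{1/4}\langle X\rangle^{-s}\F_0(\lambda)^*\big).
$$
The first factor is norm continuous and bounded on all of $[0,\infty)$ with values in $\B(\H)$: this is exactly the second part of Lemma \ref{lm:properties-R}, whose hypotheses ($\alpha>2$) are satisfied here. The middle factor is a fixed bounded multiplication operator on $\H$, with norm controlled by $c$, thanks to the pointwise bound $v(x)\le c\langle x\rangle^{-\alpha/2}$ obtained from \eqref{eq_decay_V} together with $s\le \alpha/2$. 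The last factor is the adjoint of $\lambda^{1/4}\F_0(\lambda)\langle X\rangle^{-s}\in \B(\H,\HS)$, which is norm continuous and bounded on $\R_+$ by Lemma \ref{lem_Fpm} (applied with $s>3/2$). Since each factor is norm continuous and bounded, so is their composition, which proves the first assertion.

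For the second part I would argue by density. Writing $M:=\sup_{\lambda\in\R_+}\|B_0(\lambda)\|_{\B(\HS,\H)}<\infty$ using the bound just obtained, any $\varphi\in C_{\rm c}(\R_+;\HS)$ satisfies
\begin{align*}
\|B_0(L)\varphi\|_{\ltwo(\R_+;\H)}^2
&= \int_0^\infty \big\|B_0(\lambda)\varphi(\lambda)\big\|_{\H}^2\;\!\d\lambda \\
&\le M^2 \int_0^\infty \|\varphi(\lambda)\|_\HS^2\;\!\d\lambda
= M^2\;\!\|\varphi\|_\Hrond^2.
\end{align*}
Strong measurability of $\lambda\mapsto B_0(\lambda)\varphi(\lambda)$ follows from the joint continuity of $\sigma$ and $\varphi$, so the integral is well-defined. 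Because $C_{\rm c}(\R_+;\HS)$ is dense in $\Hrond$, $B_0(L)$ admits a unique bounded extension to $\B(\Hrond,\ltwo(\R_+;\H))$, with operator norm at most $M$.

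There is no real conceptual obstacle: the only delicate point is the choice of $s$, which is precisely what dictates the assumption $\alpha>3$ (indeed, $3/2<s\le\alpha/2$ requires $\alpha>3$). Everything else reduces to combining Lemmas \ref{lm:properties-R} and \ref{lem_Fpm} with the standard fact that uniformly bounded, strongly measurable operator-valued symbols give rise to bounded multiplication operators on direct integrals.
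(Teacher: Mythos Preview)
Your proposal is correct and follows exactly the route the paper indicates: the paper simply states that the lemma ``can be inferred from Lemma \ref{lm:properties-R} together with Lemma \ref{lem_Fpm}'', and your factorization through $v\langle X\rangle^{s}$ with $3/2<s\le\alpha/2$ is the natural way to combine those two results. The only trivial slip is that the pointwise bound on $v$ coming from \eqref{eq_decay_V} is $v(x)\le c^{1/2}\langle x\rangle^{-\alpha/2}$ rather than $c\langle x\rangle^{-\alpha/2}$, which of course does not affect the argument.
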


For $\varepsilon>0$ and $\lambda, \mu \in \R_+$ let us finally set
$$
\Theta_\varepsilon(\lambda,\mu):=\frac{1}{2\pi i} \frac{\lambda^{-1/4}\mu^{1/4}}{\mu-\lambda+i\varepsilon}
$$
and denote by $\Theta_\varepsilon$ the corresponding integral operator in $\ltwo(\R_+)$.
By using the notations introduced in the above lemma, one can rewrite \eqref{eq_WOP} in the following form
\begin{equation}\label{eq_W_initial}
\langle \F_0 (W_--1)\F_0^*\varphi, \psi\rangle_\Hrond 
=  -2\pi i \lim_{\varepsilon\searrow 0} \int_{\R}\d \lambda \int_0^\infty \d \mu 
\big\langle  B_\varepsilon(\lambda) \varphi, 
 \Theta_\varepsilon(\lambda,\mu)M(\mu)^* \psi(\mu)\big\rangle_\H,
\end{equation}
where
$$
B_\varepsilon(\lambda):=\lambda^{1/4}\;\!\big(\overline{u} +v R_0(\lambda+i0)v\big)^{-1}v\F_0^*
\delta_\varepsilon(L-\lambda).
$$

We can now finally state and prove the main result of this section.

\begin{Theorem}\label{thm_formula}
Suppose that $V:\R^3\to \C_-$ is a bounded, measurable function satisfying \eqref{eq_decay_V} with $\alpha>7$ and $\Im V<0$ on a non-trivial open set.
Then the following 
equalities hold:
\begin{equation}\label{eq_thm_1}
\F_0 (W_--1)\F_0^* = -2\pi i\;\!M(L)\;\!\big\{\tfrac12\big(1-\tanh(2\pi A_+)-i\cosh(2\pi A_+)^{-1}\big)\otimes1_{\H}\big\}B_0(L)
\end{equation}
with $A_+$ the generator of dilations in $\ltwo(\R_+)$, and  
\begin{equation*}
W_-=1+\tfrac12\big(1+\tanh(\pi A)-i\cosh(\pi A)^{-1}\big)(S-1)+K
\end{equation*}
with $A$ the generator of dilations in $\H$ and $K\in \K(\H)$.
\end{Theorem}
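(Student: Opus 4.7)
The proof starts from equation \eqref{eq_W_initial}. I would first restrict to $\varphi, \psi \in C_{\rm c}(\R_+; \HS)$ (dense in $\Hrond$) and pass the limit $\varepsilon \searrow 0$ inside the $\lambda$-integral in order to replace $B_\varepsilon(\lambda)\varphi$ by $B_0(\lambda)\varphi(\lambda)$. The key input is Lemma \ref{lem_limite}, together with the continuity and uniform boundedness of $\lambda \mapsto \bigl(\overline{u} + v R_0(\lambda+i0) v\bigr)^{-1}$ from Lemma \ref{lm:properties-R}. This rewrites the right-hand side of \eqref{eq_W_initial} as
$$
-2\pi i \lim_{\varepsilon\searrow 0}\int_0^\infty \d\mu \int_\R \d\lambda\, \Theta_\varepsilon(\lambda,\mu)\,\bigl\langle B_0(\lambda)\varphi(\lambda), M(\mu)^*\psi(\mu)\bigr\rangle_\H.
$$

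The core step in proving \eqref{eq_thm_1} is to identify the $\varepsilon \searrow 0$ limit of the integral operator on $\ltwo(\R_+)$ with kernel $\Theta_\varepsilon(\lambda,\mu) = \tfrac{1}{2\pi i}\tfrac{\lambda^{-1/4}\mu^{1/4}}{\mu - \lambda + i\varepsilon}$ as the explicit function $\tfrac12\bigl(1 - \tanh(2\pi A_+) - i\cosh(2\pi A_+)^{-1}\bigr)$ of the dilation generator $A_+$. I would perform this via the Mellin substitution $\lambda = \e^x$, $\mu = \e^y$, which diagonalises $A_+$ into the Fourier variable and turns the (formal) kernel $\Theta_0$ into convolution in $s=y-x$ by a function proportional to $\e^{3s/4}/(\e^s - 1 + i0)$; evaluation of the corresponding Fourier multiplier by contour integration and summation of residues yields the stated $\tanh$/$\cosh^{-1}$ expression (see \cite{RT13} for analogous computations). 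Combining with the first step produces \eqref{eq_thm_1}.

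For the second formula I would translate \eqref{eq_thm_1} back from $\Hrond$ to $\H$. The homogeneity of $\F_0$ (a dilation $x \mapsto \e^t x$ on $\R^3$ induces $\lambda \mapsto \e^{-2t}\lambda$ on $\R_+$) yields $\F_0 A \F_0^* = -2\,A_+ \otimes 1_\HS$, and the parities of $\tanh$ and $\cosh^{-1}$ then convert $\tfrac12\bigl(1 - \tanh(2\pi A_+) - i\cosh(2\pi A_+)^{-1}\bigr)$ into $\tfrac12\bigl(1 + \tanh(\pi A) - i\cosh(\pi A)^{-1}\bigr)$. Since the fiberwise identity $-2\pi i M(\lambda)B_0(\lambda) = S(\lambda) - 1$ gives $-2\pi i M(L)B_0(L) = \F_0(S-1)\F_0^*$, the claim reduces, after conjugation by $\F_0$, to showing that $\bigl[f(A_+)\otimes 1_\HS,\, M(L)\bigr]B_0(L)$ is compact on $\Hrond$, where $f(A_+) = \tfrac12(1-\tanh(2\pi A_+)-i\cosh(2\pi A_+)^{-1})$. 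I would approach this by writing $f(A_+) = \int \hat f(t)\,\e^{itA_+}\,\d t$ and exploiting the scaling $\e^{itA_+} L \e^{-itA_+} = \e^{\mp t} L$, so the commutator is controlled by finite differences of $\lambda \mapsto M(\lambda)$ on dilation scales; compactness would then follow from the continuity and vanishing-at-infinity of $\lambda \mapsto \lambda^{-1/4}\F_0(\lambda)\langle X\rangle^{-s}$ and $\lambda \mapsto (\overline{u}+vR_0(\lambda+i0)v)^{-1}$ provided by Lemmas \ref{lem_Fpm} and \ref{lm:properties-R}.

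The main obstacle is precisely this final commutator-compactness step. It demands quantitative regularity of the fiber operators $M(\lambda)$ and $B_0(\lambda)$ in $\lambda$ (equivalently, enough decay of $V$ together with controlled behaviour under the dilation group), and it is here that the strong decay hypothesis $\alpha > 7$ in Theorem \ref{thm_formula} is consumed, in contrast with \eqref{eq_thm_1} for which $\alpha>3$ essentially suffices. Once compactness of the relevant commutator is secured, the remaining manipulations are purely algebraic and deliver the second formula directly.
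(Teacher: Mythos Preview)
Your overall architecture matches the paper's, but you have misplaced the main technical difficulty and consequently the role of the hypothesis $\alpha>7$.

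In your first step you write that you will ``pass the limit $\varepsilon\searrow 0$ inside the $\lambda$-integral in order to replace $B_\varepsilon(\lambda)\varphi$ by $B_0(\lambda)\varphi(\lambda)$'', keeping $\Theta_\varepsilon$ intact. This interchange is the hard part of the proof and is not justified by Lemma~\ref{lem_limite} alone: the kernel $\Theta_\varepsilon(\lambda,\mu)$ has no integrable majorant near $\lambda=\mu$ as $\varepsilon\searrow 0$, so neither dominated convergence nor a straightforward uniform bound applies. The paper handles this by first writing $(\mu-\lambda+i\varepsilon)^{-1}=-i\int_0^\infty \e^{i(\mu-\lambda)z}\e^{-\varepsilon z}\,\d z$, then, for each fixed $\lambda$, showing that the resulting $z$-integrand admits an $\varepsilon$-independent $\lone$ bound. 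That bound is obtained by two integrations by parts in the Fourier variable, which produces a factor $v(x)\langle x\rangle^2$; requiring this to lie in $\ltwo(\R^3)$ is exactly what forces $\alpha>7$. Only after this is the $\varepsilon\searrow 0$ limit taken, and the resulting expression is then identified with $\vartheta(A_+)$ via the dilation group (your Mellin/contour computation is equivalent to the paper's use of the distributional Fourier transform of $\chi_+$).

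Conversely, the commutator-compactness step for the second formula is \emph{not} where $\alpha>7$ is spent. The paper simply invokes \cite[Lem.~2.7]{RT13} for the fact that $\{\vartheta(A_+)\otimes 1_\HS\}M(L)-M(L)\{\vartheta(A_+)\otimes 1_\H\}\in\K$, and this relies only on the continuity and decay at $0$ and $\infty$ of $\lambda\mapsto\lambda^{-1/4}\F_0(\lambda)v$ already available from Lemma~\ref{lem_Fpm} under the weaker condition $\alpha>3$. So your statement that ``$\alpha>3$ essentially suffices'' for \eqref{eq_thm_1} while $\alpha>7$ is ``consumed'' in the commutator step inverts the actual dependence. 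The remainder of your argument (the intertwining $\F_0 A\F_0^*=-2A_+\otimes 1_\HS$, the parity conversion of $\vartheta$, and the fiberwise identity $-2\pi i M(\lambda)B_0(\lambda)=S(\lambda)-1$) is correct and matches the paper.
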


\begin{proof}
$(i)$ The proof of the first statement can be borrowed from the proof of Theorem 2.6 in \cite{RT13}, with
only minor modifications.
It consists in showing that the expression
\begin{equation}\label{eq_starting_p}
-  \int_{\R}\d \lambda \lim_{\varepsilon\searrow 0} \int_0^\infty \d \mu 
\big\langle B_\varepsilon(\lambda)\varphi, 
 \Theta_\varepsilon(\lambda,\mu)M(\mu)^*\psi(\mu)\big\rangle
\end{equation}
is well-defined for $\varphi$ and $\psi$ in dense subsets of $\Hrond$
and equal to
$\big\langle\;\!\F_0(W_\pm-1)\;\!\F_0^*\varphi,\psi\big\rangle_{\!\Hrond}$, 
once the constant $2\pi i$ is added.
Second, we show the equality \eqref{eq_thm_1}.

Take $\varphi\in C_{\rm c}(\R_+;\HS)$ and
$\psi\in C_{\rm c}^\infty(\R_+)\odot C(\S^2)$, then we
have for each $\varepsilon>0$, $\lambda\in\R_+$ and $\mu\in\R_+$ the inclusions
$$
g_\varepsilon(\lambda):=\lambda^{1/4}\big(\overline{u} +v R_0(\lambda+i\varepsilon)v\big)^{-1}v \F_0^*\delta_\varepsilon(L-\lambda)\varphi \in \H
\quad\hbox{and}\quad
f(\mu):=\mu^{-1/4}v\F_0(\mu)^*\psi(\mu)\in\H\;\!.
$$
The fact that $f(\mu)\in\H$ follows from Lemma \ref{lem_Fpm}. With these notations the expression \eqref{eq_starting_p} is equal to
\begin{equation*}
-\int_{\R_+}\d\lambda\,\lim_{\varepsilon\searrow0}\int_0^\infty\d\mu\,
\Big\langle g_\varepsilon(\lambda),\frac{\lambda^{-1/4}\mu^{1/4}}
{\mu-\lambda+i\varepsilon}\;\!f(\mu)\Big\rangle.
\end{equation*}

Now, using the formula
$
(\mu-\lambda+i\varepsilon)^{-1}
=-i\int_0^\infty\d z\e^{i(\mu-\lambda)z}\e^{-\varepsilon z}
$
and then applying Fubini's theorem, one obtains that
\begin{align}
&\lim_{\varepsilon\searrow0}\int_0^\infty\d\mu\,\Big\langle g_\varepsilon(\lambda),
\frac{\lambda^{-1/4}\mu^{1/4}}{\mu-\lambda+i\varepsilon}\;\!f(\mu)
\Big\rangle\nonumber\\
&=-i\lim_{\varepsilon\searrow0}\int_0^\infty\d z\,\e^{-\varepsilon z}
\Big\langle g_\varepsilon(\lambda),\int_{-\lambda}^\infty\d\nu\,\e^{i\nu z}
\left(\frac{\nu+\lambda}{\lambda}\right)^{1/4}f(\nu+\lambda)
\Big\rangle.\label{eq21}
\end{align}
Furthermore, the integrant in \eqref{eq21} can be bounded independently of
$\varepsilon\in(0,1)$. Indeed, one has
\begin{align}\label{eq_dure}
\begin{split}
&\left|\,\e^{-\varepsilon z}\Big\langle g_\varepsilon(\lambda),
\int_{-\lambda}^\infty\d\nu\,\e^{i\nu z}
\left(\frac{\nu+\lambda}{\lambda}\right)^{1/4}f(\nu+\lambda)
\Big\rangle\right|\\
&\le\big\|g_\varepsilon(\lambda)\big\|\;\!\bigg\|\int_{-\lambda}^\infty\d\nu\,
\e^{i\nu z}\left(\frac{\nu+\lambda}{\lambda}\right)^{1/4}
f(\nu+\lambda)\bigg\|,
\end{split}
\end{align}
and we know from Lemmas \ref{lm:properties-R} and \ref{lem_limite} that
$g_\varepsilon(\lambda)$ converges to
$
B_0(\lambda)\varphi(\lambda)
$
in $\H$ as $\varepsilon\searrow0$, with $B_0(\lambda)$ defined in \eqref{defdeB}. 
Therefore, the family
$\|g_\varepsilon(\lambda)\|$ (and thus the r.h.s. of \eqref{eq_dure}) is
bounded by a constant independent of $\varepsilon\in(0,1)$.

In order to exchange the integral over $z$ and the limit $\varepsilon\searrow0$ in
\eqref{eq21}, it remains to show that the second factor in \eqref{eq_dure} belongs to
$\lone(\R_+,\d z)$. For that purpose, we denote by $h_\lambda$ the trivial extension
of the function
$
(-\lambda,\infty)\ni\nu\mapsto
\big(\frac{\nu+\lambda}{\lambda}\big)^{1/4}f(\nu+\lambda)\in\H
$
to all of $\R$, and then note that the second factor in \eqref{eq_dure} can be rewritten
as $(2\pi)^{1/2}\|(\F_1^*h_\lambda)(z)\|$, with $\F_1$ the one-dimensional
Fourier transform. To estimate this factor, observe that if $D_1$ denotes the
self-adjoint operator $-i\nabla$ on $\R$, then
$$
\big\|\big(\F_1^*h_\lambda\big)(z)\big\|
=\langle z\rangle^{-2}
\big\|\big(\F_1^*\langle D_1\rangle^2h_\lambda\big)(z)\big\|,
\quad z\in\R_+\;\!.
$$
Consequently, one would have that
$
\|(\F_1^*h_\lambda)(z)\|\in\lone(\R_+,\d z)
$
if the norm
$
\big\|\big(\F_1^*\langle D_1\rangle^2h_\lambda\big)(z)\big\|
$
were bounded independently of $z$. Now, if $\psi=\eta\otimes\xi$ with
$\eta\in C_{\rm c}^\infty(\R_+)$ and $\xi\in C(\S^2)$, then one has for any
$x\in\R^3$
$$
\big(f(\nu+\lambda)\big)(x)
=\frac1{4\pi^{3/2}}\;\!\eta(\nu+\lambda)v(x)\int_{\S^2}\d\omega\,
\e^{i\sqrt{\nu+\lambda}\;\!\omega\cdot x}\xi(\omega).
$$
Therefore, one has
\begin{equation*}
\big(h_\lambda(\nu)\big)(x)=
\begin{cases}
\frac1{4\pi^{3/2}}\left(\frac{\nu+\lambda}{\lambda}\right)^{1/4}\eta(\nu+\lambda) v(x)
\int_{\S^2}\d\omega\,\e^{i\sqrt{\nu+\lambda}\;\!\omega\cdot x}\xi(\omega)
& \nu>-\lambda\\
0 & \nu\le-\lambda,
\end{cases}
\end{equation*}
which in turns implies that
$$
\big|\big\{\big(\F_1^*\langle D_1\rangle^2h_\lambda\big)(z)\big\}(x)\big|
\le{\rm Const.}\;\!v(x)\;\!\langle x\rangle^2,
$$
with a constant independent of $x\in\R^3$ and $z\in\R_+$. Since the r.h.s. belongs to
$\H$, one concludes that
$\big\|\big(\F_1^*\langle D_1\rangle^2h_\lambda\big)(z)\big\|_{\H}$ is bounded
independently of $z$ for each $\psi=\eta\otimes\xi$, and thus for each
$\psi\in C^\infty_{\rm c}(\R_+)\odot C(\S^2)$ by linearity. As a consequence, one can
apply Lebesgue dominated convergence theorem and obtain that \eqref{eq21} is equal to
$$
-i\;\!\bigg\langle g_0(\lambda),\int_0^\infty\d z\int_\R\d\nu\;\!
\e^{i\nu z}h_\lambda(\nu)\bigg\rangle.
$$

With this equality, one can conclude the first part of the proof for the first statement.
Indeed, one has
shown that \eqref{eq_starting_p} is well defined on the dense set of vectors introduced at the
beginning of the proof. Then, its equality with \eqref{eq_W_initial} (modulo the constant)
follows by a standard argument as presented for example in \cite[Lem.~5.2.2]{Yaf92}. 

$(ii)$ Let us now show that 
$
\big\langle\;\!\F_0(W_\pm-1)\;\!\F_0^*\varphi,\psi\big\rangle_{\!\Hrond}
$
is equal to
$
\big\langle-2\pi i\;\!M(L)\big\{\vartheta(A_+)\otimes1_{\H}\big\}B(L)\varphi,
\psi\big\rangle_{\!\Hrond}
$
with 
\begin{equation}\label{defvar}
\vartheta(\nu):=\frac12\big(1-\tanh(2\pi\nu)-i\cosh(2\pi\nu)^{-1}\big),\quad\nu\in\R.
\end{equation}
For that purpose, we write $\chi_+$ for the characteristic function for $\R_+$. Since
$h_\lambda$ has compact support, we obtain the following equalities in the sense of
distributions (with values in $\H$)\;\!:
\begin{align*}
\int_0^\infty\d z\int_\R\d\nu\;\!\e^{i\nu z}h_\lambda(\nu)
&=\sqrt{2\pi}\int_\R\d\nu\,\big(\F_1^*\chi_+\big)(\nu)\;\!h_\lambda(\nu)\\
&=\sqrt{2\pi}\int_{-\lambda}^\infty\d\nu\,\big(\F_1^*\chi_+\big)(\nu)
\left(\frac{\nu+\lambda}{\lambda}\right)^{1/4}f(\nu+\lambda)\\
&=\sqrt{2\pi}\int_\R\d\mu\,\big(\F_1^*\chi_+\big)\big(\lambda(\e^\mu-1)\big)
\;\!\lambda\e^{5\mu/4}f(\e^\mu\lambda)\qquad(\e^\mu\lambda:=\nu+\lambda)\\
&=\sqrt{2\pi}\int_\R\d\mu\,\big(\F_1^*\chi_+\big)\big(\lambda(\e^\mu-1)\big)
\;\!\lambda\e^{3\mu/4}\big\{\big(U_\mu^+\otimes1_{\H}\big)f\big\}(\lambda).
\end{align*}
Then, by using the fact that
$
\F_1^*\chi_+
=\sqrt{\frac\pi2}\;\!\delta_0+\frac i{\sqrt{2\pi}}\;\!\Pv\frac1{(\;\!\cdot\;\!)}
$
with $\delta_0$ the Dirac delta distribution and $\Pv$ the principal value, one gets
that
$$
\int_0^\infty\d z\int_\R\d\nu\;\!\e^{i\nu z}h_\lambda(\nu)
=\int_\R\d\mu\left(\pi\;\!\delta_0(\e^\mu-1)
+i\;\!\Pv\frac{\e^{3\mu/4}}{\e^\mu-1}\right)
\big\{\big(U_\mu^+\otimes1_{\H}\big)f\big\}(\lambda).
$$
So, by considering the identity
$$
\frac{\e^{3\mu/4}}{\e^\mu-1}
=\frac14\left(\frac1{\sinh(\mu/4)}+\frac1{\cosh(\mu/4)}\right)
$$
and the equality \cite[Table 20.1]{Jef95}
\begin{equation*}
\big(\F_1\bar\vartheta\big)(\nu)
=\sqrt{\frac\pi2}\;\!\delta_0\big(\e^\nu-1\big)
+\frac i{4\sqrt{2\pi}}\;\!\Pv\left(\frac1{\sinh(\nu/4)}+\frac1{\cosh(\nu/4)}\right),
\end{equation*}
with $\vartheta$ defined in \eqref{defvar}, one infers that
\begin{align*}
&\big\langle\F_0(W_--1)\;\!\F_0^*\varphi,\psi\big\rangle_{\!\Hrond}\\
&=i\int_{\R_+}\d\lambda\,\bigg\langle g_0(\lambda),\int_\R\d\mu\,
\bigg\{\pi\;\!\delta_0\big(\e^\mu-1\big) +\frac i4\;\!\Pv\left(\frac1{\sinh(\mu/4)}
+\frac1{\cosh(\mu/4)}\right)\bigg\}
\big\{\big(U_\mu^+\otimes1_{\H}\big)f\big\}(\lambda)
\bigg\rangle\\
&=i\sqrt{2\pi}\int_{\R_+}\d\lambda\left\langle g_0(\lambda),\int_\R\d\mu\,
\big(\F_1\bar\vartheta\big)(\mu)\;\!
\big\{\big(U_\mu^+\otimes1_{\H}\big)f\big\}(\lambda)
\right\rangle.
\end{align*}
Finally, by recalling that
$
\big\{\vartheta(A_+)\otimes1_{\H}\big\}f
=\frac1{\sqrt{2\pi}}\int_\R\d\mu\,\big(\F_1\bar\vartheta\big)(\mu)
\big(U_\mu^+\otimes 1_{\H}\big)f
$,
that $g_0(\lambda)=(B(L)\varphi)(\lambda)$ and that $f=M(L)^*\psi$, one obtains
\begin{align*}
\big\langle\F_0(W_--1)\F_0^*\varphi,\psi\big\rangle_{\!\Hrond}
&=2\pi i\int_{\R_+}\d\lambda\,\big\langle(B(L)\varphi)(\lambda),
\big\{\big(\vartheta(A_+)^*\otimes1_{\H}\big)M(L)^*\psi\big\}(\lambda)
\big\rangle\\
&=\big\langle-2\pi i\;\!M(L)\;\!\big\{\vartheta(A_+)\otimes1_{\H}\big\}B(L)\varphi,
\psi\big\rangle_{\!\Hrond}.
\end{align*}
This concludes the proof, since the sets of vectors $\varphi\in C_{\rm c}(\R_+;\HS)$
and $\psi\in C_{\rm c}^\infty(\R_+)\odot C(\S^2)$ are dense in $\Hrond$.

$(iii)$ The proof of the second statement can mimicked from the same proof in \cite{RT13}.
The main arguments are:
\begin{enumerate}[label=(\roman*)]
\item The fact that $\big\{\vartheta(A_+)\otimes1_{\HS}\big\}M(L)
-M(L)\big\{\vartheta(A_+)\otimes 1_{\H}\big\}$
belongs to $\K\big(\ltwo(\R_+;\H),\Hrond\big)$, as shown in \cite[Lem.~2.7]{RT13},
\item The equality
$\F_0 \;\!\frac12\big(1+\tanh(\pi A)-i\cosh(\pi A)^{-1}\big) \!\;\F_0^*=\vartheta(A_+)\otimes 1_{\HS},$,
which can be checked by a direct computation,  
\item The equality
$-2\pi iM(\lambda)B_0(\lambda)=S(\lambda)-1$ which can be inferred from \eqref{eq_S_stationary}.
\end{enumerate}
By using successively these three arguments, one directly deduces the second statement from the first one.
\end{proof}

\section{Levinson's theorem}\label{sec_topol}
\setcounter{equation}{0}

Motivated by the expressions obtained in the previous section, let us introduce
a $C^*$-algebraic framework.
Note that since Theorem \ref{thm_formula} holds under the conditions $\alpha > 7 $ in \eqref{eq_decay_V} and $\Im V<0$ on a non-trivial open set, we shall assume these assumptions throughout this section. Clearly, some of the statements could be obtained under weaker conditions.

Recall that $A$ denotes the generator of dilations in $\H$ while $H_0$ stands for the Laplace
operator. If we set $B:=\frac{1}{2}\ln(H_0)$, then the operators $A$ and $B$ satisfy the important relation $[iB,A]=1$, or more precisely
$$
\e^{itB}A\e^{-itB}=A+t, \qquad \e^{isA}B\e^{-isA}=B-s.
$$
We also recall that the spectrum of $A$ is $\R$, and that the spectrum of $H_0$ is $[0,\infty)$.
We then set
$$
\EE:=C^*\Big(\eta(A)\psi(H_0)\mid \eta\in C\big([-\infty, \infty];\C\big) \hbox{ and } 
\psi\in C\big([0,\infty);\K(\HS)\big)\Big)^+
$$
where the exponent $+$ means that $\C$ times the identity has been added to the algebra.
In the context of Schr\"o\-dinger operators in $\R^3$ this $C^*$-algebra has already been introduced 
and studied in \cite[Sec.~4]{KR12}.

Let us state a few results about this algebra. It is known that $\K(\H)\subset \EE$, and
since $\K(\H)$ is an ideal in $\EE$ the quotient algebra can be computed. One has
$$
\EE/\K(\H) \cong C\big(\square; \K(\HS)\big)^+,
$$
where $\square$ denotes the edges of a square. More precisely, if we set 
$$
q:\EE \to  C\big(\square; \K(\HS)\big)^+
$$
for the quotient map, and if we consider a typical element of $\EE$ of the form $\eta(A)\psi(H_0)$
one has 
$$
q\big(\eta(A)\psi(H_0)\big) = \Big(\eta(\cdot)\psi(0), \eta(+\infty)\psi(\cdot), \eta(\cdot)\psi(+\infty), \eta(-\infty)\psi(\cdot)\Big)
$$
where (by convention) we started by the edge on the left of the square and list  the elements clockwise.
Note also that continuity holds at the four corners, as it can easily be checked.

Since the three algebras can be fitted in the short-exact sequence
$$
0 \to \K(\H)\to \EE\to C\big(\square; \K(\HS)\big)^+ \to 0,
$$
the index map in $K$-theory maps the $K_1$-group of $C\big(\square; \K(\HS)\big)^+$
to the $K_0$-group of $\K(\H)$. 
It is well-known that both groups can be identified with $\Z$, with the usual trace for the former.
For the latter, the winding number of the pointwise determinant induces the isomorphism.
However, a careful use of this functional is necessary, as it will appear later in the application.
  
Our interest in the algebra $\EE$ is that the wave operator $W_-$ belongs to it, as a
consequence of Theorem \ref{thm_formula}. Indeed, the scattering operator can be thought as an operator
of the form $S(H_0)$ with a function 
$$
[0,\infty)\ni \lambda \mapsto S(\lambda)\in \K(\HS)
$$
continuous and having limits at $0$ and at $+\infty$.
It then follows that the image of $W_-$ in the quotient algebra can be computed and one infers that
\begin{equation}\label{eq_quot}
q(W_-) = \big(1, S(\cdot), 1, 1\big)
\end{equation}
where the facts that $S(0)=1$ and $\lim_{\lambda\to \infty}S(\lambda)=1$ (see Lemma \ref{lm:S(lambda)}) have been taken into account.
If in addition $S(\cdot)$ is invertible, we can deduce two complementary sets of information: 
\begin{enumerate}[label=(\roman*)]
\item Since the element $q(W_-)$ computed in \eqref{eq_quot} is invertible, it follows that the operator $W_-$ is a Fredholm operator (Atkinson's theorem, see for example \cite[Prop.~3.3.11]{Ped}) and that its range is closed.
Let us then define the orthogonal projection $P_-$ by
\begin{equation}\label{eq_P_-}
P_-\H = \Ran(W_-)^\bot.
\end{equation}
Under our current set of assumptions (recalled below) this projection is finite dimensional.
\item The element $q(W_-)$ defines an element $[q(W_-)]_1$ of the $K_1$-group of the $C^*$-algebra $C\big(\square; \K(\HS)\big)^+$.
Consequently, it follows that this element is related to the Fredholm index of $W_-$ by the relation
\begin{equation}\label{eq_topo_form}
-\dim(P_-)=\index(W_-) = \big(K_0(\Tr)\circ \delta_1 \big)([q(W_-)]_1),
\end{equation}
where the first equality follows from the injectivity of $W_-$ given by Lemma \ref{lem:existence_wave}, and where $\delta_1$ stands for the index map in $K$-theory.
We refer to \cite[Prop.~9.4.2]{RLL} for the second equality and for more explanations. 
\end{enumerate}

Our final aim is to get an analytic formula for the computation of \eqref{eq_topo_form}. 
This will be evaluated with an expression of the form
$$
\frac{1}{2\pi i} \int_0^\infty \d \lambda \;\!\left(\Tr\big(S(\lambda)^{-1}S'(\lambda)\big) + c \lambda^{-\frac12} \right)
$$
for a suitable constant $c$, and whenever $S(\lambda)^{-1}$ makes sense.
Here and in the sequel, any integrals of this type should be understood as improper Riemann integrals.
Note that previous investigations of this type made use of the spectral shift function for the pair $(H,H_0)$. Since $H$ is not self-adjoint, this tool seems no longer available to us. Nevertheless, enough of the elements of the proof can still be used to construct an integral formula for the index.

First we analyze some properties of the determinant of the scattering matrix. 
Recall from Lemma \ref{lm:S(lambda)} that $S(\lambda)-1 \in \L^1(\HS)$ for all $\lambda \in [0,\infty)$ and that the map $\lambda \mapsto S(\lambda)-1$ is continuously differentiable in $\L^1(\HS)$, with $\L^1(\HS)$ the set of trace-class operator on the Hilbert space $\HS$.
Thus we infer from \cite[Equation IV.1.14]{GK} that
\begin{equation*}
    \frac{\d}{\d \lambda} \ln \det\big(S(\lambda)\big) = \Tr\big(S(\lambda)^{-1} S'(\lambda)\big)
\end{equation*}
for all $\lambda \in (0,\infty)$, whenever $S(\lambda)$ is invertible.

The following is another straightforward adaptation of \cite[Prop.~9.1.1, 9.1.2 and 9.1.3]{Yaf10}.
For its statement we use the notation
$$
D_2(z) := {\det}_2\big(1+uvR_0(z)v\big).
$$

\begin{Proposition}\label{prop-det2}
Suppose that $V:\R^3\to \C_-$ is a bounded, measurable function satisfying \eqref{eq_decay_V} with $\alpha>7$ and $\Im V<0$ on a non-trivial open set. 
Assume also that the operator $\overline{u} + vR_0(\lambda-i0)v$ is invertible in $\B(\H)$ for any $\lambda>0$.
Then the function $z\mapsto D_2(z)$ is analytic in $\C \setminus [0,\infty)$. In addition, this function is continuous for $z \in \C \setminus [0,\infty)$ up to the cut along $[0,\infty)$, with the point $z = 0$ possibly excluded. For $\lambda \in (0,\infty)$ the function $\lambda \mapsto D_2(\lambda \pm i 0)$ does not have any zeros. Furthermore, we have
\begin{equation*}
\lim_{|z| \to \infty} D_2(z) = 1
\end{equation*}
uniformly in $\arg(z)$.
\end{Proposition}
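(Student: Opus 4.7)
The four assertions are handled by viewing $K(z) := uvR_0(z)v$ as a Hilbert-Schmidt-valued function of $z$ and exploiting the standard regularity properties of $\det_2$ on the Hilbert-Schmidt ideal $\L^2(\H)$. The proof parallels \cite[Prop.~9.1.1--9.1.3]{Yaf10}; the dissipative features enter only through the two invertibility statements used to rule out zeros on the positive real line.

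Analyticity on $\C\setminus[0,\infty)$ follows directly from the explicit kernel $(4\pi|x-y|)^{-1}\e^{i\sqrt{z}|x-y|}$ of $R_0(z)$ (with $\im\sqrt z>0$), together with the bound $|v(x)|\le c\langle x\rangle^{-\alpha/2}$ and $\alpha>7$: these inputs show that $z\mapsto K(z)$ is a holomorphic $\L^2(\H)$-valued function off the cut, and holomorphy of $\det_2:\L^2(\H)\to\C$ transfers this to $D_2$. Continuity up to the cut on $(0,\infty)$ is obtained by strengthening the operator-norm convergence of Lemma \ref{lm:properties-R} to convergence in $\L^2(\H)$; since $\alpha>7$ amply exceeds the threshold needed for a limiting absorption principle in Hilbert-Schmidt norm, this upgrade is routine, and Lipschitz continuity of $\det_2$ on bounded subsets of $\L^2(\H)$ then gives continuity of $\lambda\mapsto D_2(\lambda\pm i0)$ on $(0,\infty)$. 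The threshold $\lambda=0$ is excluded because of the marginal integrability of the free kernel there in dimension three. For the limit at infinity, a direct kernel estimate shows $\|K(z)\|_{\L^2(\H)}\to 0$ as $|z|\to\infty$, and continuity of $\det_2$ at the origin transfers this to $D_2(z)\to 1$ in the required uniform sense.

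The non-vanishing on the cut is based on the equivalence $\det_2(1+K)\neq 0$ iff $1+K$ is invertible in $\B(\H)$, combined with the factorization
\begin{equation*}
\bar u + vR_0(z)v \;=\; \bar u\bigl(1 + uvR_0(z)v\bigr),
\end{equation*}
valid on $\supp V$ since $|u|=1$ there. For $z=\lambda+i0$ the right-hand side is invertible by Lemma \ref{lm:properties-R}, which is precisely the content of the dissipative limiting absorption principle; for $z=\lambda-i0$ invertibility is exactly the additional hypothesis of the proposition, so $D_2(\lambda\pm i0)\neq 0$ follows in both cases. I expect the main technical point to lie in step two, namely the Hilbert-Schmidt upgrade of the limiting absorption principle; everything else is either a direct kernel computation or a clean algebraic manipulation.
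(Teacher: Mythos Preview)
Your proposal is correct and follows exactly the route the paper takes: the paper itself gives no proof and simply states that the proposition is ``another straightforward adaptation of \cite[Prop.~9.1.1, 9.1.2 and 9.1.3]{Yaf10}'', which is precisely the reference you invoke and whose argument you sketch. One cosmetic remark: the factorization $\bar u + vR_0(z)v = \bar u(1+uvR_0(z)v)$ is in fact valid everywhere, not just on $\supp V$, since by convention $u=1$ off $\supp V$ and hence $|u|=1$ on all of $\R^3$; this makes $\bar u$ a unitary multiplication operator and the equivalence of invertibilities immediate, without restricting to $\supp V$.
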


Since we suppose that $\alpha > 7$ in \eqref{eq_decay_V}, $V$ is clearly integrable. We then define the constant  
\begin{equation}\label{eq_constant}
c := \frac{2\pi i \vol(\S^2)}{4(2\pi)^3} \int_{\R^3} \d x \;\!V(x)
= \frac{i}{4 \pi} \int_{\R^3} \d x\;\! V(x).
\end{equation}

Our first task is to show the integrability of the map $\lambda \mapsto \frac{\d}{\d \lambda} \ln\det\big(S(\lambda)\big) +c \lambda^{-\frac12}$.
Since $V$ is not real-valued, this result is not available in the literature.  

\begin{Lemma}\label{lem:high-energy-limit}
Suppose that $V:\R^3\to \C_-$ is a bounded, measurable function satisfying \eqref{eq_decay_V} with $\alpha>7$ and $\Im V<0$ on a non-trivial open set.
Assume also that the operator $\overline{u} + vR_0(\lambda-i0)v$
is invertible in $\B(\H)$ for any $\lambda > 0$.
Then the function $(0,\infty) \ni \lambda \mapsto  \ln \det\big(S(\lambda)\big)$ is in $C^1\big((0,\infty)\big)$ and 
\begin{align}\label{eq:high-energy-limit}
\lim_{\lambda \to \infty} \frac{1}{2\pi i}\left( \ln \det\big(S(\lambda)\big) + 2 c \lambda^\frac12 \right) &= m
\end{align}
for some $m \in \Z$.
\end{Lemma}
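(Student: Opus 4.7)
The plan follows the classical Birman--Krein approach (see e.g.\ \cite[Chap.~8--9]{Yaf10}), adapted to our dissipative setting and formulated through the regularized determinant ${\det}_2$, since under our decay assumption the operators $uvR_0(z)v$ are only Hilbert--Schmidt. The $C^1$ assertion is immediate from the identity $\frac{\d}{\d\lambda}\ln\det\bigl(S(\lambda)\bigr)=\Tr\bigl(S(\lambda)^{-1}S'(\lambda)\bigr)$ recalled just before the lemma: the continuity of $\lambda\mapsto S(\lambda)^{-1}$ in $\B(\HS)$ from Lemma \ref{lem_S_invert} and the $C^1$ regularity of $\lambda\mapsto S(\lambda)-1$ in $\L^1(\HS)$ from Lemma \ref{lm:S(lambda)} imply that $\lambda\mapsto\Tr\bigl(S(\lambda)^{-1}S'(\lambda)\bigr)$ is continuous on $(0,\infty)$, so $\ln\det\bigl(S(\lambda)\bigr)$ is defined as a $C^1$ antiderivative, unique modulo $2\pi i\Z$.

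For the high-energy limit, the central step is to establish the Birman--Krein-type identity
\begin{equation*}
\det\bigl(S(\lambda)\bigr) \;=\; \frac{D_2(\lambda-i0)}{D_2(\lambda+i0)}\;\! e^{-2c\sqrt{\lambda}}, \qquad \lambda>0.
\end{equation*}
In the self-adjoint case with trace class $uvR_0(\lambda\pm i0)v$ this reduces to the classical identity $\det(S(\lambda))=D(\lambda-i0)/D(\lambda+i0)$ with $D(z)=\det(1+uvR_0(z)v)$; passing to the regularized determinant via $\det(1+K)={\det}_2(1+K)\;\! e^{\Tr K}$ introduces the exponential correction whose exponent is
$$
\Tr\bigl(uvR_0(\lambda-i0)v\bigr)-\Tr\bigl(uvR_0(\lambda+i0)v\bigr) \;=\; -2\pi i\;\!\Tr\bigl(\F_0(\lambda)V\F_0(\lambda)^*\bigr).
$$
A direct computation using $[\F_0(\lambda)f](\omega)=(\lambda/4)^{1/4}[\F f](\sqrt{\lambda}\omega)$ and $\vol(\S^2)=4\pi$ yields $\Tr\bigl(\F_0(\lambda)V\F_0(\lambda)^*\bigr) = \tfrac{\sqrt{\lambda}}{4\pi^2}\int_{\R^3}V(x)\d x$, so that this exponent equals exactly $-2c\sqrt{\lambda}$ with $c$ as in \eqref{eq_constant}. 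The identity itself would be proved by adapting the resolvent manipulations of \cite[Chap.~8]{Yaf10} to the dissipative case, keeping careful track of the asymmetric roles of $u$ and $\bar u$ entering the stationary formulas of Lemmas \ref{lm:S(lambda)} and \ref{lem_S_invert}.

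Once this identity is in hand the conclusion follows quickly. By Proposition \ref{prop-det2}, $\lambda\mapsto D_2(\lambda\pm i0)$ is continuous and non-vanishing on $(0,\infty)$ with limit $1$ at $+\infty$, so $\det\bigl(S(\lambda)\bigr)e^{2c\sqrt{\lambda}}\to 1$ as $\lambda\to\infty$, and any continuous branch of its logarithm converges to $2\pi i m$ for some $m\in\Z$. This coincides with the $C^1$ antiderivative defining $\ln\det\bigl(S(\lambda)\bigr)$ up to an additive constant in $2\pi i\Z$, yielding \eqref{eq:high-energy-limit}. The main obstacle is rigorously establishing the ${\det}_2$-Birman--Krein identity outside of the self-adjoint category: the standard proofs use $u=\bar u$ at several points, and here one must reorganize the factorizations and the ${\det}_2$-regularization so that all formal traces make sense while the asymmetric occurrence of $\bar u$ in $(\bar u+vR_0(\lambda+i0)v)^{-1}$ can be matched, via \eqref{eq_resolv_eq}, with the resolvent of $H$.
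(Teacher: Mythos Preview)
Your outline is exactly the paper's strategy: the $C^1$ claim via $\Tr(S^{-1}S')$, then the Birman--Kre\u{\i}n identity
\[
\det\bigl(S(\lambda)\bigr)=\frac{D_2(\lambda-i0)}{D_2(\lambda+i0)}\,e^{-2c\sqrt{\lambda}},
\]
followed by Proposition~\ref{prop-det2} to conclude. The only genuine gap is that you leave this identity unproved and flag the $u\neq\bar u$ asymmetry as an obstacle. In fact the paper dispatches it in three lines, and the asymmetry causes no trouble.

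The key algebraic step you are missing is this: for $z,\bar z\in\C\setminus\sigma(H)$,
\[
(1+uvR_0(z)v)^{-1}(1+uvR_0(\bar z)v)
=(\bar u+vR_0(z)v)^{-1}(\bar u+vR_0(\bar z)v)
=1-(\bar u+vR_0(z)v)^{-1}v\bigl(R_0(z)-R_0(\bar z)\bigr)v,
\]
which is a direct resolvent manipulation (multiply through by $\bar u+vR_0(z)v$). Taking $z=\lambda+i0$, the right-hand side becomes $1-2\pi i(\bar u+vR_0(\lambda+i0)v)^{-1}v\F_0(\lambda)^*\F_0(\lambda)v$, and by $\det(1+AB)=\det(1+BA)$ together with \eqref{eq_S_stationary} this yields
\[
\det\bigl(S(\lambda)\bigr)=\det\bigl((1+uvR_0(\lambda+i0)v)^{-1}(1+uvR_0(\lambda-i0)v)\bigr).
\]
For the passage to ${\det}_2$, note that your heuristic ``$\det(1+K)={\det}_2(1+K)\,e^{\Tr K}$'' is not literally available here, since $uvR_0(\lambda\pm i0)v$ is only Hilbert--Schmidt and the individual traces do not exist. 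The paper instead invokes \cite[Lem.~III.4]{guillope-thesis}, which gives directly
\[
\det\bigl((1+K_+)^{-1}(1+K_-)\bigr)=\frac{{\det}_2(1+K_-)}{{\det}_2(1+K_+)}\,e^{-\Tr(K_+-K_-)}
\]
whenever $K_\pm$ are Hilbert--Schmidt with $K_+-K_-$ trace class. Here $K_\pm=uvR_0(\lambda\pm i0)v$ and $K_+-K_-=2\pi i\,uv\F_0(\lambda)^*\F_0(\lambda)v$ is indeed trace class, with trace $2c\sqrt{\lambda}$ by your own computation. This closes the gap without any further adaptation of the self-adjoint theory.
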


\begin{proof}
The proof of  the differentiability claim can be directly copied from the proof of \cite[Thm.~9.1.18]{Yaf10}, using Lemmas \ref{lm:S(lambda)} and \ref{lem_S_invert}.
For the second statement, consider $z, \overline{z} \in \C \setminus \sigma(H)$, 
and observe that the following identity holds:
\begin{align*}
(1+uvR_0(z)v)^{-1}(1+uvR_0(\overline{z})v)  &=(\overline{u}+vR_0(z)v)^{-1}(\overline{u}+vR_0(\overline{z})v) \\ 
&= 1- (\overline{u}+vR_0(z)v)^{-1} v \big(R_0(z)-R_0(\overline{z})\big)v.
\end{align*}
By Lemma \ref{lm:properties-R} we have that the above function is continuous and can be extended to $z = \lambda \pm i0$ for $\lambda > 0$. Recall the relation $2\pi i \F_0(\lambda)^*\F_0(\lambda) =  \big(R_0(\lambda+i0)-R_0(\lambda-i0)\big)$. Then we find, using also the relation $\det(1+AB) = \det(1+BA)$, that 
\begin{align*}
\det\big(S(\lambda)\big) &= \det\big(1-2\pi i \F_0(\lambda) v \big(\overline{u}+vR_0(\lambda+i0)v\big)^{-1} v \F_0(\lambda)^*\big) \\
&= \det\big((1+uvR_0(\lambda+i0)v)^{-1}(1+uvR_0(\lambda-i0)v)\big) \\
&= \frac{\det_2(1+uvR_0(\lambda-i0)v)}{\det_2(1+uvR_0(\lambda+i0)v)} \e^{-\Tr(uv(R_0(\lambda+i0)-R_0(\lambda-i0))v)},
\end{align*}
where in the last line we have used \cite[Lem.~III.4]{guillope-thesis}. Since $D_2(\lambda \pm i0) \to 1$ as $\lambda \to \infty$ by Proposition \ref{prop-det2}, the high-energy behavior of $\det\big(S(\lambda)\big)$ is fully determined by the term $\e^{-\Tr(uv(R_0(\lambda+i0)-R_0(\lambda-i0))v)}$. We can compute this trace explicitly, namely
\begin{align}\label{eq:trace-computation}
\Tr\Big(uv\big(R_0(\lambda+i0)-R_0(\lambda-i0)\big)v\Big) &= \frac{2\pi i \lambda^{\frac12} \vol(\S^2)}{2(2\pi)^3} \int_{\R^3}\d x\;\! V(x)  = 2 c \lambda^\frac12,
\end{align}
leading easily to the statement.
\end{proof}

\begin{Remark}\label{rmk:branch}
We can fix a branch of the meromorphic function $\ln \det_2\big(1+uvR_0(z)v\big)$ by the condition
\begin{align*}
\lim_{|z| \to \infty} \arg\Big({\det}_2\big(1+uvR_0(z)v\big) \Big)&= 0.
\end{align*}
Fixing this branch gives $m = 0$ in Lemma \ref{lem:high-energy-limit} and we take this convention for the rest of this paper.
\end{Remark}

\begin{Lemma}\label{lem:integrable}
Under the assumptions of Lemma \ref{lem:high-energy-limit}, for any $\Lambda > 0$ the map $\lambda \mapsto \frac{\d}{\d \lambda} \ln\det\big(S(\lambda)\big) + c \lambda^{-\frac12}$ is integrable on $(0,\Lambda)$ and $\int_0^\Lambda \d\lambda \big(\frac{\d}{\d \lambda} \ln\det\big(S(\lambda)\big) + c \lambda^{-\frac12}\big)$, with $c$ defined in \eqref{eq_constant}, converges as $\Lambda\to\infty$.
\end{Lemma}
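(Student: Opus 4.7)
The plan is to establish continuity of $\frac{\d}{\d\lambda}\ln\det(S(\lambda))$ on the whole closed half-line $[0,\infty)$, which makes local integrability automatic since $c\lambda^{-1/2}$ is locally integrable near $0$, and then to reduce the convergence at infinity to Lemma \ref{lem:high-energy-limit} through the fundamental theorem of calculus.

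First I would combine Lemma \ref{lm:S(lambda)}, valid here because $\alpha>7>4$ and which asserts that $\lambda \mapsto S(\lambda)-1$ is continuously differentiable as a map from $[0,\infty)$ into $\L^1(\HS)$, with Lemma \ref{lem_S_invert}, which gives continuity of $\lambda \mapsto S(\lambda)^{-1}$ on $[0,\infty)$ in the operator norm of $\B(\HS)$. The product $S(\lambda)^{-1}S'(\lambda)$ is then continuous from $[0,\infty)$ into $\L^1(\HS)$, and composing with the continuous trace functional shows that $\Tr\big(S(\lambda)^{-1}S'(\lambda)\big)=\frac{\d}{\d\lambda}\ln\det(S(\lambda))$ is continuous on $[0,\infty)$. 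In particular it is bounded on every $[0,\Lambda]$, so that together with the local integrability of $c\lambda^{-1/2}$ near $0$ the integrability claim on $(0,\Lambda)$ follows.

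For the convergence as $\Lambda\to\infty$, I would pick a continuous branch of $\ln\det(S(\cdot))$ on $[0,\infty)$, possible because $\det(S(\lambda))$ is continuous in $\lambda$ and non-vanishing by invertibility of $S(\lambda)$, normalized by $\ln\det(S(0))=0$ thanks to $S(0)=1$. The fundamental theorem of calculus then yields
\begin{equation*}
\int_0^\Lambda \d\lambda\,\Big(\tfrac{\d}{\d\lambda}\ln\det(S(\lambda))+c\lambda^{-1/2}\Big) = \ln\det\big(S(\Lambda)\big)+2c\Lambda^{1/2},
\end{equation*}
and Lemma \ref{lem:high-energy-limit} ensures that the right-hand side has a finite limit as $\Lambda\to\infty$, of the form $2\pi i m$ for some $m\in\Z$, which proves convergence of the improper Riemann integral.

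The only bookkeeping subtlety, which I expect to be the main obstacle, is matching branches: the continuous primitive produced by integrating $\Tr(S^{-1}S')$ starting from $\lambda=0$ need not agree pointwise with the branch of $\ln\det$ fixed in Remark \ref{rmk:branch}, but the two choices differ only by an additive constant in $2\pi i\Z$, so Lemma \ref{lem:high-energy-limit} still applies, with possibly a shifted value of $m$, and the convergence asserted in the lemma is unaffected.
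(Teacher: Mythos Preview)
Your argument for convergence as $\Lambda\to\infty$ via the fundamental theorem of calculus and Lemma~\ref{lem:high-energy-limit} is correct and is exactly how the paper closes the proof as well. The gap is at $\lambda=0$.

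You read Lemma~\ref{lm:S(lambda)} as asserting that $\lambda\mapsto S(\lambda)-1$ is continuously differentiable in $\L^1(\HS)$ on the \emph{closed} half-line $[0,\infty)$. The lemma only gives this on $(0,\infty)$; in fact $\|S'(\lambda)\|_1$ diverges like $\lambda^{-1/2}$ as $\lambda\searrow 0$. Writing $\F_0(\lambda)v=\lambda^{1/4}N(\lambda)$ with $N$ continuous at $0$ (cf.~Lemma~\ref{lem_Fpm}), one has $S(\lambda)-1=-2\pi i\,\lambda^{1/2}N(\lambda)\big(\overline{u}+vR_0(\lambda+i0)v\big)^{-1}N(\lambda)^*$, and differentiating the prefactor produces a $\tfrac12\lambda^{-1/2}$ times a generically nonzero trace-class operator. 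Hence $\Tr\big(S(\lambda)^{-1}S'(\lambda)\big)$ is \emph{not} bounded near $0$, and your ``bounded on $[0,\Lambda]$'' argument for local integrability fails.

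The paper replaces this step by an explicit low-energy estimate: it computes the Hilbert--Schmidt norms $\|\F_0(\lambda)v\|_2\le C\lambda^{1/4}$ and $\|\F_0'(\lambda)v\|_2\le C\lambda^{-3/4}$, uses the uniform boundedness near $0$ of $\big(\overline{u}+vR_0(\lambda+i0)v\big)^{-1}$ and of $vR_0(\lambda+i0)^2v$, and deduces via H\"older for Schatten norms that $\|S'(\lambda)\|_1\le C_1\lambda^{-1/2}+C_2\lambda^{1/2}$ for small $\lambda$. Together with the uniform bound on $\|S(\lambda)^{-1}\|_{\B(\HS)}$ from Lemma~\ref{lem_S_invert}, this yields $\big|\Tr\big(S(\lambda)^{-1}S'(\lambda)\big)\big|\le C\lambda^{-1/2}$, which is integrable near $0$. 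Your proposal needs this quantitative step (or an equivalent one) to close.
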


The following proof is an analogue of \cite[Lem.~4.12]{Angus} valid for usual self-ajdoint Schr\"{o}dinger operators.

\begin{proof}
We first check integrability in a neighbourhood of zero. Recall that we have the equality 
$$
\frac{\d}{\d \lambda} \ln \det\big(S(\lambda)\big) = \Tr\big(S(\lambda)^{-1} S'(\lambda)\big)
$$ 
for $\lambda \in (0,\infty)$. 
Recall also that
\begin{align*}
S(\lambda) &= 1-2\pi i \F_0(\lambda) v \big(\overline{u} +v R_0(\lambda+i0)v\big)^{-1}v \F_0(\lambda)^*.
\end{align*}
The operator $\F_0(\lambda)v\in \B(\H,\HS)$ has Hilbert-Schmidt norm
\begin{align*}
\|\F_0(\lambda) v \|_2^2 &= (2\pi)^{-3} \lambda^{\frac12} \int_{\R^3} \int_{\S^2} |v(x)|^2 \, \d \omega \, \d x = C \lambda^{\frac{1}{4}},
\end{align*}
with the same calculation applying to the adjoint also, from which we obtain
\begin{align*}
\|\F_0(\lambda)v \|_2 &\leq C \lambda^{\frac{1}{4}}, \quad \hbox{ and } \quad \|v\F_0(\lambda)^*\|_2 \leq C \lambda^{\frac{1}{4}}.
\end{align*}
A similar calculation shows that
\begin{align*}
& \|\F_0'(\lambda) v \|_2^2 \\
&\leq \frac{1}{4} (2\pi)^{-3} \lambda^{-\frac{3}{2}} \int_{\R^3} \int_{\S^2} |v(x)|^2 \, \d \omega \, \d x + \frac{1}{4} (2\pi)^{-3} \lambda^{-\frac12} \int_{\R^3} \int_{\S^2} |\langle x, \omega \rangle|^2 |v(x)|^2 \, \d \omega \, \d x  \\
&= C_1 \lambda^{-\frac{3}{2}} + C_2 \lambda^{-\frac12}.
\end{align*}
Thus for sufficiently small $\lambda$ we have the estimates
\begin{align*}
\|\F_0'(\lambda)v \|_2 &\leq C \lambda^{-\frac{3}{4}}, \quad \hbox{ and } \quad \|v\F_0'(\lambda)^*\|_2 \leq C \lambda^{-\frac{3}{4}}.
\end{align*}

Then we can compute that 
\begin{align*}
 S'(\lambda) 
&= -2\pi i \Big(\F_0'(\lambda)v \big(\overline{u} +v R_0(\lambda+i0)v\big)^{-1}v \F_0(\lambda)^* +\F_0(\lambda)v \big(\overline{u} +v R_0(\lambda+i0)v\big)^{-1}v \F_0'(\lambda)^* \\
&\quad +\F_0(\lambda)v \big(\overline{u} +v R_0(\lambda+i0)v\big)^{-1}v R_0(\lambda+i0)^2 
v \big(\overline{u} +v R_0(\lambda+i0)v\big)^{-1}v \F_0(\lambda)^* \Big).
\end{align*}
For sufficiently small $\lambda$, the operators 
$$
\big(\overline{u} +v R_0(\lambda+i0)v\big)^{-1} \quad \hbox{ and } \quad
\big(\overline{u} +v R_0(\lambda+i0)v\big)^{-1}v R_0(\lambda+i0)^2 v
\big(\overline{u} +v R_0(\lambda+i0)v\big)^{-1}
$$ 
are uniformly bounded (see Lemma \ref{lm:properties-R} for the resolvents, and use the well-known fact that $vR_0(\lambda+i0)^2v$ is bounded if $\alpha>2$ in \eqref{eq_decay_V} which follows from the explicit expression of the kernel of $R_0(\lambda+i0)$). Then we can use H\"{o}lder's inequality for Schatten norms to make the estimate 
\begin{align*}
&\big\|S'(\lambda)\big\|_1  \\
& \leq 2\pi \big\|\F_0'(\lambda)v \big(\overline{u} +v R_0(\lambda+i0)v\big)^{-1}v \F_0(\lambda)^* \big\|_1 + 2\pi \big\|\F_0(\lambda)v \big(\overline{u} +v R_0(\lambda+i0)v\big)^{-1}v \F_0'(\lambda)^* \big\|_1 \\
&+2\pi \big\|\F_0(\lambda)v\big(\overline{u} +v R_0(\lambda+i0)v\big)^{-1}v R_0(\lambda+i0)^2 v\big(\overline{u} +v R_0(\lambda+i0)v\big)^{-1}v \F_0(\lambda)^* \big\|_1 \\
&\leq 2\pi \|\F_0'(\lambda)v\|_2 \big\|\big(\overline{u} +v R_0(\lambda+i0)v\big)^{-1}\big\|_{\B(\H)} \|v \F_0(\lambda)^*\|_2 \\
& +2\pi  \|\F_0(\lambda)v\|_2  \big\|\big(\overline{u} +v R_0(\lambda+i0)v\big)^{-1}\big\|_{\B(\H)} \|v \F_0'(\lambda)^* \|_2\\
&+2\pi \big\|\F_0(\lambda)v\|_2  \big\|\big(\overline{u} +v R_0(\lambda+i0)v\big)^{-1}v R_0(\lambda+i0)^2 v\big(\overline{u} +v R_0(\lambda+i0)v\big)^{-1}\big\|_\infty \|v \F_0(\lambda)^* \|_2 \\
&\leq C_1 \lambda^{-\frac12}+C_2 \lambda^{\frac12},
\end{align*}
for some positive constant $C_1,C_2$. It remains to recall that $\lambda\mapsto S(\lambda)^{-1}$ is bounded on $[0,\infty)$ in the operator norm by Lemma \ref{lem_S_invert}. Thus we find using again H\"{o}lder's inequality for Schatten norms that for sufficiently small $\lambda$ we have the estimate 
\begin{align*}
\big|\Tr\big(S(\lambda)^{-1}S'(\lambda)\big)\big| &\leq \|S(\lambda)^{-1}S'(\lambda)\|_1 \leq \|S(\lambda)^{-1}\|_{\B(\HS)} \|S'(\lambda)\|_1 \leq C \lambda^{-\frac12}
\end{align*}
for some $C>0$, which is integrable near zero.

By Lemma \ref{lem:high-energy-limit}, for any $\Lambda > 0$ the map $(0,\Lambda) \ni \lambda \mapsto \frac{\d}{\d \lambda} \ln\det\big(S(\lambda)\big) + c \lambda^{-\frac12}$ is continuous (and in fact the derivative of a continuous function) and so by the Fundamental Theorem of Calculus,
\begin{equation}\label{eq:lim}
\int_0^\Lambda \d \lambda \left(\frac{\d}{\d \lambda} \ln\det\big(S(\lambda)\big) + c \lambda^{-\frac12} \right) < \infty.
\end{equation}
The limit of Equation \eqref{eq:high-energy-limit} shows that the integral in Equation \eqref{eq:lim} converges as $\Lambda \to \infty$.
\end{proof}

Since the map $\lambda \mapsto \det\big(S(\lambda)\big)$ does not necessarily converge as $\lambda \to \infty$, some regularisation is required. For $\lambda \in [0,\infty)$, define $A(\lambda) \in \B(\HS)$ by
\begin{equation*}
A(\lambda) := 4 \tan^{-1}(\lambda^\frac12) \F_0(\lambda) V \F_0(\lambda)^*.
\end{equation*}
This operator is trace class, by assuming $\alpha > 3$ in \eqref{eq_decay_V}.
Indeed, based on the definition of $\F_0(\lambda)$, we can write explicitly the integral kernel of $A(\lambda)$ as
\begin{align*}
A(\lambda,\theta,\theta') &= 2 \tan^{-1}(\lambda^\frac12 ) \lambda^\frac12 (2\pi)^{-3}  \int_{\R^3}  
\d x\;\!\e^{-i \lambda^\frac12 \langle \theta - \theta', x \rangle } V(x)  .
\end{align*}
Integrating along the diagonal we find
\begin{align*}
\Tr\big(A(\lambda)\big) &= \frac{2 \tan^{-1}(\lambda^\frac12) \lambda^\frac12 \vol(\S^2)}{(2\pi)^3} \int_{\R^3} \d x\;\!V(x)  
= \frac{4}{\pi i} \tan^{-1}(\lambda^\frac12) \lambda^\frac12 c.
\end{align*}
In addition, by the properties of the map $\lambda \mapsto \F_0(\lambda)$  already mentioned in Lemma \ref{lem_Fpm} we infer that the map $\lambda \mapsto A(\lambda) \in \B(\HS)$ is continuous and has norm limits 
$$
\ulim_{\lambda \to 0} A(\lambda) = \ulim_{\lambda \to \infty} A(\lambda) = 0. 
$$
As a consequence, we also see that the map $\lambda \mapsto \Tr\big(A(\lambda)\big)$ is continuous with $\lim_{\lambda \to 0} \Tr\big(A(\lambda)\big) = 0$ and 
\begin{align}\label{eq:det-limits}
\lim_{\lambda \to \infty} \left( i \Tr\big(A(\lambda)\big) - \frac{2\pi i \lambda^{\frac12} \vol(\S^2)}{2(2\pi)^3} \int_{\R^3} \d x\;\!V(x)  \right) &= \lim_{\lambda \to \infty} \left( i \Tr\big(A(\lambda)\big) - 2 c \lambda^\frac12 \right) = 0.
\end{align}

Based on these observations, we now define for $\lambda \in [0,\infty)$ the operator
\begin{align*}
\beta(\lambda) &= \e^{i A(\lambda)} \in \B(\HS),
\end{align*}
which satisfies $\det\big(\beta(\lambda)\big) = \e^{i \Tr(A(\lambda))}$ for all $\lambda \in [0,\infty)$. 
By construction we have the norm limits 
$$
\ulim_{\lambda \to 0} \beta(\lambda) = \ulim_{\lambda \to \infty} \beta(\lambda) = 1.
$$
The operator $\beta(\lambda)$ is invertible for all $\lambda$ with $\beta(\lambda)^{-1} = \e^{-i A(\lambda)}$. Such results are not true in the trace norm. We also define the operator $W_\beta \in \B(\Hrond)$ by the equality
\begin{equation*}
W_\beta = 1+ \big\{\tfrac12\big(1-\tanh(2\pi A_+)-i\cosh(2\pi A_+)^{-1}\big)\otimes1_{\H}\big\}(\beta(L)-1).
\end{equation*}

\begin{Lemma}\label{lem_Wb}
Under the assumptions of Lemma \ref{lem:high-energy-limit}, the operator $W_\beta$ is a Fredholm operator in $\B(\HS)$ 
satisfying $\index(W_\beta) = 0$.
\end{Lemma}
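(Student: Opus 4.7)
The plan is a two-step $C^*$-algebraic argument: first establish Fredholmness of $W_\beta$ via an Atkinson-type argument in an analog of $\EE$ constructed on $\Hrond$, then deform $W_\beta$ to the identity through Fredholm operators to conclude that the index vanishes.

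The key preparatory observation is that $\lambda\mapsto\beta(\lambda)-1=\e^{iA(\lambda)}-1$ is a continuous map $[0,\infty]\to\K(\HS)$ with norm limit $0$ at both endpoints, since $\ulim_{\lambda\to 0}A(\lambda)=\ulim_{\lambda\to\infty}A(\lambda)=0$ and $A(\lambda)$ is trace class (hence compact) for every $\lambda$; in particular each $\beta(\lambda)$ is invertible in $\K(\HS)^+$ with inverse $\e^{-iA(\lambda)}$. The operator $W_\beta$ therefore lies in an analog of $\EE$ built on $\Hrond$ from continuous functions of $A_+$ on $[-\infty,+\infty]$ and $\K(\HS)$-valued continuous functions of $L$ on $[0,+\infty]$, fitting in a short exact sequence with compact ideal $\K(\Hrond)$ and quotient $C(\square;\K(\HS))^+$. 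The image $q(W_\beta)$ reduces to $1$ on the three edges $L=0$, $L=+\infty$ and $A_+=+\infty$ (using $\beta(0)=\beta(+\infty)=1$ in norm together with $\vartheta(+\infty)=0$), and to $\beta(\cdot)$ on the fourth edge $A_+=-\infty$ (since $\vartheta(-\infty)=1$). This element is invertible in the quotient, so Atkinson's theorem yields that $W_\beta$ is Fredholm.

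For the index, introduce the homotopy $\beta_t(\lambda):=\e^{itA(\lambda)}$ for $t\in[0,1]$ and define $W_\beta^t$ by replacing $\beta$ with $\beta_t$ in the definition of $W_\beta$. Each $\beta_t$ has the same structural properties as $\beta$, so the very same quotient computation shows that every $W_\beta^t$ is Fredholm. Continuity of $t\mapsto W_\beta^t$ in the operator norm of $\B(\Hrond)$ follows from the pointwise bound $\|\e^{itA(\lambda)}-\e^{isA(\lambda)}\|\le|t-s|\;\!\|A(\lambda)\|$, combined with the finiteness of $\sup_{\lambda\ge 0}\|A(\lambda)\|$ (inherited from the continuous extension of $\lambda\mapsto A(\lambda)$ to $[0,\infty]$ with vanishing endpoint limits). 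Since $W_\beta^0=1$, the local constancy of the Fredholm index on the open set of Fredholm operators delivers $\index(W_\beta)=\index(W_\beta^1)=\index(W_\beta^0)=0$.

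The main technical point is pinning down the auxiliary $C^*$-algebra on $\Hrond$ and its short exact sequence structure with the correct orientation of the four edges; once this is in place the rest is a direct transfer of the framework already developed for $W_-$. As an alternative that avoids any auxiliary algebra, one could construct a hands-on parametrix for each $W_\beta^t$ from the operator associated to $\e^{-itA}$, which again reduces the problem to showing that the resulting remainders are compact.
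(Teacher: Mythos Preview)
Your argument is correct and matches the approach the paper defers to (namely \cite[Lem.~5.7]{ANRR}): place $W_\beta$ in the $\Hrond$-analogue of $\EE$, compute the quotient to get Fredholmness via Atkinson, and then run the homotopy $\beta_t=\e^{itA(\cdot)}$ from $W_\beta$ to the identity through Fredholm operators. The only point worth making explicit is that $\sup_{\lambda\ge0}\|A(\lambda)\|<\infty$ (needed for norm-continuity of $t\mapsto W_\beta^t$) follows from the boundedness of $\lambda^{+1/4}\F_0(\lambda)\langle X\rangle^{-s}$ in Lemma~\ref{lem_Fpm}, which gives $\|A(\lambda)\|=O(\lambda^{-1/2})$ at infinity.
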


A similar statement has been proved in \cite[Lem.~5.7]{ANRR} to which we refer for the details. 
For the next statement, we define the operator $W_S \in \B(\Hrond)$ by
\begin{equation*}
W_S = 1+ \big\{\tfrac12\big(1-\tanh(2\pi A_+)-i\cosh(2\pi A_+)^{-1}\big)\otimes1_{\H}\big\} (S(L)-1).
\end{equation*}
Since $S(0) = \lim_{\lambda \to \infty}S(\lambda)=1$, the operator $W_S$ is also 
a Fredholm operator, as shown in \cite[Lem.~5.4]{ANRR}.
We can then deduce the following statement:

\begin{Lemma}\label{lem:equal-index}
Under the assumptions of Lemma \ref{lem:high-energy-limit},
the Fredholm operators $W_S$ and $W_\beta$ satisfy
$\ W_S W_{\beta} - W_{S\beta}  \in \K(\Hrond)\ $
and
$$
\index(W_S) = \index(W_{S\beta}).
$$ 
\end{Lemma}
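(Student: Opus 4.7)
The proof rests on the $C^*$-algebraic framework introduced at the start of Section \ref{sec_topol}, transposed to $\Hrond$ with $A_+$ and $L$ in place of $A$ and $H_0$. Write $\EE_\Hrond$ for the resulting $C^*$-algebra, generated by continuous scalar functions $\eta(A_+)$ with limits at $\pm\infty$ and by $\K(\HS)$-valued continuous functions $\psi(L)$ with limits at $0$ and $+\infty$, unitized. As in the discussion at the start of Section \ref{sec_topol}, the ideal $\K(\Hrond) \subset \EE_\Hrond$ coincides with the kernel of the quotient map $q : \EE_\Hrond \to C(\square; \K(\HS))^+$, which is itself parametrized by the four edges $L \in \{0, +\infty\}$ and $A_+ \in \{-\infty, +\infty\}$.

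Each of $W_S$, $W_\beta$ and $W_{S\beta}$ has the form $1 + \vartheta(A_+)(X(L) - 1)$ with $\vartheta(\nu) := \tfrac{1}{2}(1 - \tanh(2\pi\nu) - i\cosh(2\pi\nu)^{-1})$, which is continuous on $[-\infty, +\infty]$ with $\vartheta(-\infty) = 1$ and $\vartheta(+\infty) = 0$. By Lemma \ref{lm:S(lambda)} for $S$, and by the construction of $\beta$ (whose norm-limits at $0$ and $+\infty$ are $1$), each of $S(\cdot) - 1$, $\beta(\cdot) - 1$ and $S(\cdot)\beta(\cdot) - 1$ is a $\K(\HS)$-valued continuous function of $\lambda$ vanishing at both endpoints. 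Hence all three operators lie in $\EE_\Hrond$, and $q$ can be evaluated edge by edge: on the three edges $L = 0$, $L = +\infty$ and $A_+ = +\infty$ one of the factors $X(L) - 1$ or $\vartheta(A_+)$ vanishes, so each image reduces to the identity; on the remaining edge $A_+ = -\infty$ one has $\vartheta = 1$, so $q(W_S) = S(\cdot)$, $q(W_\beta) = \beta(\cdot)$ and $q(W_{S\beta}) = S(\cdot)\beta(\cdot)$. Thus $q(W_S) q(W_\beta) = q(W_{S\beta})$ on every edge, and since $q$ is multiplicative we conclude $q(W_S W_\beta - W_{S\beta}) = 0$, i.e.\ $W_S W_\beta - W_{S\beta} \in \K(\Hrond)$, which is the first assertion.

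For the index equality, combine this compact difference with $\index(W_\beta) = 0$ from Lemma \ref{lem_Wb}. Multiplicativity of the Fredholm index for the two Fredholm operators $W_S$ and $W_\beta$ gives $\index(W_S W_\beta) = \index(W_S) + \index(W_\beta) = \index(W_S)$, while invariance of the Fredholm index under compact perturbations, combined with the first assertion, gives $\index(W_S W_\beta) = \index(W_{S\beta})$. These identities yield $\index(W_S) = \index(W_{S\beta})$. The main technical point is the realization of $\K(\Hrond)$ as $\ker q$ inside $\EE_\Hrond$; this is entirely parallel to the construction reviewed at the start of Section \ref{sec_topol} and has already been worked out in detail in \cite[Sec.~5]{ANRR}, so we rely on it here.
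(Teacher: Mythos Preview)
Your proof is correct and follows essentially the same route as the paper's: the compactness of $W_SW_\beta-W_{S\beta}$ is obtained via the $C^*$-algebraic quotient picture underlying \cite[Lem.~3.4]{Angus} and \cite[Lem.~5.8]{ANRR}, and the index equality then follows from Lemma~\ref{lem_Wb} together with additivity of the Fredholm index and its stability under compact perturbations. You have simply made explicit the edge-by-edge verification that the paper delegates to those references.
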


\begin{proof}
The equality $W_S W_\beta = W_{S\beta}$ up to compact operators follows from the proof of \cite[Lem.~3.4]{Angus}, see also \cite[Lem.~5.8]{ANRR} for a similar statement.
The index claim follows from the fact that $\index(W_\beta) = 0$ and the composition rule for the Fredholm index.
\end{proof}

The next statement shows that the operator $\beta$ provides the correct regularisation for the operator $S$, and consequently $W_\beta$ will provide the correct regularisation
to the operator $W_S$.

\begin{Lemma}\label{lem:det-loop}
Under the assumptions of Lemma \ref{lem:high-energy-limit},
 the map $\lambda \mapsto \det\big(S(\lambda)\big)\det\big(\beta(\lambda)\big)$ satisfies
\begin{equation*}
\lim_{\lambda \searrow 0} \det\big(S(\lambda)\big)\det\big(\beta(\lambda)\big) = 1
\end{equation*}
and 
\begin{equation*}
\lim_{\lambda \to \infty} \det\big(S(\lambda)\big)\det\big(\beta(\lambda)\big) = 1.
\end{equation*}
\end{Lemma}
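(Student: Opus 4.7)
The plan is to treat the two limits separately, using on the one hand the continuity results for $S(\lambda)$ in the trace norm already established, and on the other hand the high-energy analysis of $\det(S(\lambda))$ performed in Lemma \ref{lem:high-energy-limit}, together with the explicit formula $\Tr(A(\lambda)) = \frac{4}{\pi i}\tan^{-1}(\lambda^{1/2})\lambda^{1/2} c$ computed just before the statement.

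For the limit as $\lambda\searrow 0$, I would first recall from Lemma \ref{lm:S(lambda)} that under the standing assumption $\alpha>7$ the map $[0,\infty)\ni\lambda\mapsto S(\lambda)-1\in\L^1(\HS)$ is continuous and satisfies $S(0)=1$. Since the Fredholm determinant is continuous in the trace-norm topology on $1+\L^1(\HS)$, this yields $\det(S(\lambda))\to\det(1)=1$. For the regulariser, I would use the observation already recorded above Lemma \ref{lem_Wb}, namely $\lim_{\lambda\to 0}\Tr(A(\lambda))=0$, so that $\det(\beta(\lambda))=\e^{i\Tr(A(\lambda))}\to 1$. Multiplying the two limits gives the first assertion.

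For the limit as $\lambda\to\infty$, I would combine Lemma \ref{lem:high-energy-limit} in the form normalized by Remark \ref{rmk:branch} (so $m=0$), which reads
$$
\ln\det\bigl(S(\lambda)\bigr)+2c\lambda^{1/2}\ \longrightarrow\ 0,\qquad\lambda\to\infty,
$$
with the asymptotic identity \eqref{eq:det-limits}, which gives
$$
i\Tr\bigl(A(\lambda)\bigr)-2c\lambda^{1/2}\ \longrightarrow\ 0,\qquad\lambda\to\infty.
$$
Adding these two asymptotic identities produces $\ln\det(S(\lambda))+i\Tr(A(\lambda))\to 0$, and since $\ln\det(\beta(\lambda))=i\Tr(A(\lambda))$ by construction, exponentiating yields the desired limit $\det(S(\lambda))\det(\beta(\lambda))\to 1$.

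The only mild subtlety is the choice of branch of the logarithm used in Lemma \ref{lem:high-energy-limit}, since without the normalisation of Remark \ref{rmk:branch} one would only obtain convergence up to a factor $\e^{2\pi i m}$, which equals $1$ anyway but whose interpretation requires care. With the branch fixed there, however, no further work is needed and the whole argument is essentially a bookkeeping of the previously established asymptotic relations; no hard analytic estimate has to be redone.
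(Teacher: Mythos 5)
Your proof is correct and follows essentially the same route as the paper: the low-energy limit from continuity and $S(0)=\beta(0)=1$, and the high-energy limit by combining the asymptotics of Lemma \ref{lem:high-energy-limit} with \eqref{eq:det-limits}. The only cosmetic difference is that the paper re-expands $\det\big(S(\lambda)\big)$ via the $\det_2$ factorization and Proposition \ref{prop-det2}, whereas you invoke the already-packaged limit \eqref{eq:high-energy-limit} and exponentiate, which is equivalent (and, as you note, insensitive to the integer $m$).
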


\begin{proof}
The result follows essentially from Lemma \ref{lem:high-energy-limit}. Using Equations \eqref{eq:trace-computation} and \eqref{eq:det-limits} we find
\begin{align*}
&\lim_{\lambda \to \infty} \det\big(S(\lambda)\big) \det\big(\beta(\lambda)\big) \\
&= \lim_{\lambda \to \infty} \frac{\det_2(1+uvR_0(\lambda-i0)v)}{\det_2(1+uvR_0(\lambda+i0)v)} \e^{-\Tr\big(uv(R_0(\lambda+i0)-R_0(\lambda-i0))v)} \e^{i \Tr(A(\lambda))} \\
&= 1.
\end{align*}
Since $S(0) = \beta(0) = 1$ we finally infer that $\det\big(S(0)\big) \det\big(\beta(0)\big) = 1$.
\end{proof}

We can thus state the main result of this section.

\begin{Proposition}\label{prop:comput}
Suppose that $V:\R^3\to \C_-$ is a bounded, measurable function satisfying \eqref{eq_decay_V} with $\alpha>7$ and $\Im V<0$ on a non-trivial open set.
Assume also that the operator $\overline{u} + vR_0(\lambda-i0)v$ is invertible in $\B(\H)$ for any $\lambda > 0$. 
Then the following equality holds: 
\begin{equation*}
\index(W_-) = \frac{1}{2\pi i} \int_0^\infty \d \lambda \;\!\left(\Tr\big(S(\lambda)^{-1}S'(\lambda)\big) + c \lambda^{-\frac12} \right)
\end{equation*}
where $c$ is defined in \eqref{eq_constant}.
\end{Proposition}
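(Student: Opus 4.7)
The plan is to identify $\index(W_-)$ with the winding number along $[0,\infty)$ of a regularized scattering determinant, and then to convert that winding number into the analytic integral appearing in the proposition.

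First I reduce the index of $W_-$ to that of $W_{S\beta}$. By Theorem~\ref{thm_formula}, $\F_0 W_- \F_0^*$ equals $W_S$ modulo a compact operator, hence $\index(W_-) = \index(W_S)$; combining this with Lemma~\ref{lem:equal-index} yields
\begin{equation*}
\index(W_-) = \index(W_S) = \index(W_{S\beta}).
\end{equation*}
The regulariser $\beta$ is introduced precisely to compensate the factor $\e^{-2c\lambda^{1/2}}$ appearing in $\det S(\lambda)$ at infinity, so that $\lambda\mapsto\det(S(\lambda)\beta(\lambda))$ becomes a genuine loop by Lemma~\ref{lem:det-loop}.

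Second, I translate $\index(W_{S\beta})$ into a topological winding number through the $C^*$-algebraic framework of the section. Since $W_{S\beta}\in\EE$ and $q(W_{S\beta}) = \bigl(1, S(\cdot)\beta(\cdot), 1, 1\bigr)$ on the four edges of the square, the identity \eqref{eq_topo_form} together with the standard identification of the generator of $K_1\bigl(C(\square;\K(\HS))^+\bigr)\cong\Z$ through the winding of the pointwise Fredholm determinant gives
\begin{equation*}
\index(W_{S\beta}) = \frac{1}{2\pi i}\int_0^\infty \frac{\d}{\d\lambda}\ln\det\bigl(S(\lambda)\beta(\lambda)\bigr)\,\d\lambda,
\end{equation*}
understood as an improper Riemann integral. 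Only the edge corresponding to $(0,\infty)$ contributes, and Lemma~\ref{lem:det-loop} ensures that the loop closes at both endpoints so that the winding number is well defined.

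Third, I expand the integrand. Since $\det\beta(\lambda) = \e^{i\Tr(A(\lambda))}$ and the determinant is multiplicative on trace-class perturbations of the identity, the integrand splits as $\Tr\bigl(S(\lambda)^{-1}S'(\lambda)\bigr) + i\,\tfrac{\d}{\d\lambda}\Tr(A(\lambda))$. The explicit formula $i\Tr(A(\lambda)) = \tfrac{4c}{\pi}\tan^{-1}(\lambda^{1/2})\lambda^{1/2}$, combined with $A(0) = 0$ and the asymptotic matching underlying Lemma~\ref{lem:det-loop}, implies that $\int_0^\Lambda i\,\tfrac{\d}{\d\lambda}\Tr(A(\lambda))\,\d\lambda$ agrees with $\int_0^\Lambda c\lambda^{-1/2}\,\d\lambda$ in the limit $\Lambda\to\infty$. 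Lemma~\ref{lem:integrable} guarantees that the regularized improper integral on the right-hand side of the proposition converges, and the identity then follows by comparison. The main obstacle I anticipate is step two: faithfully implementing the abstract index class $[q(W_{S\beta})]_1 \in K_1\bigl(C(\square;\K(\HS))^+\bigr)$ by an honest winding number of $\det(S(\cdot)\beta(\cdot))$ in $\C^*$, which requires a consistent branch of $\ln\det$ as prepared in Remark~\ref{rmk:branch} and the continuous extension to both endpoints provided by Lemma~\ref{lem:det-loop}, so that the improper integral produces the correct integer rather than a shifted complex value.
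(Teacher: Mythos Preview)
Your proof is correct and follows essentially the same route as the paper's: reduce $\index(W_-)$ to $\index(W_{S\beta})$ via Theorem~\ref{thm_formula} and Lemma~\ref{lem:equal-index}, identify the latter with the winding number of $\lambda\mapsto\det(S(\lambda)\beta(\lambda))$ (the paper phrases this step as an appeal to Gohberg--Kre\u{\i}n theory rather than to \eqref{eq_topo_form}, but the content is the same), expand the logarithmic derivative, and use Lemma~\ref{lem:integrable} together with $\int_0^\infty\bigl(i\tfrac{\d}{\d\lambda}\Tr(A(\lambda))-c\lambda^{-1/2}\bigr)\d\lambda=0$ to obtain the claimed formula.
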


\begin{proof}
By Theorem \ref{thm_formula} and Lemma \ref{lem:equal-index} one has 
$$
\index(W_-)=
\index(W_S) = \index(W_{S\beta}).
$$ 
Note that by \cite[Thm.~3.5(a)]{simon} we have $\det\big(S(\lambda)\beta(\lambda)\big) = \det\big(S(\lambda)\big) \det\big(\beta(\lambda)\big)$. By Lemma \ref{lem:det-loop}, the map $\lambda \mapsto \det\big(S(\lambda)\beta(\lambda)\big)$ defines a loop, and using Gohberg-Kre\u{\i}n theory we can compute the index of $W_{S\beta}$ as
\begin{align*}
\index(W_{S\beta}) &= \wind\Big(\lambda\mapsto \det\big(S(\lambda)\beta(\lambda)\big)\Big) \\
&= \frac{1}{2\pi i} \int_0^\infty \d \lambda\;\! \frac{\frac{\d}{\d \lambda}\big[ \det\big(S(\lambda)\big) \det\big(\beta(\lambda)\big)\big]}{\det\big(S(\lambda)\big) \det\big(\beta(\lambda)\big)}  \\
&= \frac{1}{2\pi i} \int_0^\infty \d \lambda \;\!\Big( \Tr\big(S(\lambda)^{-1}S'(\lambda)\big) +i \frac{\d}{\d \lambda}\Tr\big(A(\lambda)\big) \Big).
\end{align*}
By Lemma \ref{lem:integrable}, $\int_0^\Lambda \, \d \lambda \left( \frac{\d}{\d \lambda} \ln\det\big(S(\lambda)\big) + c \lambda^{-\frac12} \right)$ is finite for all $\Lambda > 0$ and converges as $\Lambda \to \infty$. Noting also that 
\begin{align*}
\int_0^\infty\d \lambda\;\! \left( i \frac{\d}{\d \lambda}\Tr\big(A(\lambda)\big) - c \lambda^{-\frac12} \right)  &= 0,
\end{align*}
we have the desired result.
\end{proof}

To strengthen the previous result, we can use the description of $\Ran(W_-)$ given by Lemma \ref{lem:existence_wave}. Thus, coming back to the definition of the projection $P_-$ given in \eqref{eq_P_-} one has
\begin{equation*}
P_-\H = \H_\ads(H^*).
\end{equation*}
In addition, it is easily observed from the definition of these spaces in \eqref{eq:def_Hads}, by complex conjugation,
that 
$$
\dim\big(\H_\ads(H^*)\big) = \dim\big(\H_\ads(H)\big).
$$
Thus, collecting the content of Proposition \ref{prop:comput} together with \eqref{eq_P_-}, we finally deduce
our main result:

\begin{Theorem}\label{thm:analytic-formula}
Suppose that $V:\R^3\to \C_-$ is a bounded, measurable function satisfying \eqref{eq_decay_V} with $\alpha>7$ and $\Im V<0$ on a non-trivial open set.
Assume also that the operator $\overline{u} + vR_0(\lambda-i0)v$ is invertible in $\B(\H)$ for any $\lambda > 0$. Then one has
\begin{equation*}
\frac{1}{2\pi i} \int_0^\infty \d\lambda \;\! \left( \Tr\big(S(\lambda)^{-1}S'(\lambda)\big)+ c \lambda^{-\frac12} \right)   =  - \#\sigma_\ads(H) ,
\end{equation*}
where $c$ is defined in \eqref{eq_constant} and  $\#\sigma_\ads(H) = \dim\big(\H_\ads(H)\big)$.
\end{Theorem}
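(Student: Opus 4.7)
The strategy is to chain Proposition \ref{prop:comput}, which already rewrites the left-hand side as $\index(W_-)$, with the description of $\Ran(W_-)$ from Lemma \ref{lem:existence_wave}, and finally to identify $\dim(\H_\ads(H^*))$ with $\dim(\H_\ads(H))$ via a complex-conjugation symmetry. No genuinely new analytic input is needed: the heavy lifting (stationary formula for $W_-$, $C^*$-algebraic quotient, Gohberg--Kre\u{\i}n winding number computation, high-energy regularisation) has already been performed.

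First, Proposition \ref{prop:comput} applies verbatim under the standing hypotheses, so the left-hand side equals $\index(W_-)$. Under the assumption that $\overline{u}+vR_0(\lambda-i0)v$ is invertible for every $\lambda>0$, Lemma \ref{lem_S_invert} ensures that $W_-$ has closed range, and combined with the $C^*$-algebraic setup and Atkinson's theorem ($q(W_-)$ is invertible in the quotient by \eqref{eq_quot}), $W_-$ is Fredholm. Since $W_-$ is injective by Lemma \ref{lem:existence_wave}, we have
$$
\index(W_-) = -\dim(\Ker W_-^*) = -\dim(\Ran(W_-)^\perp) = -\dim(P_-\H),
$$
where $P_-$ is the projection of \eqref{eq_P_-}. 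The closed-range property together with Lemma \ref{lem:existence_wave} yields $\Ran(W_-) = \H_\ads(H^*)^\perp$, so $P_-\H = \H_\ads(H^*)$, and hence $\index(W_-) = -\dim(\H_\ads(H^*))$.

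It remains to pass from $H^*$ to $H$. I would introduce the anti-unitary involution $C\colon\H\to\H$, $f\mapsto\overline{f}$. Since $H_0=-\Delta$ is real and $CVC=\overline{V}$, one has $CHC = H^*$; anti-linearity of $C$ then gives $Ce^{-itH}C = e^{itH^*}$, so $C$ maps $\H_\ads(H)$ bijectively and isometrically (as a real-linear map) onto $\H_\ads(H^*)$, whence $\dim(\H_\ads(H^*)) = \dim(\H_\ads(H))$. Combining the three steps gives the claimed identity. The only step that is really new here is this last symmetry argument; there is no substantive obstacle, as the potentially delicate points (existence, injectivity, closed range of $W_-$, the Fredholm structure, and the reduction of the index to the regularised winding number) have all been dispatched upstream.
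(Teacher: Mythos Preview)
Your proposal is correct and follows essentially the same route as the paper: invoke Proposition \ref{prop:comput} for the integral formula, use injectivity and the description of $\Ran(W_-)$ from Lemma \ref{lem:existence_wave} to identify $\index(W_-)=-\dim\H_\ads(H^*)$, and then pass from $H^*$ to $H$ via complex conjugation. The paper compresses the last step to a single phrase, whereas you spell out the anti-unitary involution $C$ explicitly; otherwise the arguments coincide.
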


Recalling from Lemma \ref{lem:finite-dim} that $\H_\ads(H)=\H_\disc(H)$ if $\H_\ads(H)$ is finite-dimensional, we directly deduce from Theorem \ref{thm:analytic-formula} the following statement:

\begin{Corollary}
Under the conditions of Theorem \ref{thm:analytic-formula}, we have
\begin{equation*}
\H_\ads(H)=\H_\disc(H).
\end{equation*}
\end{Corollary}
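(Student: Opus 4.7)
The proof plan is essentially to observe that the Corollary is an immediate consequence of Theorem \ref{thm:analytic-formula} combined with Lemma \ref{lem:finite-dim}. Concretely, Theorem \ref{thm:analytic-formula} gives the equality
\begin{equation*}
\dim\bigl(\H_\ads(H)\bigr) \;=\; -\frac{1}{2\pi i} \int_0^\infty \d\lambda \; \left( \Tr\bigl(S(\lambda)^{-1}S'(\lambda)\bigr) + c\lambda^{-1/2} \right).
\end{equation*}
The right-hand side is a well-defined real (in fact, integer) number. Indeed, the improper Riemann integral converges by Lemma \ref{lem:integrable}, and its value was identified in Proposition \ref{prop:comput} with $\index(W_-)$, hence it is an integer. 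Consequently $\dim(\H_\ads(H)) < \infty$.

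With finite dimensionality in hand, I would then directly invoke Lemma \ref{lem:finite-dim}, whose hypotheses are now satisfied (the conditions of Lemma \ref{lm:Hdisc-sub-Hads} are subsumed by the assumptions of Theorem \ref{thm:analytic-formula}, since $\alpha>7$ in \eqref{eq_decay_V} implies $V(x)\to 0$ as $|x|\to\infty$). This yields $\H_\ads(H) = \H_\disc(H)$, which is exactly the claim.

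There is no real obstacle: the entire analytical content—existence and finiteness of the index, the integral representation, the a priori inclusion $\H_\disc(H) \subset \H_\ads(H)$, and the Liapunov-type argument promoting finite-dimensional asymptotic vanishing to membership in the generalized eigenspace—has already been established earlier in the paper. The Corollary is simply the synthesis of these facts: the Levinson-type identity upgrades the dimension of $\H_\ads(H)$ from an \emph{a priori} possibly infinite quantity to a concrete finite integer, at which point Lemma \ref{lem:finite-dim} closes the gap between $\H_\ads(H)$ and $\H_\disc(H)$. The only verification worth making explicit in the write-up is that all hypotheses of Theorem \ref{thm:analytic-formula} imply those of Lemma \ref{lem:finite-dim}, which is immediate.
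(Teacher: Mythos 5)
Your proposal is correct and follows exactly the paper's own route: Theorem \ref{thm:analytic-formula} shows that $\dim\big(\H_\ads(H)\big)$ equals a convergent integral (indeed an integer, being $-\index(W_-)$), hence is finite, and Lemma \ref{lem:finite-dim} then yields $\H_\ads(H)=\H_\disc(H)$. Your extra check that the hypotheses of Theorem \ref{thm:analytic-formula} subsume those of Lemma \ref{lem:finite-dim} is a sensible, if minor, addition.
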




\end{document}